\newtheorem{theorem}{Theorem}[section]
\newtheorem{corollary}[theorem]{Corollary}
\newtheorem{lemma}[theorem]{Lemma}
\newtheorem{proposition}[theorem]{Proposition}
\newtheorem{definition}[theorem]{Definition}
\theoremstyle{remark}
\definecolor{newblue}{rgb}{0.19, 0.55, 0.91}
\newcommand{\eps}{\varepsilon}
\def\1{\bm{1}}
\DeclareMathAlphabet{\mathsfit}{\encodingdefault}{\sfdefault}{m}{sl}
\SetMathAlphabet{\mathsfit}{bold}{\encodingdefault}{\sfdefault}{bx}{n}
\DeclareMathOperator*{\E}{\mathbb{E}}
\newcommand{\R}{\mathbb{R}}
\DeclareMathOperator*{\argmin}{arg\,min}
\renewcommand{\tilde}{\widetilde}
\renewcommand{\bar}{\overline}
\newcommand{\Norm}[1]{\left\|#1\right\|}
\newcommand{\norm}[1]{\|#1\|}
\providecommand{\expect}[2]{\ensuremath{\ifthenelse{\equal{#1}{}}{\mathbb{E}}{\mathbb{E}_{#1}}\!\left[#2\right]}\xspace}
\providecommand{\prob}[2]{\ensuremath{\ifthenelse{\equal{#1}{}}{\Pr}{\Pr_{#1}}\!\left[#2\right]}\xspace}
\newcommand{\nnz}{\mathrm{nnz}\xspace}
\newcommand{\inner}[1]{\langle #1\rangle}
\newcommand{\abs}[1]{\left|{#1}\right|}
\DeclareMathOperator{\diag}{diag}
\DeclareMathOperator{\poly}{poly}
\newcommand{\bado}{\mathsf{Bad}}
\newcommand{\rank}{\mathrm{rank}}
\newcommand{\spa}{\mathrm{span}}
\newcolumntype{L}[1]{>{\raggedright\let\newline\\\arraybackslash\hspace{0pt}}m{#1}}
\newcolumntype{C}[1]{>{\centering\let\newline\\\arraybackslash\hspace{0pt}}m{#1}}
\newcolumntype{R}[1]{>{\raggedleft\let\newline\\\arraybackslash\hspace{0pt}}m{#1}}
\title{}
\begin{document}

\author{Yi Li\\ \small{Division of Mathematical Sciences}\\ \small{Nanyang Technological University}\\  \small{\texttt{yili@ntu.edu.sg}}\\
	   \and
	   Honghao Lin$\qquad\qquad$ David P. Woodruff \\ \small{Computer Science Department}\\ \small{Carnegie Mellon University} \\  \small{\texttt{\{honghaol,dwoodruf\}@andrew.cmu.edu}}
	   %
	   }
    

\date{\vspace{-5ex}}	   

\title{$\ell_p$-Regression in the Arbitrary Partition Model of Communication}

\maketitle

\newcolumntype{L}[1]{>{\raggedright\let\newline\\\arraybackslash\hspace{0pt}}m{#1}}
\newcolumntype{C}[1]{>{\centering\let\newline\\\arraybackslash\hspace{0pt}}m{#1}}
\newcolumntype{R}[1]{>{\raggedleft\let\newline\\\arraybackslash\hspace{0pt}}m{#1}}



\maketitle

\begin{abstract}
   We consider the randomized communication complexity of the distributed $\ell_p$-regression problem in the coordinator model, for $p\in (0,2]$. In this problem, there is a coordinator and $s$ servers. The $i$-th server receives $A^i\in\{-M, -M+1, \ldots, M\}^{n\times d}$ and $b^i\in\{-M, -M+1, \ldots, M\}^n$ and the coordinator would like to find a $(1+\eps)$-approximate solution to $\min_{x\in\R^n} \norm{(\sum_i A^i)x - (\sum_i b^i)}_p$. Here $M \leq \poly(nd)$ for convenience. This model, where the data is additively shared across servers, is commonly referred to as the arbitrary partition model. 
   
   We obtain significantly improved bounds for this problem. For $p = 2$, i.e., least squares regression, we give the first optimal bound of $\tilde{\Theta}(sd^2 + sd/\epsilon)$ bits. 
   
   For $p \in (1,2)$,
   we obtain an 
   $\tilde{O}(sd^2/\eps + sd/\poly(\eps))$ upper bound. Notably, for $d$ sufficiently large, our leading order term only depends linearly on $1/\epsilon$ rather than quadratically. 
   We also show communication lower bounds of $\Omega(sd^2 + sd/\eps^2)$ for $p\in (0,1]$ and $\Omega(sd^2 + sd/\eps)$ for $p\in (1,2]$. Our bounds considerably improve previous bounds due to (Woodruff et al. COLT, 2013) and (Vempala et al., SODA, 2020). 
\end{abstract}

\section{Introduction}
Regression is a lightweight machine learning model used to capture linear dependencies between variables in the presence of noise. In this problem there is a (sometimes implicit) matrix $A \in \mathbb{R}^{n \times d}$ and a vector $b \in \mathbb{R}^n$ and the goal is to find a hyperplane $x \in \mathbb{R}^d$ for which $\|Ax-b\|$ is small for some loss function $\|\cdot\|$, which throughout this paper will be a norm. Here $A$ is known as the design matrix, $b$ the response vector, and $x$ the model parameters. We focus on the over-constrained case, when $n \gg d$, which corresponds to having many more examples than features. Although more sophisticated models can often achieve lower error, regression is often the most computationally efficient and the first model of choice.   

One of the most popular loss functions is the $\ell_p$-norm, or equivalently its $p$-th power $\|y\|_p^p = \sum_{i=1}^n |y_i|^p$. When $p = 2$ this is least squares regression, which corresponds to the maximum likelihood estimator (MLE) in the presence of Gaussian noise. When the noise is more heavy-tailed, often $p < 2$ is chosen as the loss function since it is more robust to outliers. Indeed, since one is not squaring the differences, the optimal solution pays less attention to large errors. For example, $p = 1$ gives the MLE for Laplacian noise. While $p < 1$ results in non-convex loss functions, heuristics are still used given its robustness properties. When $p > 2$, the loss function is even more sensitive to outliers; it turns out that such $p$ cannot be solved without incurring a polynomial dependence on $n$ in the communication model we study, see below, and so our focus will be on $p \leq 2$. 

It is often the case that data is either collected or distributed across multiple servers and then a key bottleneck is the {\it communication complexity}, i.e., the number of bits transmitted between the servers for solving a problem. We consider the standard coordinator model of communication, also known as the message-passing model, in which there is a site designated as the coordinator who has no input, together with $s$ additional sites, each receiving an input. There is a communication channel between the coordinator and each other server, and all communication goes through the coordinator. This model is convenient since it captures arbitrary point-to-point communication up to small factors, i.e., if server $i$ wants to send a message to server $j$, server $i$ can first send the message to the coordinator and then have it forwarded to server $j$. We note that in addition to the total communication, it is often desirable to minimize the time complexity on each server, and the protocols in this paper will all be time-efficient. 

A natural question in any communication model is how the input is distributed. We study the {\it arbitrary partition model} of \cite{KVW14,BWZ16}, which was studied for the related task of low rank approximation. In this model, the $i$-th server receives $A^i\in\{-M, -M+1, \ldots, M\}^{n\times d}$ and $b^i\in\{-M, -M+1, \ldots, M\}^n$ and the coordinator would like to find a $(1+\eps)$-approximate solution to $\min_{x\in\R^n} \norm{(\sum_i A^i)x - (\sum_i b^i)}_p$. Here $M \leq \poly(nd)$ for convenience. Note that this model gives more flexibility than the so-called {\it row partition model} in which each example and corresponding response variable is held on exactly one server, and which is a special case of the arbitrary partition model. For example, if each row $i$ of $A$ corresponds to an item and each column $j$ to a user and an entry $A_{i,j}$ corresponds to the number of times user $i$ purchased item $j$, then it might be that each server $t$ is a different shop where the user could purchase the item, giving a value $A^t_{i,j}$, and we are interested in $\sum_{t= 1}^s A^t_{i,j}$, i.e., the matrix which aggregates the purchases across the shops. This communication model is also important for {\it turnstile streaming} where arbitrary additive updates are allowed to an underlying vector \cite{M05}, as low-communication protocols often translate to low memory streaming algorithms, while communication lower bounds often give memory lower bounds in the streaming model. The number of communication rounds often translates to the number of passes in a streaming algorithm. See, e.g., \cite{BWZ16}, as an example of this connection for low rank approximation. We note that for $p > 2$, there is an $\Omega(n^{1-2/p})$ lower bound in the arbitrary partition model even for just estimating the norm of a vector \cite{BJKS04,G09,J09}, and so we focus on the $p < 2$ setting. 

The communication complexity of approximate regression was first studied in the coordinator model in the row partition model in \cite{wz13}, though their protocols for $1 \leq p < 2$  use $\tilde{O}(sd^{2+\gamma} + d^5 + d^{3+p}/\eps^2)$ communication, where $\tilde{O}(f)$ suppresses a $\poly(\log(sdn/\eps))$ factor. These bounds were later improved in the coordinator model and in the row partition model in \cite{VWW20}, though the bounds are still not optimal, i.e., their lower bounds do not depend on $\eps$,  are suboptimal in terms of $s$, or hold only for deterministic algorithms. Their upper bounds also crucially exploit the row partition model, and it is unclear how to extend them to the arbitrary partition model. We will substantially improve upon these bounds. 

Despite the previous work on understanding the communication complexity of a number of machine learning models (see, e.g., \cite{VWW20} and the references therein), perhaps surprisingly for arguably the most basic task of regression, the optimal amount of communication required was previously unknown. 

\paragraph{Our Results} We obtain a lower bound of $\Omega(sd^2+sd/\eps^2)$ for $p\in(0,1]$ and a lower bound of $\Omega(sd^2 + sd/\eps)$ for $p\in (1,2]$, both of which improve the only known lower bound of $\tilde{\Omega}(d^2 + sd)$ by \cite{VWW20}. We strengthen their $d^2$ lower bound by a multiplicative factor of $s$ and incorporate the dependence on $\eps$ into their $sd$ lower bound. 

When $p=2$, we obtain an upper bound of $\tilde{O}(sd^2 + sd/\eps)$ bits, which matches our lower bound up to logarithmic factors. The total runtime of the protocol is $O(\sum_i \nnz(A^i) + s\poly(d/\eps))$, which is optimal in terms of $\nnz(A^i)$. Here for a matrix $A$, $\nnz(A)$ denotes the number of non-zero entries of $A$. Our results thus largely settle the problem in the case of $p = 2$. 

When $p\in (1,2)$, we obtain an upper bound of $\tilde{O}(sd^2/\eps + sd/\poly(\eps))$ bits with a runtime of $O(\sum_i \nnz(A^i) (d/\eps^{O(1)}) + s \poly(d/\eps))$. Note that if the $\tilde{O}(sd^2/\eps)$ term dominates, then our upper bound is optimal up to a $1/\eps$ factor due to our lower bound. Interestingly, this beats a folklore sketching algorithm for which each server sketches their input using a shared matrix of $p$-stable random variables with $\tilde{O}(d/\eps^2)$ rows, sends their sketch to the coordinator with $\tilde{O}(sd^2/\eps^2)$ total communication, and has the coordinator add up the sketches and enumerate over all $x$ to find the best solution (see, e.g., Appendix F.1 of \cite{BIPRW15} for a proof of this for $p = 1$). Moreover, our algorithm is time-efficient, while the sketching algorithm is not. In fact, any sketch that solves the harder problem of computing an $\ell_p$-subspace embedding requires $\poly(d)$ distortion \cite{WW19} or has an exponential dependence on $1/\eps$ \cite{LWY21}. We further show that if the leverage scores of $[A\ b]$ are uniformly small, namely, at most $\poly(\eps)/d^{4/p}$, then our runtime can be improved to $O(\sum_i \nnz(A^i) + s \poly(d/\eps))$, which is now optimal in terms of $\nnz(A)$, with the same amount of communication. Along the way we prove a result on embedding  $d$-dimensional subspaces in $\ell_p^n$ to $\ell_r$ for $1 < r < p$, which may be of independent interest.

\begin{table}[t]
\centering
\begin{tabular}{ c c c c }
&  & Communication \\
\hline
$0 < p < 2$ & Upper Bound & $\tilde{O}(sd^2/\eps^2)$ & Folklore \\
$p = 2$ & Upper Bound & $\tilde{O}(sd^2/\eps)$ & \cite{CW09} \\
$0 < p \leq 2$ & Lower Bound  & $\Omega(d^2 + sd)$ & \cite{VWW20} \\
$p = 1$ & Upper Bound$^*$  & $\tilde{O}(\min(sd^{2} + \frac{d^2}{\eps^2}, \frac{sd^3}{\eps}))$ &  \cite{VWW20}\\
$p = 2$ & Upper Bound$^*$  & $\tilde{O}(sd^{2})$ &  \cite{VWW20}\\
$1 \le p < 2$ & Upper Bound$^*$  & $\tilde{O}(sd^{2 + \gamma} + d^5 + d^{3 + p}/\eps^2)$ &  \cite{wz13}\\
\rowcolor{blue!15} $0 < p \le 1$ & Lower Bound & $\Omega(sd^2 + sd/\eps^2)$& Theorem~\ref{thm:lb_d/eps}, \ref{thm:lb_sd^2}\\
\rowcolor{blue!15}$1 < p \le 2$ & Lower Bound & $\Omega(sd^2 + sd/\eps)$& Theorem~\ref{thm:lb_d/eps}, \ref{thm:lb_sd^2}\\
\rowcolor{blue!15}$1 < p < 2$ & Upper Bound & $\tilde{O}(sd^2/\eps + sd/\poly(\eps))$& Theorem~\ref{thm:ell_p_regression}\\
\rowcolor{blue!15}$p = 2$ & Upper Bound & $\tilde{O}(sd^2 + sd/\eps)$& Theorem~\ref{thm:ell_2_regression}\\
\end{tabular}
\caption{Summary of the results for the distributed $\ell_p$ regression problem. $^*$ denotes row partition model. The upper bound in the first row uses a median sketch of the $p$-stable distribution, which is time-inefficient, see, e.g., Section F.1 of~\cite{BIPRW15}. 
}
\label{tab:our_result}
\end{table}

\paragraph{Open Problems} 
We leave several intriguing questions for future work. 

First, it would be good to close the gap in our upper and lower bounds as a function of $\eps$ for $p < 2$. For $1 < p < 2$, if $\poly(1/\eps) < d$ then our bounds are off by a $1/\eps$ factor, namely, our upper bound is $\tilde{O}(sd^2/\eps)$, but our lower bound is $\Omega(sd^2)$. 

Second, the $\nnz$ term in our runtime in general has a multiplicative factor of $d/\poly(\eps)$. This is mainly due to the use of a dense matrix for the lopsided subspace embedding of $\ell_p^n$ into $\ell_r$, and it is interesting to see whether there are sparse lopsided subspace embeddings of $\ell_p^n$ into $\ell_r$. 

\subsection{Our Techniques}

\paragraph{Lower Bounds}

We first demonstrate how to show an $\Omega(sd/\eps^2)$ lower bound for $p\in (0,1]$ and an $\Omega(sd/\eps)$ lower bound for $p\in (1,2]$. 

Let us first consider the special case of $d = 1$.
Consider the $\ell_p$ regression problem $\min_{x \in \R} \norm{a \cdot x - b}_p$, where $a$ and $b$ are uniformly drawn from $\{-1, 1\}^n$. The crucial observation is that the solution $x$ reveals the Hamming distance $\Delta(a,b)$. Specifically, when $n = \Theta(1/\eps^2)$, a $(1 \pm \eps)$-solution when $0 < p \le 1$ and $(1 \pm \eps^2)$-solution when $1 < p \le 2$ suffice for us to solve the Gap-Hamming communication problem (\textsf{GHD}) of $a$ and $b$ (determining $\Delta(a, b) \ge c \sqrt{n}$ or $\Delta(a, b) \le -c \sqrt{n}$). The \textsf{GHD} problem has an $\Omega(n)$ information cost lower bound~\cite{BGPW16}, which implies, by our choice of $n$, an $\Omega(1/\eps^2)$ lower bound for $p\in (0,1]$ and an $\Omega(1/\eps)$ lower bound for $p\in (1,2]$.

To gain the factor of $s$, we design a distributed version of \textsf{GHD}, the $s$-GAP problem, as follows. There are $2s$ players. Each of the first $s$ players holds a vector $a^i \in \{-1, 1\}^n$ and each of the remaining players holds a $b^i \in \{-1, 1\}^n$, with the guarantee that $\sum_i a^i = a$ and $\sum_i b^i = b$. The $2s$ players and the coordinator will collectively determine the two cases of $\Delta(a, b)$. Our goal is to show an $\Omega(sn)$ lower bound for this communication problem. To this end, we employ the symmetrization technique that was used in~\cite{PVZ16}. Specifically, Alice simulates a random player and Bob the remaining $s - 1$ players. As such, Bob will immediately know the whole vector $b$ and part of the vector $a$ (denote the set of these indices by $I$). As we will show in the proof, to determine the distance $\Delta(a, b)$, Alice and Bob still need to approximately determine $\Delta(a_{I^c}, b_{I^c})$, which requires $\Omega(|I^c|) = \Omega(n)$ communication. Note that the input distribution of each player is the same and Alice is choosing a random player. Hence, Alice's expected communication to Bob is at most $O(\chi/s)$ bits if $s$-GAP can be solved using $\chi$ bits of communication, which yields a lower bound of $\Omega(sn)$ bits for the $s$-GAP problem. 

So far we have finished the proof for $d=1$. To obtain a lower bound for general $d$, we use a padding trick. Consider $A = \diag(a_1,\dots,a_d)$ and let $b$ be the vertical concatenation of $b_1,\dots,b_d$, where each pair $(a_i,b_i)$ is drawn independently from the hard distribution for $d=1$. One can immediately observe that $\min_x \norm{Ax-b}_p^p = \sum_i \min_{x_i}\norm{a_ix_i-b}_p^p$ and show that approximately solving $\min_x \norm{Ax-b}_p^p$ can approximately solve a constant fraction of the $d$ subproblems $\min_{x_i}\norm{a_ix_i-b}_p^p$. This further adds an $O(d)$ factor to the lower bound.

Next we discuss the $\Omega(sd^2)$ lower bound. We shall follow the idea of~\cite{VWW20} and construct a set of matrices $\mathcal{H} \subseteq \{-1, 1\}^{d \times d}$ with a vector $b \in \mathbb{R}^d$ such that (i) $A$ is non-singular for all $A \in \mathcal{H}$, (ii) $A^{-1} b \ne B^{-1}b$ for all $A, B \in \mathcal{H}$ and $A\neq B$ and (iii) $\abs{\mathcal{H}} = 2^{\Omega(d^2)}$. The conditions (i) and (ii) mean that a constant-factor approximation to $\min_x \norm{Ax - b}_p^p$ is exact, from which the index of $A$ in the set $\mathcal{H}$ can be inferred. Condition (iii) then implies an $\Omega(d^2)$ lower bound for solving the regression problem up to a constant factor. To gain a factor of $s$, we consider the communication game where the $i$-th player receives a matrix $A^i\subseteq \{-1,1\}^{d\times d}$ with the guarantee that $A = \sum_{i} A^i$ is distributed in $\mathcal{H}$ uniformly. Then the $s$ players with the coordinator want to recover the index of $A$ in $\mathcal{H}$. We consider a similar symmetrization technique. However, the issue here is if Bob simulates $s - 1$ players, he will immediately know roughly a $\frac{1}{2}$ fraction of coordinates of $A$, which can help him to get the index of $A$ in $\mathcal{H}$. To overcome this, we choose a different strategy where Alice simulates two (randomly chosen) players and Bob simulates the remaining $s - 2$ players. In this case Bob can only know a $\frac{1}{4}$-fraction of the coordinates without communication. However, one new issue here is Bob will know partial information about the remaining coordinates. But, as we shall show in the proof, even when conditioned on Bob's input on $s - 2$ players, with high probability the entropy of the remaining coordinates is still $\Omega(d^2)$. This implies that Alice still needs to send $\Omega(d^2)$ bits to Bob, which yields an $\Omega(sd^2)$ lower bound for the original problem.

\paragraph{Upper Bounds}  

For the $\ell_p$-regression $\min_x\norm{Ax-b}_p$, a classical ``sketch-and-solve'' approach is to use a $(1+\eps)$-subspace embedding $S$ for $B = [A\ b]\in \R^{n\times(d+1)}$ and reduce the problem to solving $\min_x \norm{SAx-Sb}_p$, which is of much smaller size. The subspace embedding is non-oblivious and obtained by subsampling $\tilde{O}(d/\eps^2)$ rows of $B$ with respect to the Lewis weights of $B$~\cite{CP15}. More recently, it was shown that sampling $\tilde{O}(d/\eps)$ rows according to the Lewis weights is sufficient for solving  $\ell_p$-regression~\cite{MMW+22,CLS22}, instead of $\tilde{O}(d/\eps^2)$ rows needed for an  $\ell_p$-subspace embedding. However, computing the Lewis weights is expensive and would incur a communication cost as well as a runtime at least linear in $n$, which is prohibitive in our setting. 

Instead of embedding an $\ell_p$-subspace into $\ell_p$, we $(1+\eps)$-embed an $\ell_p$-subspace into $\ell_r$ for some $1 < r < p$. Furthermore, since we are solving a regression problem, we do not need a conventional subspace embedding but only a \emph{lopsided} one; that is, the map $S$ must not contract $\norm{Ax-b}_p$ for all $x$ simultaneously but it is required not to dilate $\norm{Ax^\ast-b}_p$ for only the optimal solution $x^\ast$. We show that an $S$ of i.i.d.\ $p$-stable variables and $O(d\log d/\poly(\eps))$ rows suffices (see Lemma~\ref{lem:p_stable_embedding} for the formal statement). Such a lopsided subspace embedding for embedding a subspace of $\ell_p^n$ into $\ell_r$, to the best of our knowledge, has not appeared in the  literature\footnote{We note that the works of~\cite{Pisier83,FG11} consider embedding the entire space $\ell_p^n$ into $\ell_r$ instead of embedding a low-dimensional subspace of $\ell_p^n$ into $\ell_r$.} and may be of independent interest. This lopsided subspace embedding reduces the $\ell_p$ regression problem to an $\ell_r$-regression problem of $\tilde{O}(d/\poly(\eps))$ rows. Importantly though, we do not need to ever explicitly communicate these rows in their entirety. Namely, we can leave the regression problem in an implicit form and now run a Lewis weight approximation algorithm, and since our effective $n$ has been replaced with $d/\poly(\eps)$, we just need $d/\poly(\eps)$ communication to iteratively update each of the weights in the Lewis weight algorithm, rather than $n$ communication. 

For the $\ell_2$-regression problem, it is known that a $(1+\sqrt{\eps})$-subspace embedding can yield a $(1+\eps)$-approximate solution (see, \cite{BN13}, also the [Woo14] reference therein) and so the subspace embedding $S$ needs only to have $O(d (\log d) /\eps)$ rows. The servers then run gradient descent on the sketched version $\min_x\norm{SAx-Sb}_2$. To ensure fast convergence in $O(\log(1/\eps))$ iterations, the servers will instead solve $\min_x\norm{SARx-Sb}_2$, where $R$ is a pre-conditioner to make $SAR$ have a constant condition number. Putting these pieces together leads to our near-optimal communication and runtime.
\section{Preliminaries}
\label{sec:prelim}

\paragraph{$\ell_2$ Subspace Embeddings.} 
For a matrix $A\in \R^{n\times d}$, we say a matrix $S\in \R^{m\times n}$ is a $(1\pm\eps)$-$\ell_2$ subspace embedding for the column span of $A$ if $(1-\eps)\norm{Ax}_2\leq \norm{SAx}_2 \leq (1+\eps)\norm{Ax}_2$ for all $x\in \R^d$ with probability at least $1 - \delta$. 
We summarize the subspace embeddings we use in this paper below:

\begin{itemize}[nosep]
	\item \textsf{Count-Sketch}: $m = O(d^2/(\delta \eps^2))$ with $s = 1$ non-zero entry per column, with each non-zero entry in $\{-1, 1\}$~\cite{CW17}. Computing $SA$ takes only $O(\nnz(A))$ time.
	\item \textsf{OSNAP}: $m=O((d \log (d / \delta)) /\eps^2)$ and has $s=O((\log (d / \delta))/\eps)$ non-zeros per column, with each non-zero entry in $\{-1, 1\}$ ~\cite{NN13,Cohen16}. Computing $SA$ takes $O(s \cdot  \nnz(A)) = O(\nnz(A)(\log (d/\delta)/\eps))$ time.
\end{itemize}

\paragraph{$p$-stable Distributions.}

Our protocol for distributed $\ell_p$ regression will use $p$-stable distributions, which are defined below.

\begin{definition}[\cite{zolotarev1986one}]
For $0 < p < 2$, there exists a probability distribution $\mathcal{D}_p$ called the $p$-stable distribution, which satisfies the following property. For any positive integer $n$ and vector $x \in \R_n$, if $Z_1, \ldots, ,Z_n \sim \mathcal{D}_p$ are independent, then $\sum_{j=1}^n Z_jx_j \sim  \|x\|_pZ$ for $Z \sim \mathcal{D}_p$.
\end{definition}

\paragraph{Lewis Weights.} Below we recall some facts about Lewis weights. For more details, we refer the readers to, e.g., \cite[Section 3.3]{CWW19}.

\begin{definition}\label{def:leverage_score}
Given a matrix $A \in \mathbb{R}^{n \times d}$. The {\em leverage score} of a row $A_{i, *}$ is defined to be
$
\tau_i(A) =  A_{i, *} (A^TA)^{\dagger}  (A_{i, *})^T
$.
\end{definition}

\begin{definition}[\cite{CP15}]\label{def:lewis}
For a matrix $A \in \mathbb{R}^{n \times d}$, its  $\ell_p$-Lewis weights
$\{w_i\}_{i=1}^n$ are the unique weights such that 
$
w_i = \tau_i(W^{1/2 - 1/p} A)
$ for each $i \in [n]$.
Here $\tau_i$ is the leverage score of the $i$-th row of a matrix
and
$W$ is the diagonal matrix whose diagonal entries are $w_1,\dots,w_n$.
\end{definition}

The Lewis weights are used in the construction of randomized $\ell_p$-subspace embeddings. In particular, the rescaled sampling matrix w.r.t.\ Lewis weights gives an $\ell_p$-subspace embedding.
\begin{definition}
Given $p_1,\dots,p_n\in [0,1]$ and $p\geq 1$, the \emph{rescaled sampling matrix} $S$ with respect to $p_1,\dots,p_n$ is a random matrix formed by deleting all zero rows from a random $n\times n$ diagonal matrix $D$ in which $D_{i,i} = p_i^{-1/p}$ with probability $p_i$ and $D_{i,i} = 0$ with probability $1-p_i$.
\end{definition}

\begin{lemma}[Lewis weight sampling, \cite{CP15}]\label{lem:lewis_weights_sampling}
Let $A\in \R^{n\times d}$ and $p\geq 1$. Choose an oversampling parameter $\beta = \Theta(\log(d/\delta)/\eps^2)$ and sampling probabilities $p_1,\dots, p_n$ such that $\min\{\beta w_i(A),1\}\leq p_i\leq 1$ and let $S$ be the rescaled sampling matrix with respect to $p_1,\dots,p_n$. Then it holds with probability at least $1-\delta$ that $(1-\eps)\norm{Ax}_p\leq \norm{SAx}_p \leq (1+\eps)\norm{Ax}_p$ (i.e., $S$ is an  $\eps$-subspace embedding for $A$ in the $\ell_p$-norm) and $S$ has $O(\beta \sum_i w_i(A)) = O(\beta d)$ rows.
\end{lemma}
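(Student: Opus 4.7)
The plan is to follow the classical argument of Cohen and Peng. The starting point is the fixed-point definition of Lewis weights: with $W=\diag(w_1,\dots,w_n)$ and $\tilde A = W^{1/2-1/p}A$, the entries $w_i$ are exactly the $\ell_2$-leverage scores $\tau_i(\tilde A)$. Two consequences are used throughout. First, $\sum_i w_i = \rank(\tilde A)\le d$, which combined with a Chernoff bound yields the stated row count $\abs{\supp(S)} = O(\beta d)$ with high probability. Second, combining the leverage-score inequality for $\tilde A$ with the $W^{1/2-1/p}$ rescaling yields an \emph{entrywise} control of the form $\abs{A_ix}^p \le C\cdot w_i\cdot \norm{Ax}_p^p$ (for $p\le 2$) uniformly in $x\in\colspan(A)$. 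This is the only place the defining property of Lewis weights (as opposed to arbitrary weights) is used.

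For the approximation guarantee, fix $x\in\colspan(A)$ with $\norm{Ax}_p = 1$ and expand
\[
\norm{SAx}_p^p - \norm{Ax}_p^p \;=\; \sum_{i=1}^n \left(\frac{\1[i\in \mathrm{sample}]}{p_i} - 1\right)\abs{A_i x}^p,
\]
a sum of independent mean-zero random variables. For indices with $p_i<1$ one has $p_i \ge \beta w_i$, so the entrywise bound caps each summand's magnitude and per-sample variance by $O(1/\beta)$; indices with $p_i=1$ contribute nothing. Bernstein's inequality then gives pointwise $(1\pm\eps)$-concentration with failure probability $\exp(-\Omega(\eps^2\beta))$. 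To make this uniform over the $d$-dimensional subspace, take a $\poly(\eps)$-net of the unit ball of $(\colspan(A),\norm{A\cdot}_p)$ of cardinality $\exp(O(d\log(1/\eps)))$, apply the pointwise bound on the net via a union bound, and extend to the full ball by the standard rounding argument for norms. The $O(d)$ entropy is precisely what is absorbed into the $\log d$ factor of $\beta = \Theta(\log(d/\delta)/\eps^2)$.

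The main obstacle, and the whole reason Lewis weights enter, is establishing the uniform entrywise bound on $\abs{A_i x}^p/w_i$. With any other choice of sampling probabilities, the Bernstein variance would be governed by the largest $\ell_p$ row norm, which can be polynomial in $n$; the fixed-point structure of Lewis weights is exactly what replaces this by a dimension-free constant and drives the sample complexity down to the optimal $O(d\log(d/\delta)/\eps^2)$. Everything else in the proof---the Chernoff bound on the row count, the Bernstein step for a fixed $x$, and the net/rounding step---is standard and proceeds in exactly the same way as the proof of an $\ell_2$ leverage-score subspace embedding.
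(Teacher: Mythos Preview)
The paper does not prove this lemma; it is quoted from \cite{CP15} without proof, so there is nothing in the paper to compare your argument against. Evaluating your sketch on its own merits, there is a genuine gap in the union-bound step. Your pointwise Bernstein calculation is fine: with $\norm{Ax}_p=1$ and the sensitivity bound $\abs{A_ix}^p\le w_i$, each summand has magnitude and total variance $O(1/\beta)$, so the failure probability at a single $x$ is $\exp(-\Omega(\eps^2\beta))$. But with $\beta=\Theta(\log(d/\delta)/\eps^2)$ this is only $(d/\delta)^{-\Omega(1)}$, whereas your net has $\exp(\Theta(d\log(1/\eps)))$ points. A union bound over that net would force $\beta=\Omega(d/\eps^2)$, giving $O(d^2/\eps^2)$ sampled rows rather than $O(d\log(d/\delta)/\eps^2)$. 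Your sentence ``the $O(d)$ entropy is precisely what is absorbed into the $\log d$ factor of $\beta$'' is therefore wrong by a factor of $d/\log d$; $\log d$ in the exponent cannot pay for $d$ in the exponent.

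The way \cite{CP15} avoids this loss for $1\le p<2$ is \emph{not} a scalar Bernstein-plus-net argument at all. They observe that the Lewis weights of $A$ are exactly the $\ell_2$ leverage scores of $\tilde A=W^{1/2-1/p}A$, so sampling by $p_i\ge\beta w_i$ is leverage-score sampling for $\tilde A$; matrix Chernoff then gives an $\ell_2$ subspace embedding for $\tilde A$ with the claimed $\beta$. The remaining work is a deterministic change-of-density comparison showing that an $\ell_2$ spectral approximation of $W^{1/2-1/p}A$ implies an $\ell_p$ subspace embedding of $A$; this step uses the fixed-point identity $w_i=\tau_i(W^{1/2-1/p}A)$ in an essential way and is where the $d$-sized entropy is circumvented. (For $p>2$ they instead use a Dudley-type chaining argument, and the row bound picks up an extra $d^{p/2-1}$ factor; the lemma as stated with $\beta=\Theta(\log(d/\delta)/\eps^2)$ is really only sharp for $p\le 2$, which is the range the paper uses.) If you want to salvage your outline, you would have to replace the crude net by either the matrix-concentration reduction or a chaining bound; the direct union bound cannot reach the stated sample complexity.
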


\cite{CP15} give an iterative algorithm (Algorithm~\ref{alg:lewis}) which computes the Lewis weights time-efficiently for $p < 4$. 
\begin{lemma} [\cite{CP15}]
\label{lem:lewis_algortihm}
    Suppose that $p<4$ and $\beta = \Theta(1)$. After $T = \log \log (n)$ iterations in Algorithm~\ref{alg:lewis}, $w$ is a constant approximation to the $\ell_p$ Lewis weights.  
\end{lemma}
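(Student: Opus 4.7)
The plan is to show that each iteration of Algorithm~\ref{alg:lewis} is a contraction in the logarithmic infinity metric $\Phi(w,w^\ast):=\max_i|\log(w_i/w^\ast_i)|$ with contraction factor $c:=|1-p/2|$, which is strictly less than $1$ precisely when $p<4$. Starting from the uniform initialization $w^{(0)}=\mathbf{1}$, the initial error $\Phi(w^{(0)},w^\ast)$ is $O(\log n)$, and after $T=\Theta(\log\log n)$ iterations it is driven down to $O(1)$, i.e., a constant-factor approximation to the true Lewis weights.

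First I would rewrite one step of the iteration in the canonical form $w^{(t+1)}_i=\phi_i(w^{(t)})$ with $\phi_i(w):=[a_i^\top M(w)^{-1}a_i]^{p/2}$, where $a_i$ denotes the $i$-th row of $A$ and $M(w):=A^\top\diag(w)^{1-2/p}A$; Definition~\ref{def:lewis} gives the fixed-point identity $\phi_i(w^\ast)=w^\ast_i$. Suppose $\Phi(w,w^\ast)\le\epsilon$, so that $e^{-\epsilon}w^\ast_i\le w_i\le e^{\epsilon}w^\ast_i$ for every $i$. Since $t\mapsto t^{1-2/p}$ is a monotone power (decreasing when $p<2$) of absolute exponent $|1-2/p|$, each diagonal entry of $\diag(w)^{1-2/p}$ lies within a factor $e^{\pm|1-2/p|\epsilon}$ of the corresponding entry of $\diag(w^\ast)^{1-2/p}$; conjugating by $A^\top$ lifts this to the Loewner sandwich
\[
e^{-|1-2/p|\epsilon}M(w^\ast)\ \preceq\ M(w)\ \preceq\ e^{|1-2/p|\epsilon}M(w^\ast).
\]
Inverting reverses the order and preserves the factor, giving $|\log(a_i^\top M(w)^{-1}a_i)-\log(a_i^\top M(w^\ast)^{-1}a_i)|\le|1-2/p|\,\epsilon$. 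Raising to the $p/2$ power multiplies the log-perturbation by $p/2$, yielding $|\log(\phi_i(w)/w^\ast_i)|\le (p/2)\cdot|1-2/p|\,\epsilon=|1-p/2|\,\epsilon$, and taking the maximum over $i$ produces $\Phi(\phi(w),w^\ast)\le c\,\Phi(w,w^\ast)$.

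Next I would bound the initial distance. Since $\sum_i w^\ast_i=d$, each $w^\ast_i\le 1$; a routine preprocessing (e.g., an infinitesimal rank-one regularization of $A$, or discarding rows whose true Lewis weight is so small that they contribute negligibly to any $\ell_p$ norm on the column span of $A$) ensures $w^\ast_i\ge 1/\poly(n)$, so that $|\log w^\ast_i|=O(\log n)$ and hence $\Phi(w^{(0)},w^\ast)=O(\log n)$ for the initialization $w^{(0)}=\mathbf{1}$. Unrolling the contraction yields $\Phi(w^{(T)},w^\ast)\le c^{T}\cdot O(\log n)$, and the choice $T=\log\log(n)/\log(1/c)=\Theta(\log\log n)$ drives this to $O(1)$. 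The slack allowed by $\beta=\Theta(1)$, which lets the algorithm use only a constant-factor leverage-score approximation at each step (via a standard estimator such as \textsf{OSNAP} or \textsf{Count-Sketch}), adds a single additive constant to $\Phi$ per iteration and only affects the multiplicative constant in $T$.

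The main obstacle is tracking signs in the Loewner argument when $p<2$: then $1-2/p<0$ and the monotonicity of $t\mapsto t^{1-2/p}$ reverses. The two-sided multiplicative perturbation is nevertheless still governed by $|1-2/p|$, and the $p/2$ exponent in the update squeezes the overall contraction factor to $|1-p/2|$, which stays strictly below $1$ across the entire range $p\in(0,4)$ (and collapses to $0$ exactly at $p=2$, the classical leverage-score case). A secondary technical point is the initial-distance bound, which requires the standard preprocessing above to exclude vanishingly small true Lewis weights and does not affect the iteration-complexity claim.
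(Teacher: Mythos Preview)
The paper does not prove this lemma; it merely cites it from \cite{CP15}. Your sketch is precisely the Cohen--Peng contraction argument from that reference: the map $w\mapsto\phi(w)$ with $\phi_i(w)=(a_i^\top(A^\top W^{1-2/p}A)^{-1}a_i)^{p/2}$ is a contraction in the metric $\Phi(w,w^\ast)=\max_i|\log(w_i/w^\ast_i)|$ with factor $|1-p/2|<1$ for $p\in(0,4)$, and the $\Theta(\log n)$ initial distance shrinks to $O(1)$ in $\Theta(\log\log n)$ steps, with the $\beta=\Theta(1)$ leverage-score approximation contributing only a bounded additive $O(1)$ per step that sums geometrically. So your approach is correct and coincides with the cited source.

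One minor remark: your treatment of the initial-distance bound via ``routine preprocessing'' to ensure $w_i^\ast\ge 1/\poly(n)$ is the one place where a reader might want more. In the present paper's setting this is justified explicitly later (see the footnote in the proof of Theorem~\ref{thm:ell_p_regression}, which argues that nonzero Lewis weights are at least $1/\poly(nd)$ because the entries of the matrix are bounded integers), so you could simply point there rather than invoke an unspecified regularization.
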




\begin{algorithm}[tb]
\begin{mdframed}
\begin{enumerate}[nosep]
    \item Initialize $w = \mathbf{1} \in \R^n$.
    \item For $t = 1, 2, \dots, T$
        \begin{enumerate}[nosep]
            \item Let $\tau \in \R^n$ be a $\beta$-approximation of the leverage scores of $W^{1/2 - 1/p} A$.
            \item Set $w_i \leftarrow (w_i^{2/p -1} \tau_i)^{p/2}$.
        \end{enumerate}
    \item Return $w$.  
\end{enumerate}
\end{mdframed}
\caption{Iterative Algorithm to Compute the $\ell_p$ Lewis Weights}
\label{alg:lewis}
\end{algorithm}

\section{Distributed $\ell_p$-Regression Lower Bound} 

We consider the following variant of the Gap-Hamming problem (\textsf{GHD}).

\paragraph{Gap-Hamming Problem.} In the Gap-Hamming problem ($\textsf{GHD}_{n, c}$), Alice and Bob receive binary strings $x$ and $y$, respectively, which are uniformly sampled from $\{-1,1\}^n$. 
They wish to decide which of the following two cases $\Delta(x, y) = \sum_{i=1}^n x_i y_i$ falls in: $\Delta(x, y) \ge c\sqrt{n}$ or $\Delta(x, y) \le  -c\sqrt{n}$, where $c$ is a constant. (If $\Delta(x, y)$ is between $- c\sqrt{n}$ and $c\sqrt{n}$, an arbitrary output is allowed.) 

\begin{lemma}[\cite{BGPW16}]
\label{lem:gap}
If there is a protocol $\Pi$ which solves  $\textsf{GHD}_{n, c}$ with large constant probability, then we have $I(x, y ; \Pi) = \Omega(n)$, where $I$ denotes mutual information and the constant hidden in the $\Omega$-notation depends on $c$.
\end{lemma}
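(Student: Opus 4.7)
My plan is to follow the information-complexity approach pioneered by BGPW. Take the hard distribution $\mu$ to be uniform on $\{-1,1\}^n\times\{-1,1\}^n$. By Berry--Esseen, $\Delta(x,y)=\sum_i x_iy_i$ is approximately $\mathcal{N}(0,n)$, so a constant fraction of the $\mu$-mass lies in each of the two gap regions $\{\Delta\ge c\sqrt n\}$ and $\{\Delta\le -c\sqrt n\}$; hence any protocol $\Pi$ that succeeds with large constant probability has constant advantage on $\mu$, and it suffices to lower bound $I_\mu(x,y;\Pi)$ on this input distribution.

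The next step is a direct-sum decomposition of $\textsf{GHD}_{n,c}$ into $n$ essentially independent single-coordinate primitives. Since $(x,y)$ has independent coordinates under $\mu$, the chain rule gives
\[
I_\mu(x,y;\Pi) \;\ge\; \sum_{i=1}^{n} I\bigl(X_i,Y_i\,;\,\Pi \mid X_{<i},Y_{<i}\bigr),
\]
so it is enough to show each conditional summand is $\Omega(1)$. Intuitively, the $i$-th term measures how much $\Pi$ distinguishes the two possibilities for $(X_i,Y_i)$ after integrating over the remaining coordinates; the target is to reduce this to a constant-gap binary distinguishing task whose information cost is $\Omega(1)$ by a standard Hellinger distance / Pinsker argument.

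The technical heart, and the main obstacle, is establishing each per-coordinate bound uniformly: for most prefixes $(X_{<i},Y_{<i})$ the remaining partial sum is far from the gap boundary, so flipping a single coordinate does not change the correct answer and a naive argument yields no information. I would circumvent this exactly as in BGPW by coupling the hard distribution to a Gaussian-rounding construction that re-randomizes each coordinate so that, conditioned on typical prefixes, the sub-problem for coordinate $i$ still has a constant gap with constant probability; a careful chain-rule accounting over this coupling then produces the $\Omega(1)$ per-coordinate contribution and hence the overall $\Omega(n)$ bound. A shorter but slightly weaker alternative would be to invoke Braverman's protocol-compression theorem, which upgrades the $\Omega(n)$ communication lower bound of Chakrabarti--Regev for $\textsf{GHD}$ to an information-cost lower bound of $\tilde{\Omega}(n)$; this loses a polylogarithmic factor relative to the tight $\Omega(n)$ claim in the lemma, so for the statement as written I would invoke the BGPW rounding argument directly.
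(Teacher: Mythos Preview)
The paper does not prove this lemma at all: it is stated with a citation to \cite{BGPW16} and used as a black box throughout Section~3. There is therefore no ``paper's own proof'' to compare your proposal against, and for the purposes of this paper no proof is expected of you either.

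That said, your sketch does track the high-level structure of the BGPW argument (uniform hard distribution, chain-rule decomposition, per-coordinate $\Omega(1)$ contribution via a Gaussian-based reduction), and you correctly identify the central difficulty: for typical prefixes, a single coordinate flip does not change the correct answer, so a naive per-coordinate bound yields nothing. Your description of the fix, however, is too vague to count as a proof. The actual BGPW argument does not simply ``re-randomize each coordinate''; it embeds the problem into a correlated-Gaussian source (each pair $(X_i,Y_i)$ is obtained by thresholding a pair of $\rho$-correlated Gaussians) and shows that any protocol solving $\textsf{GHD}$ must reveal $\Omega(1)$ information about the hidden correlation parameter per coordinate, which is where the constant gap reappears. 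If you intend to actually write out a proof rather than cite one, that reduction is the piece you would need to make precise. Your alternative route via Braverman's compression of the Chakrabarti--Regev communication bound is a valid shortcut but, as you note, loses polylogarithmic factors and so does not deliver the lemma as stated.
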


\subsection{$s$-GAP problem}

In this section, we will define the $s$-GAP problem and then prove an $\Omega(sn)$ lower bound.

\begin{definition}
\label{def:k_gap}
In the $s$-GAP problem, there are $2s$ players, where for the first $s$ players, the $i$-th player receives an $n$-bit string $a^i \in \{-1, 1\}^n$, and for the remaining $s$ players, the $i$-th player receives an $n$-bits string $ b^i \in \{-1, 1\}^n$, with the guarantee that $a = \sum_i a^i \in \{-1,1\}^n$, $b = \sum_i b^i \in \{-1, 1\}^n$ and $\Delta(a, b) \in [-c_2\sqrt{n}, c_2\sqrt{n}]$. The $2s$ players want to determine if $\Delta(a, b) \ge c_1\sqrt{n}$ or $\Delta(a, b) \le  -c_1\sqrt{n}$. Here $c_1<c_2$ are both constants. (Similarly, if $\Delta(a, b)$ is between $- c_1\sqrt{n}$ and $c_1\sqrt{n}$, an arbitrary output is allowed).
\end{definition}

To prove the $\Omega(sn)$ lower bound, we use a similar symmetrization augment as in~\cite{PVZ16} and reduce to the \textsf{GHD} problem. For the reduction, we consider $s = 4t + 2$ for simplicity, and without loss of generality by padding, and consider the following distribution $\mu$ for the inputs $a_i^j$ for players $j = 1, 2, \dots, 2t + 1$. 
Choose a uniformly random vector $a\in \{-1,1\}^n$. For each $i$, if $a_i = 1$, we place $(t + 1)$ bits of $1$ and $t$ bits of $-1$ randomly among the $2t + 1$ players in this coordinate; if $a_i = -1$, we place $t$ bits of $1$ and $(t +1)$ bits of $-1$ randomly among the $2t + 1$ players. We remark that under this distribution,  each player’s inputs are drawn from the same distribution, and each coordinate of each player is $1$ with probability $1/2$ and $-1$ with probability $1/2$. The distribution of $b_i^j$ is the same as that of $a_i^j$ for players $j = 2t + 2, \dots, 4t + 2$.

\begin{theorem}
\label{thm:k_gap}
Any protocol that solves the $s$-GAP problem with large constant probability requires $\Omega(sn)$ bits of communication.     
\end{theorem}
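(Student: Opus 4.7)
The plan is to prove Theorem~\ref{thm:k_gap} via a symmetrization argument in the spirit of~\cite{PVZ16}, reducing $s$-GAP to the two-party Gap-Hamming problem $\mathsf{GHD}_{n',c}$ with $n' = \Theta(n)$. First I would suppose a protocol $\Pi$ solves $s$-GAP with constant success probability under $\mu$ using $\chi$ bits of communication. In the coordinator model $\chi = \sum_{i=1}^{2s}\chi_i$, where $\chi_i$ is the communication between player $i$ and the coordinator; because every player's marginal input under $\mu$ is identical, $\E[\chi_i] = \chi/(2s)$. The key symmetrization step is to consider the 2-party game where Alice simulates a single random player $P^* \in [s]$ (chosen via public randomness), while Bob simulates the other $2s-1$ players and the coordinator. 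The Alice--Bob transcript is precisely the messages between $P^*$ and the coordinator, whose expected length is $O(\chi/s)$. It thus suffices to prove an $\Omega(n)$ lower bound for this 2-party game.

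For the 2-party lower bound, I would reduce from $\mathsf{GHD}_{n',c}$. Given $x, y \in \{-1,1\}^{n'}$, Alice and Bob use public randomness (independent of $(x, y)$) to sample $\{a^i\}_{i \ne P^*}$ from the appropriate marginal of $\mu$. Let $S_j = \sum_{i \ne P^*} a^i_j \in \{-2, 0, 2\}$ and $I = \{j : S_j = 0\}$; under the $(t+1, t)$ structure of $\mu$, each coordinate $j$ lies in $I$ independently with probability $\ge 1/2$, so by a Chernoff bound $|I| \ge n/3$ with exponentially high probability. Publicly choose $J \subseteq I$ with $|J| = n'$ and a bijection $\pi \colon J \to [n']$. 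Alice sets $a^{P^*}_j = x_{\pi(j)}$ for $j \in J$, draws $a^{P^*}_j$ uniformly for $j \in I \setminus J$, and fixes $a^{P^*}_j = -S_j/2$ for $j \notin I$ (the value forced by $a_j \in \{\pm 1\}$). A direct check shows $(a^1, \dots, a^s)$ is distributed exactly as $\mu$, because the true $\mu$-conditional law of $a^{P^*}$ given $\{a^i\}_{i \ne P^*}$ is uniform on $\{\pm 1\}^{|I|}$ in its $I$-coordinates and forced on $I^c$. Bob symmetrically constructs $b \in \{\pm 1\}^n$ with $b_j = y_{\pi(j)}$ for $j \in J$ and $b_j$ uniform via public randomness otherwise, and samples $\{b^j\}$ from the $\mu$-conditional on the $b$-side.

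Under this embedding $\Delta(a, b) = \Delta(x, y) + R$ with $R = \sum_{j \notin J} a_j b_j$ a sum of independent Rademacher variables known to Bob; Chebyshev gives $|R| = O(\sqrt{n})$ with large constant probability, and $|\Delta(x,y)| = O(\sqrt{n})$ under uniform $(x, y)$, so the $s$-GAP promise $|\Delta(a,b)| \le c_2\sqrt{n}$ holds with high probability for a suitable $c_2$. The $s$-GAP protocol distinguishes $\Delta(a,b) \ge c_1\sqrt{n}$ from $\Delta(a,b) \le -c_1\sqrt{n}$; Bob subtracts $R$ to identify the sign of $\Delta(x, y)$ relative to $\pm c\sqrt{n'}$ for constants chosen with $c\sqrt{n'} \gg c_1\sqrt{n} + |R|$. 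Lemma~\ref{lem:gap} then gives $\E[\chi_{P^*}] = \Omega(n') = \Omega(n)$, and combined with $\E[\chi_{P^*}] = O(\chi/s)$ this yields $\chi = \Omega(sn)$. The main obstacle is the joint calibration of $n', c, c_1, c_2$ so that the GHD gap at $\pm c\sqrt{n'}$, the $s$-GAP gap at $\pm c_1\sqrt{n}$, the $s$-GAP promise at $\pm c_2\sqrt{n}$, and the event $|J| = n' \le |I|$ all hold with overwhelming probability over the public randomness, while simultaneously verifying that the embedded joint distribution is literally $\mu$ so that the assumed correctness of $\Pi$ transfers verbatim.
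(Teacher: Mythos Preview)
Your proposal is correct and follows essentially the same symmetrization approach as the paper: Alice simulates one random $a$-side player, Bob simulates the remaining $2s-1$ players, and the two-party lower bound comes from $\mathsf{GHD}$ on the coordinate set $I$ where Bob's partial sum vanishes. The only cosmetic difference is that you phrase the argument as an explicit embedding of a $\mathsf{GHD}$ instance $(x,y)$ into the hard distribution $\mu$, whereas the paper directly observes that the simulated two-party protocol solves $\mathsf{GHD}_{|I|,c'}$ on $(a_I,b_I)$; note that your phrase ``Bob subtracts $R$'' is superfluous, since the protocol outputs only a bit and your stated calibration $c\sqrt{n'} \gg c_1\sqrt{n} + |R|$ already forces the $s$-GAP output bit to coincide with the $\mathsf{GHD}$ answer.
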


\begin{proof}
\looseness=-1 We reduce the $s$-GAP problem to the \textsf{GHD} problem using a similar symmetrization argument to that in~\cite{PVZ16}. Alice picks a random number $i \in [2t + 1]$ uniformly and simulates the $i$-th player. Bob simulates the remaining $s - 1$ players. We shall show that if there is an $s$-player protocol solving the $s$-GAP problem, then the coordinator will be able to solve the \textsf{GHD} problem on a constant fraction of the input vectors $a$ and $b$, which requires $\Omega(n)$ bits of communication. Note that the input distribution of each player is the same and Alice is choosing a random player. Hence, Alice's expected communication to Bob is at most $O(\chi/s)$ bits if the $s$-GAP problem can be solved using $\chi$ bits of communication, which yields a lower bound of $\Omega(sn)$ bits for the $s$-GAP problem.

We first consider Bob's information when he simulates $s - 1$ players. He knows each coordinate of $b$ directly. Consider a coordinate of $a$. If the sum of Bob's $s - 1$ bits on this coordinate is $2$ or $-2$, then he knows Alice's bit on this coordinate immediately, as their sum should be $1$ or $-1$; while if Bob's sum is $0$, he has zero information about Alice's bit on this coordinate. By a simple computation, we obtain that Bob's sum is $2$ or $-2$ with probability $\frac{t}{2t + 1}$ and is $0$ with probability $\frac{t +1}{2t + 1}$.
From a Chernoff bound, we see that with probability at least $1 - e^{-\Omega(n)}$, Bob learns at most $\frac{3}{5}n$ coordinates of $a$. Let $I$ denote the set of remaining indices. Then $|I| \ge \frac{2n}{5}$. We will show that Alice and Bob can solve \textsf{GHD} on $a_I$ and $b_I$ by simulating the protocol for the $s$-GAP problem.

Consider $\Delta(a_{J}, b_{J})$ for $J = [n] \setminus I$. With probability at least $99/100$, it will be contained in $[-c_1\sqrt{|J|}, c_1 \sqrt{|J|}]$, where $c_1$ is a sufficiently large absolute constant. Conditioned on this event, we have that whether the distance $\Delta(a_I, b_I) \ge c_2 \sqrt{|I|}$ or  $\Delta(a_I, b_I) \le -c_2 \sqrt{|I|}$ will decide whether $\Delta(a, b) \ge c_3 \sqrt{n}$ or $\Delta(a, b) \le -c_3 \sqrt{n}$, where $c_2,c_3>0$ are appropriate constants (recall that we have $|I| \ge \frac{2}{5}n$ and $|J| \le \frac{3}{5}n$). This means that, by simulating a $2s$-player protocol for the $s$-GAP problem, Alice and Bob can solve the $\textsf{GHD}_{|I|,c_2}$ problem on $a_I$ and $b_I$, which requires $\Omega(|I|) = \Omega(n)$ bits of communication. 
\end{proof}

\begin{corollary}
Any protocol that solves $m$ independent copies of the $s$-GAP problem with high constant probability requires $\Omega(snm)$ bits of communication.     
\end{corollary}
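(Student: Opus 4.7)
The plan is to combine the per-player symmetrization from the proof of Theorem~\ref{thm:k_gap} with a direct-sum inequality for information cost, reducing the $m$-fold $s$-GAP problem to $m$ independent instances of \textsf{GHD}. Assume $\Pi$ is any protocol solving $m$ independent copies (each drawn from the hard distribution $\mu$ of Theorem~\ref{thm:k_gap}) with large constant probability using $\chi$ bits of communication in expectation; the goal is to show $\chi = \Omega(snm)$.

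First I would rerun the symmetrization in parallel across all $m$ copies: Alice picks $i \in [2t+1]$ uniformly at random and simulates the $i$-th player on every one of the $m$ copies, while Bob simulates the remaining $s-1$ players on every copy. Since each player's marginal distribution under $\mu$ is identical and $i$ is uniform, Alice's expected communication to Bob is $O(\chi/s)$ bits, so it suffices to prove that the induced two-party protocol requires $\Omega(nm)$ bits in expectation.

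Next I would apply the per-coordinate analysis of Theorem~\ref{thm:k_gap} independently to each copy $k \in [m]$: with probability $1 - e^{-\Omega(n)}$ Bob learns at most $\tfrac{3}{5}n$ coordinates of $a^{(k)}$, and with large constant probability the known-coordinate contribution to $\Delta(a^{(k)}, b^{(k)})$ lies in $[-c_1\sqrt{n}, c_1\sqrt{n}]$. Unioning these events with the single-copy failure event of $\Pi$ (which has small constant probability since $\Pi$ succeeds jointly on all $m$ copies), the two-party simulation solves $\textsf{GHD}_{\Omega(n), c_2}$ on the unknown-coordinate set $I_k$ of copy $k$ with large constant probability. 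Hence Lemma~\ref{lem:gap} supplies $I(X_k, Y_k; \Pi) = \Omega(n)$ separately for each $k$, where $(X_k, Y_k)$ denotes the $k$-th copy's inputs as viewed by Alice and Bob after the symmetrization.

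Finally, since the pairs $(X_k, Y_k)$ are independent across copies, the chain rule together with the fact that conditioning on independent side information cannot decrease per-factor mutual information yields
\[
I\bigl((X_k, Y_k)_{k=1}^m\, ;\, \Pi\bigr) \;\geq\; \sum_{k=1}^m I(X_k, Y_k\, ;\, \Pi) \;=\; \Omega(nm).
\]
The two-party simulation therefore requires $\Omega(nm)$ bits of communication, forcing $\chi/s = \Omega(nm)$ and hence $\chi = \Omega(snm)$. The main technical point I expect to have to argue carefully is that the ``good copy'' event retains large constant probability even after accounting for the joint success of $\Pi$; this reduces to a union bound over the per-copy Chernoff/anti-concentration estimates already established in the proof of Theorem~\ref{thm:k_gap}, so no additional machinery beyond the single-copy analysis is needed.
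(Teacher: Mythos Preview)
Your high-level plan matches the paper's: run the same symmetrization as in Theorem~\ref{thm:k_gap} across all $m$ copies, then argue a direct-sum lower bound for $m$ independent \textsf{GHD} instances. The paper simply invokes the direct-sum theorem for information cost~\cite{CSWY01,BJKS04} as a black box after the symmetrization; you instead try to prove that direct-sum step inline via the chain rule. That is a reasonable ambition, but the argument as written has a gap at the key step.

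The problematic claim is ``Lemma~\ref{lem:gap} supplies $I(X_k, Y_k; \Pi) = \Omega(n)$ separately for each $k$.'' Lemma~\ref{lem:gap} applies to a \emph{protocol} for \textsf{GHD}; the transcript structure (rectangle property) is essential---an arbitrary random variable from which the \textsf{GHD} answer can be decoded need not carry $\Omega(n)$ bits of information about $(x,y)$. To turn the $m$-copy transcript $\Pi$ into a single-copy protocol on $(X_k,Y_k)$, Alice and Bob must supply the other copies' inputs $(X_j,Y_j)_{j\neq k}$. Under the hard distribution $\mu$, Alice's and Bob's inputs within a copy are \emph{correlated} (they jointly determine $a^{(j)}$), so these cannot be sampled independently with private coins; they must be sampled via public randomness. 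The information-cost bound you then get from Lemma~\ref{lem:gap} is the conditional one,
\[
I\bigl(X_k,Y_k;\Pi \,\big|\, (X_j,Y_j)_{j\neq k}\bigr)=\Omega(n),
\]
not the unconditional $I(X_k,Y_k;\Pi)$ that your chain-rule inequality uses. (The inequality $I((X_k,Y_k)_k;\Pi)\ge \sum_k I(X_k,Y_k;\Pi)$ itself is fine; the missing piece is the lower bound on each summand.) Reconciling the conditioning---showing that the chain-rule terms $I(X_k,Y_k;\Pi\mid (X_j,Y_j)_{j<k})$ are each $\Omega(n)$---is exactly the nontrivial content of the direct-sum theorem for information complexity over non-product distributions, and it does not fall out of the one-line argument you give. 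Either cite the direct-sum theorem as the paper does, or carry out the full embedding argument (e.g., as in~\cite{BJKS04}) that handles the correlated per-copy inputs.
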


\begin{proof}
    Similar to the proof of Theorem~\ref{thm:k_gap}, Alice and Bob in this case need to solve $m$ independent copies of \textsf{GHD}. The direct sum theorem~\cite{CSWY01,BJKS04} states that if the information cost of solving a communication problem with probability $2/3$ is $f$, then the information cost of solving $m$ independent copies of the same communication problem simultaneously with probability at least $2/3$ is $\Omega(mf)$. Since the information cost implies a communication lower bound, it follows from Lemma~\ref{lem:gap} and the direct sum theorem that $\Omega(knm)$ bits of communication are required. %
\end{proof}

\subsection{$\Omega(sd/\eps^2)$ and $\Omega(sd/\eps)$ Lower Bounds} 

In this section, we will show an $\Omega(sd/\eps^2)$ lower bound for the $\ell_p$-regression problem when $0 < p \le 1$ and an $\Omega(sd/\eps)$ lower bound when $1 < p\leq 2$. 

For simplicity, we first consider the case of $d = 1$ and will later extend the result to general $d$. Consider the same input distribution as in Definition~\ref{def:k_gap} with $n = 1/\eps^2$, and for which the $2s$ players want to compute a $(1 + \eps)$-approximate solution to the $\ell_p$ regression problem 
\begin{equation}
\label{eq:1}
    \argmin_{x \in \R} \norm{ax - b}_p^p \;.
\end{equation}

In the lemma below, we shall show that using a $(1 + \eps)$-approximate solution for the $\ell_p$-regression problem~\eqref{eq:1}, the players can distinguish the two cases to the $s$-GAP problem for the vectors $a$ and $b$, which implies an $\Omega(s/\eps^2)$ lower bound. The proof, analogous to that of~\cite[Theorem~12.2]{MMW+22}, analyzes an objective of the form $r|1 - x|^p + (n - r)|1 + x|^p$ for $r=(n+\Delta(a,b))/2$.

\begin{lemma}\label{lem:auxiliary_d=1}
    Suppose that $p\in (0,2]$, $n = \Theta(1/\eps^2)$, and $a$ and $b$ are the vectors drawn from the distribution in Definition~\ref{def:k_gap}. Let $\eta = \eps$ when $p\in (0,1]$ and $\eta = \eps^2$ when $p\in (1,2]$. Then, any $\tilde{x}$ such that 
    $
    \|a\tilde{x} - b\|_p^p \le (1 + \eta)\min_{x \in \R} \|ax - b\|_p^p
    $
    can be used to distinguish whether $\Delta(a, b) \ge c\sqrt{n}$ or $\Delta(a, b) \le  -c\sqrt{n}$, where $c$ is an absolute constant.
\end{lemma}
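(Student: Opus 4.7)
The plan is to reduce the regression problem to a one-dimensional analysis. Let $r = |\{i : a_i = b_i\}|$, so that $\Delta(a,b) = 2r - n$ and
\[
\|ax - b\|_p^p = r|x-1|^p + (n-r)|x+1|^p =: f(x).
\]
The distinguisher will simply return $\operatorname{sign}(\tilde x)$. By the symmetry $(a,b) \mapsto (-a,b)$, which flips the signs of $\Delta(a,b)$ and of the optimum, it is enough to treat the case $\Delta(a,b) \geq c\sqrt{n}$ (so $r > n/2$) and show that any $(1+\eta)$-approximate minimizer $\tilde x$ must satisfy $\tilde x > 0$.

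Assume for contradiction $\tilde x \le 0$. First I would establish $f(\tilde x) \ge \min(f(0), f(-1))$. A direct computation of $f'$ shows $f$ is strictly decreasing on $(-\infty, -1]$, so $f(\tilde x)\ge f(-1)$ when $\tilde x\le -1$; on the compact interval $[-1,0]$ the function $f$ is convex when $p \ge 1$ and concave when $p < 1$, so in either shape regime its minimum over that interval is attained at one of the endpoints. The two endpoint values are $f(0) = n$ and $f(-1) = 2^{p-1}(n+\Delta)$.

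Next I would compute, or tightly upper-bound, $f(x^\ast)$. For $p\in (0,1]$, concavity on $[-1,1]$ together with monotonicity outside forces $x^\ast = 1$, giving $f(x^\ast) = 2^{p-1}(n-\Delta)$. For $p=2$ an exact solve gives $x^\ast = \Delta/n$ and $f(x^\ast) = n - \Delta^2/n$. For $p\in (1,2)$ the first-order condition $(1+x^\ast)/(1-x^\ast) = (r/(n-r))^{1/(p-1)}$, together with a second-order Taylor expansion of $(1\pm x)^p$ about $0$ (valid since $\Delta/n = O(\eps)$ is tiny and hence so is $x^\ast = \Theta_p(\Delta/n)$), yields
\[
f(x^\ast) = n - \Theta_p\!\left(\frac{\Delta^2}{n}\right),
\]
with the implicit constant depending only on $p$.

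Finally I would compare. The required strict inequality $\min(f(0),f(-1)) > (1+\eta)f(x^\ast)$ reduces, after elementary algebra, to $\Delta = \Omega(\eps n)$ in the $p\in (0,1]$ case (with $\eta = \eps$) and to $\Delta^2 = \Omega_p(\eps^2 n^2)$ in the $p\in (1,2]$ case (with $\eta = \eps^2$). Both conditions translate under $n = \Theta(1/\eps^2)$ to $\Delta = \Omega(\sqrt n)$, so taking a sufficiently large constant $c$ yields the contradiction and forces $\tilde x > 0$. The main technical obstacle is the Taylor expansion step for $p \in (1,2)$: one must produce a quantitative remainder bound controlling how the implicit constant $c_p$ degenerates as $p \to 1^+$. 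Since $p$ is fixed throughout the paper, folding this $p$-dependence into $c$ is benign; obtaining a truly uniform constant on $(1,2]$ would require a more careful uniform expansion of $(1\pm x)^p$.
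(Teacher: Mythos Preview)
Your approach is essentially the same as the paper's: both rewrite the cost as $f(x)=r|1-x|^p+(n-r)|1+x|^p$, show the optimizer sits on the ``correct'' side of $0$, and then argue any near-optimal $\tilde x$ on the wrong side has cost exceeding $(1+\eta)f(x^\ast)$. The paper handles $p\in(0,1)$ and $p\in(1,2)$ by citing \cite{MMW+22}, whereas you give a self-contained argument (concavity for $p<1$, Taylor expansion for $1<p<2$), which is a plus.

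There is one slip. You write that on $[-1,0]$, ``$f$ is convex when $p\ge 1$ \dots\ so its minimum over that interval is attained at one of the endpoints.'' That implication is false: convex functions attain their \emph{maximum} at an endpoint, not their minimum. The conclusion you want still holds, but for a different reason: for $p\ge 1$ the function $f(x)=r|x-1|^p+(n-r)|x+1|^p$ is convex on all of $\R$, and you have already shown (via the first-order condition or the explicit formula for $p=2$) that its unique minimizer $x^\ast$ is strictly positive. Hence $f$ is monotone decreasing on $(-\infty,x^\ast]\supset(-\infty,0]$, so $f(\tilde x)\ge f(0)=n$ whenever $\tilde x\le 0$. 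With this one-line fix, the rest of your comparison with $(1+\eta)f(x^\ast)$ goes through exactly as you outlined. The paper's $p=1$ and $p=2$ cases use precisely this monotonicity argument (rather than the endpoint-of-convex claim), so after the fix your proof aligns with theirs.
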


\begin{proof}
    Suppose that $a_i=b_i$ for $r$ coordinates $i$ and $a_i\neq b_i$ for $n - r$ coordinates $i$. The objective function $\norm{ax-b}_p^p$ can be rewritten as 
    \[
    r\cdot|1 - x|^p + (n - r) \cdot |1 + x|^p \;.
    \]
    
    \paragraph{Case $p\in (0,1)$.   } The first observation is that the optimal solution $x^*$ should lie in $[-1, 1]$, otherwise $x = 1$ or $x = -1$ will give a lower cost. Next, without loss of generality, we can assume that $\Delta(a, b) \ge c\sqrt{n}$, which means that $r \ge \frac{n}{2} + \frac{c}{2}\sqrt{n}$. Following a similar analysis to that in~\cite[Theorem~12.2]{MMW+22}, we can now obtain that the optimal solution $x^*$ satisfies $x^* > 0$ and every $x < 0$ will lead to $\|ax-b\|_p^p \geq (1 + \eps)\|ax^* - b\|_p^p$. The case where $\Delta(a, b) \le  -c\sqrt{n}$ is similar, where the optimal solution $x^*$ satisfies $x^* < 0$ and every $x > 0$ will lead to $\|ax-b\|_p^p \geq (1 + \eps)\|ax^* - b\|_p^p$. Hence, using the sign of $x$ and the fact that $x$ is a $(1+\eps)$-approximate solution, we can distinguish the two cases of $\Delta(a,b)$.

    \paragraph{Case $p = 1$.} The objective can now be rewritten as 
    \[
    r\cdot|1 - x| + (n - r) \cdot |1 + x| \;.
    \]
    Without loss of generality, we assume that $\Delta(a, b) \ge c\sqrt{n}$ which means that $r \ge \frac{n}{2} + \frac{c}{2}\sqrt{n}$. The only thing we have to show is that $\|ax-b\|_p^p \geq (1 + \eps)\|ax^* - b\|_p^p$ for all $x < 0$. On the one hand, we have that $\norm{ax^* - b}_p^p \le \norm{a\cdot 1 - b}_p^p \le n - c\sqrt{n}$. On the other hand, when $x < 0$, noting that $r > n - r$,  we have that $\norm{ax - b}_p^p \ge \norm{a\cdot 0 - b}_p^p = n\geq (1+\eps)(n-c\sqrt{n})$. The last inequality follows from our choice of $n = \Theta(1/\eps^2)$. To conclude, when $p = 1$, we can also distinguish the two cases from the sign of $x$.

    \paragraph{Case $p \in (1,2)$.} The case of $1 < p < 2$ was shown in \cite[Theorem 12.4]{MMW+22}. Similar to their analysis, we can get that (i) when $\Delta(a, b) \ge c\sqrt{n}$, the optimal solution $x^*$ satisfies $x^* > 0$ and any $x < 0$ will yield $\|ax-b\|_p^p \geq (1+2\eps^2)\|ax^* - b\|_p^p$; 
    (ii) when $\Delta(a, b) \le  -c\sqrt{n}$, the optimal solution $x^*$ satisfies $x^* < 0$ and 
    any $x > 0$ will yield $\|ax-b\|_p^p \geq (1+2\eps^2)\|ax^* - b\|_p^p$. 
    Hence, we can deduce the sign of $x$ in the two cases, and can distinguish the two cases 
    when $x$ is a $(1+\eps^2)$-approximate solution.

    \paragraph{Case $p = 2$.} 
    The optimal solution is $x^* = \frac{\sum_i a_i b_i}{\sum_i a_i^2} = \frac{\sum_i a_i b_i}{n}$ and the corresponding objective value is $n - \frac{(\sum_i a_ib_i)^2}{n}$. When $\Delta(a, b) \ge c\sqrt{n}$, the optimal solution $x^* > 0$ and $\|ax^* - b\|_2^2 \leq n - c^2$, while for all $x < 0$, from the property of the quadratic function, we get that $\norm{ax^* - b}_2^2 \ge \norm{a\cdot(0) - b}_2^2 = n \geq (1+2\eps^2)(n-c^2)$ (recall that $n \leq c/(2\eps^2)$).
    A similar analysis works when $\Delta(a, b) \le -c\sqrt{n}$ and the proof is complete.
\end{proof}

Combining this lemma with Theorem~\ref{thm:k_gap} yields the desired lower bound for the distributional regression problem with $d=1$.

\begin{lemma}
    Suppose that $d = 1$ and $\eps > 0$. Then any protocol that computes a $(1 + \eps)$-approximate solution to the $s$-server distributional $\ell_p$-regression problem in the message passing model with high constant probability requires $\Omega(s/\eps^2)$ bits of communication for $p\in (0,1]$ and $\Omega(s/\eps)$ bits of communication for $p\in (1,2]$.
\end{lemma}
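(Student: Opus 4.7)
The plan combines Lemma~\ref{lem:auxiliary_d=1} with the $s$-GAP communication lower bound of Theorem~\ref{thm:k_gap} via a direct reduction. First, I embed a $2s$-player $s$-GAP instance into an $s$-server distributed $\ell_p$-regression instance: for each $i\in [s]$, server $i$ simulates both the $i$-th $a$-player and the $i$-th $b$-player of $s$-GAP, so it holds the single-column matrix $A^i := a^i\in\{-1,1\}^{n\times 1}$ together with the response shard $b^i\in\{-1,1\}^n$. The distribution of Definition~\ref{def:k_gap} guarantees that $a=\sum_i a^i$ and $b=\sum_i b^i$ both lie in $\{-1,1\}^n$, so this is a valid instance for the arbitrary-partition regression model with $M=1$, and a regression protocol using $\chi$ bits over $s$ servers immediately yields a $2s$-player protocol for $s$-GAP using at most $\chi$ bits of communication (merging two players into one server can only decrease communication).

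Next I fix the length $n$ separately in the two regimes. For $p\in(0,1]$, set $n=\Theta(1/\eps^2)$; by Lemma~\ref{lem:auxiliary_d=1} with $\eta=\eps$, any $(1+\eps)$-approximate minimizer $\tilde{x}$ of $\min_{x\in\R}\norm{ax-b}_p$ reveals the sign of $\Delta(a,b)$ on the relevant gap event, and this sign solves $s$-GAP. Theorem~\ref{thm:k_gap} then forces $\chi=\Omega(sn)=\Omega(s/\eps^2)$. For $p\in(1,2]$, Lemma~\ref{lem:auxiliary_d=1} only converts a $(1+\eps^2)$-approximation into a sign determination; rescaling via $\eps\mapsto\sqrt{\eps}$, a $(1+\eps)$-approximation is enough provided $n=\Theta(1/\eps)$, and Theorem~\ref{thm:k_gap} yields $\chi=\Omega(s/\eps)$.

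The delicate point I expect to work through is reconciling the success criteria of the two problems: the regression protocol is only assumed correct with large constant probability over its internal randomness, while the $s$-GAP hardness is stated for the hard input distribution. Because the two sources of randomness are independent and the gap events $\{\Delta(a,b)\ge c\sqrt n\}$ and $\{\Delta(a,b)\le -c\sqrt n\}$ each carry constant mass under the distribution of Definition~\ref{def:k_gap}, a standard averaging argument shows the induced $s$-GAP protocol retains constant advantage on each side of the gap and therefore meets the hypothesis of Theorem~\ref{thm:k_gap}. Once this routine check is in place, the two claimed lower bounds $\Omega(s/\eps^2)$ for $p\in(0,1]$ and $\Omega(s/\eps)$ for $p\in(1,2]$ follow at once.
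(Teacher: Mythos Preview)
Your plan is exactly the paper's: combine Lemma~\ref{lem:auxiliary_d=1} with the $s$-GAP lower bound of Theorem~\ref{thm:k_gap}, choosing $n=\Theta(1/\eps^2)$ for $p\in(0,1]$ and, after the substitution $\eps\mapsto\sqrt\eps$, $n=\Theta(1/\eps)$ for $p\in(1,2]$. The probability bookkeeping you flag at the end is indeed routine.

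There is, however, a logical slip in your reduction. You give regression server $i$ the pair $(a^i,b^i)$ and then assert that an $s$-server regression protocol of cost $\chi$ yields a $2s$-player $s$-GAP protocol of cost at most $\chi$, ``because merging two players into one server can only decrease communication.'' That inequality is true but points the wrong way: merging gives each party strictly more local information, so the merged $s$-server game is \emph{easier} than the unmerged $2s$-player game, and the $\Omega(sn)$ bound of Theorem~\ref{thm:k_gap} on the latter does not automatically descend to the former. Concretely, for the $2s$ players to simulate your $s$-server regression protocol, player $s{+}i$ must first ship $b^i$ to player $i$ through the coordinator, already spending $\Theta(sn)$ bits and voiding the conclusion. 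The clean fix is to keep the parties separate on the regression side: set $(A^i,b^i)=(a^i,0)$ for $1\le i\le s$ and $(A^i,b^i)=(0,b^{i-s})$ for $s< i\le 2s$; zeros are admissible in the arbitrary-partition model. Now a $2s$-server regression protocol of cost $\chi$ \emph{is} a $2s$-player $s$-GAP protocol of cost $\chi$, so Theorem~\ref{thm:k_gap} forces $\chi=\Omega(sn)$. Renaming $2s\to s$ absorbs the constant and yields the stated $\Omega(s/\eps^2)$ and $\Omega(s/\eps)$ bounds. (Alternatively one can rerun the symmetrization of Theorem~\ref{thm:k_gap} with Alice holding a full pair $(a^i,b^i)$, but the zero-padding route is shorter.)
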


We now extend the lower bound to general $d$ via a padding argument. Suppose that $a_1, a_2, \dots, a_d$ and $b_1, b_2, \dots, b_d$ are $d$ independent samples drawn from the same distribution as defined in Definition~\ref{def:k_gap} with $n = \Theta(1/\eps^2)$. We form a matrix $A \in \R^{O(d/\eps^2) \times d}$ and a vector $b \in \R^{O(d/\eps^2)}$ as
\[
A = \begin{bmatrix}
a_1 & & \\
& a_2 & & \\
& & \ddots & \\
& & & a_d \\
\end{bmatrix}, \ \ \ b = \begin{bmatrix}
    b_1 \\
    b_2 \\
    \vdots \\
    b_d \\
\end{bmatrix}
\;.
\]
It then follows that
\[
\min_{x \in \R^d}\norm{Ax - b}_p^p = \sum_{i = 1}^{d} \min_{x_i \in \R} \norm{a_i x_i - b_i}_p^p.
\]
We then make the following observation. If $x \in \R^d$ is a $(1 + \eps)$-approximate solution of $\min_x \norm{Ax - b}_p^p$, then there must exist 
 a constant fraction of the indices $i \in [d] $ such that $x_i$ is a $(1 + O(\eps))$-approximate solution to the regression problem $\min_{x_i \in \R} \norm{a_i x_i - b_i}_p^p$ (recall that we have the guarantee that $\Delta(a_i, b_i) \in [-c_2\sqrt{n}, c_2\sqrt{n}]$ for all $i$, and hence the objective values for each regression problem are within a constant factor). This means that from the signs of these $x_i$, we can solve a constant fraction of the $d$ independent copies of the $s$-GAP problem, which implies the following theorem immediately.

 \begin{theorem}
 \label{thm:lb_d/eps}
     Suppose that $\eps > \frac{1}{\sqrt{n}}$ for $p \in (0,1]$ and $\eps > \frac{1}{n}$ for $p \in (1,2]$. Then any protocol that computes a $(1 + \eps)$-approximate solution to the $s$-server distributional $\ell_p$-regression problem with $d$ columns in the message passing model with large constant probability requires $\Omega(sd/\eps^2)$ bits of communication for $p\in (0,1]$ and $\Omega(sd/\eps)$ bits of communication for $p\in (1,2]$.
 \end{theorem}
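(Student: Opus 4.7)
The plan is to formalize the padding / direct-sum argument sketched in the paragraph just above the theorem, chaining Lemma~\ref{lem:auxiliary_d=1} with the Corollary following Theorem~\ref{thm:k_gap}. I would set $m = \Theta(1/\eps^2)$ when $p\in(0,1]$ and $m = \Theta(1/\eps)$ when $p\in(1,2]$, draw $d$ independent copies $(a_1,b_1),\dots,(a_d,b_d)$ from the hard distribution of Definition~\ref{def:k_gap} at parameter $m$, and distribute each copy additively across the $s$ servers exactly as in the $d=1$ construction. Assembling them into the block-diagonal $A=\diag(a_1,\dots,a_d)\in\R^{dm\times d}$ and the stacked vector $b=(b_1;\dots;b_d)\in\R^{dm}$ yields matching additive shares on the $s$ servers of the $\ell_p$-regression instance, so any $\chi$-bit distributed $(1+\eps)$-approximation protocol doubles as a $\chi$-bit protocol on the $d$ joint $s$-GAP instances.

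The algebraic backbone is the decomposition
\begin{equation*}
\min_{x\in\R^d}\|Ax-b\|_p^p \;=\; \sum_{i=1}^d \min_{x_i\in\R}\|a_ix_i-b_i\|_p^p,
\end{equation*}
combined with the observation that the promise $\Delta(a_i,b_i)\in[-c_2\sqrt{m},c_2\sqrt{m}]$ forces each summand to lie in $\Theta(m)$, so the overall optimum is $\Theta(md)$. For a $(1+\eps)$-approximate $x\in\R^d$, the coordinate excesses $E_i=\|a_ix_i-b_i\|_p^p-\min_{x_i}\|a_ix_i-b_i\|_p^p$ satisfy $\sum_i E_i\le\eps\cdot O(md)$, so Markov's inequality guarantees that at least a $99/100$ fraction of indices $i$ obey $E_i\le O(\eps)\cdot\min_{x_i}\|a_ix_i-b_i\|_p^p$. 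For our choice of $m$ this matches the $(1+\eta)$-approximation hypothesis of Lemma~\ref{lem:auxiliary_d=1} (with $\eta=\eps$ when $p\in(0,1]$ and $\eta=\eps^2$ when $p\in(1,2]$, the extra quadratic factor being absorbed into the choice of $m$), so on every good coordinate the sign of $x_i$ correctly decides the $i$-th $s$-GAP instance.

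To apply the direct sum I would use that the $d$ copies are exchangeable. By symmetry of the construction, the marginal probability that coordinate $i$ is decided correctly is the same for every $i$ and is at least a constant, so the protocol effectively outputs, for each $i$, a single-bit $s$-GAP answer that is marginally correct with constant probability. The direct sum of~\cite{CSWY01,BJKS04} for information cost, combined with Lemma~\ref{lem:gap} and the symmetrization used inside Theorem~\ref{thm:k_gap}, then yields information cost $\Omega(d\cdot sm)$, and hence communication $\Omega(sdm)$, which is the claimed $\Omega(sd/\eps^2)$ for $p\in(0,1]$ and $\Omega(sd/\eps)$ for $p\in(1,2]$.

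The main technical point to be careful with is the passage from ``a constant fraction of coordinates correct'' to ``each coordinate marginally correct'', which is the hypothesis that the existing direct sum actually needs. Exchangeability of the product input distribution across blocks makes this step essentially free, but if one worries that the protocol's output might not be permutation-symmetric, the fix is to randomly permute the $d$ blocks before running the protocol and invert the permutation on its output, which trivially symmetrizes the per-coordinate error probabilities without affecting the communication cost. With that symmetrization in place, the remainder is a direct appeal to the building blocks already established earlier in the excerpt.
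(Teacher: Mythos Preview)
Your proposal is correct and follows essentially the same route as the paper: the block-diagonal padding, the decomposition $\min_x\norm{Ax-b}_p^p=\sum_i\min_{x_i}\norm{a_ix_i-b_i}_p^p$, the Markov/averaging step showing a constant fraction of coordinates inherit a $(1+O(\eps))$-approximation, and the appeal to Lemma~\ref{lem:auxiliary_d=1} plus the $s$-GAP direct-sum corollary. Two small remarks: (i) your explicit choice $m=\Theta(1/\eps)$ for $p\in(1,2]$ is the reparametrization the paper leaves implicit when it writes ``$n=\Theta(1/\eps^2)$'' uniformly; and (ii) your random-permutation symmetrization to pass from ``a constant fraction of coordinates correct'' to ``each coordinate marginally correct'' is exactly the detail needed to invoke the information-cost direct sum of \cite{CSWY01,BJKS04}, a step the paper compresses into ``which implies the following theorem immediately.''
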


\subsection{$\Omega(sd^2)$ Lower Bound for $p \in (0, 2]$}

    In this section, we present an $\Omega(sd^2)$ lower bound for $0 < p \le 2$. We first describe the intuition behind our lower bound. Following~\cite{VWW20}, we construct a set of matrices $\mathcal{H} \subseteq \mathbb{R}^{d \times d}$ with a vector $b \in \mathbb{R}^d$ such that (i) $T$ is non-singular for all $T \in \mathcal{H}$, and (ii) $S^{-1} b \ne T^{-1}b$ for all $S, T \in \mathcal{H}$ and $S\neq T$. Then we uniformly sample a matrix $A \in \mathcal{H}$ and show that we can obtain the index of $A$ in the set $\mathcal{H}$ from a constant-factor approximate solution to the regression problem $\min \norm{Ax - b}_p^p$. This will imply an $\Omega(d^2)$ lower bound even for $s = 2$. The construction of $\mathcal{H}$ is given in the following lemma.

\begin{lemma}
\label{lem:set}
    For every sufficiently large $d$, there exists a set of matrices $\mathcal{H} \subseteq {\{-1,1\}}^{d \times d}$ with $|\mathcal{H}| = \Omega(2^{0.49d^2})$ such that (i)
 $T$ is non-singular for all $T \in \mathcal{H}$, and (ii) for all distinct $S, T \in \mathcal{H}$, $S^{-1} e_d \neq T^{-1} e_d$, where $e_d$ is the $d$-th standard basis vector. 
\end{lemma}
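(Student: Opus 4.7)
The plan is to carry out a counting argument on the full set $\mathcal{T} := \{-1,1\}^{d \times d}$ of size $2^{d^2}$, and then to select one representative per fiber of the map $\phi : T \mapsto T^{-1}e_d$ to form $\mathcal{H}$. Non-singularity comes essentially for free: by Tikhomirov's theorem only a $(1/2 + o(1))^d$-fraction of $\mathcal{T}$ is singular. For each candidate $v \in \mathbb{R}^d$, write $n(v) := |\{T \in \mathcal{T} : Tv = e_d\}|$; since the constraint $Tv = e_d$ decomposes row by row into independent constraints on each row of $T$, one has $n(v) = N_0(v)^{d-1} \cdot N_1(v)$, where $N_c(v) := |\{r \in \{-1,1\}^d : \langle r, v\rangle = c\}|$. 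By Cauchy--Schwarz, $|\phi(\mathcal{T})| \geq |\mathcal{T}|^2 / \sum_v n(v)^2$, so it suffices to show $\sum_v n(v)^2 \leq 2^{1.51\, d^2}$, i.e., that a typical $v$ has few preimages.

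The structural fact to exploit is that $v = T^{-1} e_d = u(B)/\det(T)$, where $B \in \{-1,1\}^{(d-1)\times d}$ denotes the top $d-1$ rows of $T$ and $u(B)_i := (-1)^{i+d} \det(B^{(i)})$ is the signed cofactor vector ($B^{(i)}$ being $B$ with column $i$ removed). Consequently $N_c(v)$ is controlled by the anti-concentration of $u(B)$. I would try to show that for a $(1-o(1))$-fraction of $B$, the integer vector $u(B)$ is \emph{strongly dissociated} in the sense that $N_c(v) \leq 2^{(1-\gamma) d/2}$ uniformly in $c$, for some absolute constant $\gamma > 0$. Granting this, $n(v) \leq 2^{(1-\gamma) d^2/2 + O(d)}$ on this typical set of $v$'s, and the Cauchy--Schwarz estimate then gives $|\phi(\mathcal{T})| \geq 2^{(1+\gamma) d^2/2 - o(d^2)} \geq \Omega(2^{0.49\, d^2})$ once $\gamma$ is chosen large enough.

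The main obstacle, and the technical heart of the proof, is establishing the quantitative anti-concentration above: it is much stronger than what classical Littlewood--Offord provides ($N_c(v) \leq O(2^d/\sqrt{d})$). The useful identity $\langle c, u(B)\rangle = \det\bigl(\begin{smallmatrix} B \\ c^T\end{smallmatrix}\bigr)$, obtained by cofactor expansion along the last row, converts every coincidence event $\langle c, u(B)\rangle = 0$ into a singularity event for an augmented $\pm 1$-type matrix, uniformly over $c \in \{-1,0,1\}^d$. I would handle this in two regimes: for $c$ of dense support (say $|\mathrm{supp}(c)| = \Omega(d)$), sharp singularity bounds of Tikhomirov type (suitably extended to $\{-1,0,1\}^d$ last-row perturbations) can be invoked, and the number of such $c$'s is controlled by a straightforward entropy count; for sparse $c$ of bounded support, a direct anti-concentration argument on the (much smaller) determinantal minors involved closes the gap. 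Combining the two regimes and feeding the result back into the Cauchy--Schwarz step produces the desired $|\mathcal{H}| \geq \Omega(2^{0.49\, d^2})$.
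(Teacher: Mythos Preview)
Your fiber-counting strategy is different from the paper's argument and is, at the high level, a legitimate route. The paper instead does a probabilistic construction: it samples roughly $(2-\eps)^{d(d-1)/2}$ random $\pm 1$ matrices, discards those whose first $d-1$ columns lie in a ``bad'' set (where a random $\pm 1$ vector falls into their span with probability exceeding $c\,(2-\eps)^{-d}$), and shows by a union bound that the survivors pairwise satisfy $\mathrm{span}([S_{d-1}\ T_{d-1}])=\mathbb{R}^d$. That spanning condition is exactly ``distinct row spans of the top $d-1$ rows'' after transposing, so morally both approaches pivot on the same fact: for most $(d-1)\times d$ sign matrices $B$, very few $\pm 1$ vectors lie in the row span of $B$. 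The paper extracts this from Tikhomirov's theorem via a single Markov-type step.

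Where your proposal has a genuine gap is the anti-concentration target. You ask for $N_c(v)\le 2^{(1-\gamma)d/2}$ \emph{uniformly in the scalar $c$}, i.e.\ $\max_\alpha N_\alpha(u(B))\le 2^{(1-\gamma)d/2}$, and you plan to get there by controlling the events $\langle c,u(B)\rangle=0$ over vectors $c\in\{-1,0,1\}^d$. But that vector-$c$ analysis is exactly the second-moment identity
\[
\sum_\alpha N_\alpha(u(B))^2 \;=\; \sum_{\substack{c\in\{-1,0,1\}^d\\ \langle c,u(B)\rangle=0}} 2^{\,d-|\mathrm{supp}(c)|},
\]
and the $c=0$ term alone already contributes $2^d$. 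Hence the best this route can ever give is $\max_\alpha N_\alpha\le \bigl(\sum_\alpha N_\alpha^2\bigr)^{1/2}\ge 2^{d/2}$, which is on the wrong side of your target $2^{(1-\gamma)d/2}$ for any constant $\gamma>0$. So the ``two regimes over $c\in\{-1,0,1\}^d$'' plan cannot deliver the bound you state; the Cauchy--Schwarz scaffolding you set up is also not what you end up using.

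The fix is that you do not need $\max_\alpha N_\alpha$ at all --- only $N_0$. From Tikhomirov, $\mathbb{E}_B[N_0(u(B))]=2^d\Pr[\text{random }\pm 1\text{ matrix singular}]\le (1+o(1))^d$, so by Markov $N_0(u(B))\le 2^{\eps d}$ for a $1-o(1)$ fraction of $B$. For such $B$ the number of $(d-1)\times d$ sign matrices sharing the same row span is at most $N_0(u(B))^{d-1}\le 2^{\eps d(d-1)}$, hence there are at least $2^{(1-\eps)d(d-1)}$ distinct row spans, each giving a distinct direction for $v=T^{-1}e_d$; picking one nonsingular $T$ per direction yields $|\mathcal{H}|\ge 2^{(1-\eps)d(d-1)}\ge 2^{0.49d^2}$. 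This sidesteps $N_1(v)$ entirely and reproduces the paper's bound by a counting argument rather than its sampling-plus-union-bound construction.
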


We remark that in~\cite{VWW20}, Lemma~\ref{lem:set} was only shown for the case where $t > 1$, $|\mathcal{H}| = \Omega(t^{1/6d^2})$ and the matrix entries are integers in $[-t, t]$. However, using the singularity probability of random matrices in $\{-1, +1\}^{d \times d}$ and following a similar argument to~\cite{VWW20}, we can obtain the desired bounds in Lemma~\ref{lem:set}. The detailed proof can be found in Appendix~\ref{sec:set}. Note that the construction procedure of the set is close to random sampling -- uniformly sample $\Omega(2^{0.49d^2})$ matrices and remove a small fraction. This property will be  crucial to our proof.


To achieve an $\Omega(sd^2)$ lower bound for $s$ players, we consider the same input distribution for the $s$ players in Lemma~\ref{lem:gap} and employ a similar symmetrization argument. After sampling matrices in $\mathcal{H}$, we construct the inputs of the $s$ players to be matrices in $\{-1, +1\}^{d \times d}$ with the sum being $A$. However, if we follow the same argument and let Bob simulate $s - 1 = 2t$ players, in expectation he will know a $\frac{t}{2t + 1} \approx \frac{1}{2}$ fraction of the entries of $A$, and from the construction of the set $|\mathcal{H}|$ we know that there will be only $O(1)$ matrices in $\mathcal{H}$ satisfying the conditions on such entries. Hence, Alice only needs to send $O(1)$ bits of information to Bob. To solve this issue, we make the following modification. Instead, we let Alice simulate $2$ players, and Bob simulates the remaining $s - 2 = 2t - 1$ players. In this case, Bob will know roughly a $1/4$-fraction of the entries directly; however, for the remaining entries, he will know side information. Roughly speaking, for $A_{ij}$, if Bob's sum over the $s - 2$ players is $1$, with probability roughly $2/3$, $A_{ij}$ is $1$; if his sum over the $k - 2$ players is $-1$, with probability roughly $2/3$, $A_{ij}$ is $-1$. We shall show that even having such side information, with high probability the conditional entropy of the remaining entries of $A$ is still $\Omega(d^2)$, which implies that Alice still needs to send Bob $\Omega(d^2)$ bits. 



\begin{lemma}\label{lem:sd^2-game}
    Consider the following game of $s = 2t + 1$ players, where the $i$-th player receives a $d \times d$-matrix $A^i$ such that $A^i\subseteq \{-1,1\}^{d\times d}$ with the guarantee that $A = \sum_{i} A^i$ is distributed in $\mathcal{H}$ uniformly. The $s$ players want to determine collectively the index of the matrix $A$ in $\mathcal{H}$.
    Any protocol which solves this problem with large constant probability requires $\Omega(sd^2)$ bits of communication.
\end{lemma}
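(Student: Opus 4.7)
The plan is to follow the symmetrization framework of \cite{PVZ16}, combined with a direct entropy computation that leverages the random-sampling structure of $\mathcal{H}$ from Lemma~\ref{lem:set}. First I reduce the $s$-player game to a two-party problem: Alice chooses a uniformly random pair $(i_1,i_2)\in\binom{[s]}{2}$ and simulates those two players, while Bob simulates the remaining $s-2$. By symmetry of the input distribution across players and linearity of expectation, the expected number of bits transmitted on the two edges incident to Alice's players is $O(\chi/s)$, where $\chi$ is the total communication of the original protocol. Hence it suffices to establish a two-party lower bound of $\Omega(d^2)$, which yields $\chi=\Omega(sd^2)$.

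Next I analyse Bob's view. Writing $B_{ij}:=\sum_{k\notin\{i_1,i_2\}}A^k_{ij}$, Bob knows $B_{ij}$ and $A_{ij}=B_{ij}+A^{i_1}_{ij}+A^{i_2}_{ij}\in\{-1,+1\}$. Consequently Bob determines $A_{ij}$ exactly when $|B_{ij}|\geq 3$ and is left with two candidates when $B_{ij}\in\{-1,+1\}$. A short calculation on the uniform partition of $A_{ij}$ among the $s$ players shows that (up to $o(1)$ errors as $s\to\infty$) the ambiguous case occurs with probability $\tfrac34$, and in that case Bob's Bayesian posterior on the two candidates for $A_{ij}$ is $(\tfrac23,\tfrac13)$, favoring the sign of $B_{ij}$. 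A Chernoff bound then gives $|S|\geq(\tfrac34-o(1))d^2$ with probability $1-2^{-\Omega(d^2)}$, where $S$ is the set of ambiguous coordinates and $T=[d]\times[d]\setminus S$ is the determined set.

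The crux of the proof is to show $H(A\mid B)=\Omega(d^2)$. Let $\pi_b$ be the product of the biased Bernoulli marginals on $S$ induced by Bob's posteriors, and let $Q_b$ denote the true posterior, i.e., $\pi_b$ restricted to those $A\in\mathcal{H}$ that are consistent with $b$ on $T$ and renormalised. A direct manipulation of $Q_b(A_S)=\pi_b(A_S)/\pi_b(\mathcal{H})$ (where $\pi_b(\mathcal{H})$ is the $\pi_b$-mass of the set of $A_S$ whose completions lie in $\mathcal{H}$) gives
\[
H(A\mid B=b)\;=\;\mathbb{E}_{A_S\sim Q_b}\bigl[-\log\pi_b(A_S)\bigr]\;+\;\log\pi_b(\mathcal{H}).
\]
Since $\mathcal{H}$ is constructed by uniformly sampling $\Omega(2^{0.49d^2})$ matrices from $\{-1,+1\}^{d\times d}$ and discarding a small fraction (per Lemma~\ref{lem:set}), a first/second-moment calculation over the random sampling shows that, for typical $b$, $\pi_b(\mathcal{H})\approx 2^{-0.51 d^2}$ and the marginals of $Q_b$ stay close to those of $\pi_b$. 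The first term is then $\approx 0.918\,|S|\approx 0.689\,d^2$ (using the binary entropy of $\mathrm{Bern}(2/3)$) and the second is $\approx -0.51\,d^2$, so $H(A\mid B)\geq(0.689-0.51-o(1))d^2=\Omega(d^2)$. Coupling this with a standard Fano-type/information-cost inequality (after boosting the protocol's success probability to $1-1/\poly(d)$, which only costs logarithmic overhead), the two-party communication is at least $H(A\mid B)-o(d^2)=\Omega(d^2)$; symmetrization then gives the claim.

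The principal obstacle is rigorously controlling the marginals of the restricted posterior $Q_b$. In the expression $\mathbb{E}_{Q_b}[-\log\pi_b(A_S)]=\sum_{(i,j)\in S}\mathbb{E}_{Q_b}[-\log p_{ij}(A_{ij})]$, each summand depends only on $\Pr_{Q_b}[A_{ij}=+1]$, and we need this quantity to stay within $o(1)$ of the unrestricted value $\pi_b(A_{ij}=+1)\in\{1/3,2/3\}$; otherwise the estimate of the first term above would drift. This is precisely where the pseudo-randomness of $\mathcal{H}$ enters: because $\mathcal{H}$ arises from uniform random sampling (as emphasised in the paragraph following Lemma~\ref{lem:set}), a Chebyshev bound on the counts $|\{A\in\mathcal{H}:A_T\text{ matches }b,\,A_{ij}=+1\}|$ versus $|\{A\in\mathcal{H}:A_T\text{ matches }b,\,A_{ij}=-1\}|$ shows these marginals concentrate around their prior values with probability $1-o(1)$ over the construction of $\mathcal{H}$, validating the entropy computation and thereby the $\Omega(sd^2)$ lower bound.
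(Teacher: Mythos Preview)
Your proposal is correct and follows the same overall strategy as the paper: let Alice simulate two random players and Bob the remaining $s-2$, compute Bob's $(2/3,1/3)$ posterior on the roughly $3d^2/4$ ambiguous coordinates, and then argue $H(A\mid B)=\Omega(d^2)$ by exploiting the random-sampling structure of $\mathcal H$. The symmetrization step and the identification of the posterior are identical to the paper's.

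The entropy computations differ in presentation. The paper bounds the conditional entropy via a min-entropy argument: it restricts to the set $\mathcal T$ of candidates in the typical Hamming shell around Bob's sign vector $Y$, shows (by a first-moment count over the random construction of $\mathcal H$) that the total posterior weight satisfies $S\ge(0.682)^{d^2}$ while every individual weight is at most $(0.629)^{d^2}$, and concludes $H(A\mid Y)\ge\log\bigl(S/\max_i w_i\bigr)=\Omega(d^2)$. Your route instead writes $H(Q_b)=\E_{Q_b}[-\log\pi_b(A_S)]+\log\pi_b(\mathcal H)$ and evaluates the two pieces separately, which requires the extra step of showing that the single-coordinate marginals of $Q_b$ stay near $(2/3,1/3)$ via a Chebyshev/second-moment argument over the random $\mathcal H$. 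This works (the relative variance of the weighted counts is $2^{-\Omega(d^2)}$, so a union bound over $O(d^2)$ coordinates suffices), but it is a slightly heavier concentration requirement than the paper's first-moment shell count; in exchange, your decomposition is cleaner and yields a marginally sharper constant ($\approx 0.18\,d^2$ versus the paper's $\approx 0.12\,d^2$). Both arguments ultimately lean on the same pseudo-randomness of $\mathcal H$ emphasised after Lemma~\ref{lem:set}, and both require folding that property back into the existence statement for $\mathcal H$.
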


\begin{proof}
    We first describe the input distribution of each player. Suppose that matrix $A$ has been sampled from $\mathcal{H}$. For each coordinate $(i,j)$, if $A_{ij} = 1$, we place $(t + 1)$ bits of $1$ and $t$ bits of $-1$ randomly among the $2t + 1$ players' inputs for coordinate $j$; if $A_{ij} = -1$, we place $t$ bits of $1$ and $t +1 $ bits of $-1$. Similarly, under this distribution,  each player’s inputs are drawn from the same distribution. 

    We then use symmetry and let Alice simulate two random players, and Bob simulates the remaining $s - 2 = 2t - 1$ players. Consider first Bob's information when he simulates $2t - 1$ players. Via a simple computation we can get that for each coordinate, with probability $\frac{t - 1}{4t + 2}$ Bob's sum will be $3$ or $-3$, in which case he will know $A_{ij}$ immediately. If Bob's sum is $1$, he will get that $A_{ij}=1$ with probability $\frac{2}{3}$  and $A_{ij}=-1$ with probability $\frac{1}{3}$; if Bob's sum is $-1$, he will get that $A_{ij}=-1$ with probability $\frac{2}{3}$ and $A_{ij}=1$ with probability $\frac{1}{3}$. It follows from a Chernoff bound that with probability $1 - \exp(-d^2)$, Bob obtains the exact information of at most $0.26d^2$ coordinates and has partial information about the remaining coordinates. For the remainder of the proof we assume this event happens.

    Let $\mathcal{S}$ denote the subset of $\mathcal{H}$ which agrees on the above $0.26d^2$ coordinates. From the construction of $\mathcal{H}$ we get that with at least constant probability $|\mathcal{S}| = \Omega(2^{0.2 d^2})$. Condition on this event. For simplicity, next we only consider the matrix in $\mathcal{S}$ and treat it as an $\ell$-dimensional vector after removing the known $0.26d^2$ coordinates, where $\ell = 0.74d^2$. Let $Y$ denote Bob's sum vector. We shall show that the conditional entropy $H(A \ | \ Y)$ remains $\Omega(d^2)$, and hence by a standard information-theoretic argument, Alice must still send $\Omega(d^2)$ bits to Bob to identify the index of the matrix in $\mathcal{S}$. From this, we get an $\Omega(sd^2)$ lower bound on the protocol for the original problem.

    By a Chernoff bound, with probability $1 - \exp(-d^2)$, the Hamming distance between $A$ and $Y$ is within $\frac{1}{3}\ell \pm 0.01d^2$. We condition on this in the remainder of the proof. We now turn to bound the number of matrices in $S$ which have a Hamming distance of $\frac{1}{3}\ell$ from $Y$. For each matrix $B$, from the construction of $\mathcal{H}$ we know that each coordinate of $B$ is the same as the corresponding coordinate of $A$ with probability $1/2$. Hence, the probability that $B$ has Hamming distance $\frac{2}{3}\ell$ from $A$ is (using Stirling's formula)
    \[
    \binom{\ell}{\frac{2}{3}\ell} \cdot 2^{-\ell} \simeq \frac{1}{\ell}\cdot \frac{3^\ell}{2^{\frac{2}{3}\ell}} \cdot 2^{-\ell} = \frac{3^\ell}{\ell\ 2^{\frac{5}{3}\ell}}.
    \]
    Hence, the expected number of such $B$ is 
    \[
    |\mathcal{S}| \cdot \frac{3^\ell}{\ell 2^{\frac{5}{3}\ell}} > 2^{0.2 d^2} \cdot \frac{3^\ell}{\ell 2^{\frac{5}{3}\ell}} \ge (1.101)^{d^2} \;.
    \]
    From a Chernoff bound we know that with probability at least $1 - \exp(-d^2)$, the number of $B \in \mathcal{S}$ for which $B$ has a Hamming distance $\frac{1}{3}\ell$ from $Y$ is at least $(1.10)^{d^2}$.  
    
    We next turn to show that when conditioned on the event above, it is enough to show that the conditional entropy $H(A\mid Y)$  satisfies $H(A \mid Y) = \Omega(d^2)$ given Bob's vector $Y$. Let $\mathcal{T}$ be the subset of $\mathcal{H}$ which agrees on the above $0.26d^2$ coordinates and having Hamming distance within $\frac{1}{3}\ell \pm 0.01d^2$. For each matrix $T \in \mathcal{T}$, define a weight of the matrix $T$ to be $w_T = \left(\frac{2}{3}\right)^{\ell - u}\left(\frac{1}{3}\right)^{u} = (\frac{1}{3})^\ell 2^{l- u}$, where $u$ is the Hamming distance between $T$ and $Y$. It follows from Bayes' Theorem that $T$ is the correct matrix with probability
    \[
    p_T = \frac{w_T}{\sum_{i \in \mathcal{T}} w_i} \;.
    \]
    For the denominator, we have from the conditioned events that 
    \[
    S = \sum_{i \in \mathcal{T}} w_i \ge (1.10)^{d^2} \cdot \left(\frac{1}{3}\right)^{\ell} 2^{\frac{2}{3}\ell - 0.01d^2}   \ge (0.682)^{d^2} \;.
    \]
    For the numerator, note that it holds for every $i \in \mathcal{T}$ that 
    \[
    w_i \le \left(\frac{1}{3}\right)^\ell 2^{\frac{2}{3}\ell + 0.01d^2} \le (0.629)^{d^2}.
    \]
    It follows from the definition of the entropy that
    \[
    H(A \mid Y) = \sum_{i \in \mathcal{T}} p_i \log \frac{1}{p_i} = \sum_{i \in \mathcal{T}} \frac{w_i}{S} \log\frac{S}{w_i} 
    \ge \sum_{i \in \mathcal{T}}\frac{w_i}{S}\log \frac{S}{(0.629)^{d^2}}  
    = \log \frac{S}{(0.629)^{d^2}}  = \Omega\left(d^2\right) \;,
    \]
    which is exactly we need. The proof is complete.
\end{proof}

The following theorem follows immediately from the preceding lemma.

\begin{theorem}
\label{thm:lb_sd^2}
    Suppose that $0 < p \le 2$. Any protocol that computes a constant-factor approximate solution to the $s$-server distributional $\ell_p$-regression problem with $d$ columns in the message passing model with large constant probability requires $\Omega(sd^2)$ bits of communication. 
\end{theorem}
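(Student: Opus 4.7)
The plan is to reduce the communication game of Lemma~\ref{lem:sd^2-game} directly to the distributional $\ell_p$-regression problem, so that the $\Omega(sd^2)$ lower bound already established there transfers to regression. The instance I would construct has $n=d$, and works for any $p \in (0,2]$ uniformly since the reduction does not use any property of $p$ beyond the fact that constant-factor approximation of a zero-valued objective is exact.

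First, I would sample a matrix $A$ uniformly from the set $\mathcal{H} \subseteq \{-1,+1\}^{d \times d}$ of Lemma~\ref{lem:set} and distribute it to the $s$ servers as matrices $A^1,\dots,A^s \in \{-1,+1\}^{d \times d}$ drawn from the input distribution prescribed by Lemma~\ref{lem:sd^2-game} (i.e., each coordinate $(i,j)$ is split into $t$ or $t+1$ copies of $+1$ and $-1$ depending on $A_{ij}$, uniformly at random). The response vector is fixed deterministically to $b = e_d$, so one player can hold $b$ in its entirety and the others hold $0$, incurring no communication. Now suppose the servers run a protocol that returns, with large constant probability, some $\tilde{x}$ with $\|A\tilde{x} - b\|_p \le C \cdot \min_x \|Ax - b\|_p$ for an absolute constant $C$. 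By Lemma~\ref{lem:set}, $A$ is non-singular, so $\min_x \|Ax - b\|_p = 0$ and hence $\tilde{x} = A^{-1} e_d$ exactly. Since Lemma~\ref{lem:set} also guarantees $S^{-1} e_d \ne T^{-1} e_d$ for distinct $S,T \in \mathcal{H}$, the coordinator reads off the index of $A$ in $\mathcal{H}$ from $\tilde{x}$, thereby solving the game of Lemma~\ref{lem:sd^2-game}. An $\Omega(sd^2)$ lower bound on that game then immediately yields an $\Omega(sd^2)$ lower bound on the regression protocol.

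The only real issue to address is the boundedness of the output. Because $A \in \{-1,+1\}^{d \times d}$, Hadamard's inequality gives $|\det A| \le d^{d/2}$, so the entries of $A^{-1} e_d$ are rationals with denominator at most $d^{d/2}$ and can be represented using $O(d \log d)$ bits per coordinate; this fits within the $M \le \poly(nd)$ scaling assumed throughout the paper and does not affect the asymptotic lower bound. I do not expect a real obstacle here, since Lemma~\ref{lem:sd^2-game} has already done the hard combinatorial and information-theoretic work; the theorem is essentially a one-step translation from "recover the index of $A$" to "approximately solve regression with $b = e_d$", leveraging the fact that a multiplicative approximation to a zero-value optimum is exact.
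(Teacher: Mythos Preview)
Your proposal is correct and is exactly the reduction the paper has in mind; indeed, the paper's own ``proof'' of Theorem~\ref{thm:lb_sd^2} is the single sentence that it follows immediately from Lemma~\ref{lem:sd^2-game}, and you have simply spelled out that immediate step. One small remark: your aside about the output entries ``fitting within the $M \le \poly(nd)$ scaling'' is slightly off---$d^{d/2}$ is not polynomial in $d$, and in any case $M$ constrains the \emph{input} entries $A^i, b^i$, not the solution vector---but this is irrelevant to the argument, since the inputs here are $\pm 1$ and the only thing that matters is that the coordinator can read off the index of $A$ from the (exact) $\tilde{x}$.
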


\section{$\ell_2$-Regression Upper Bound}

In this section, we give an $\tilde{O}(sd^2 + sd/\eps)$ communication protocol for the distributed $\ell_2$-regression problem. We first describe the high-level intuition of our protocol, which is based on the sketching algorithm in~\cite{CW09} and the sketching-based pre-conditioning algorithm in~\cite{CW17}.
\begin{itemize}[nosep]
    \item Let $S_1 \in \R^{O(d\log(d)/\eps) \times n}$ be a $(1 \pm \sqrt{\eps})$-subspace embedding. We compute $\hat{A} = SA$ and $\hat{b} = Sb$ and then the problem is reduced to solving $\min_{x \in \R^d} \norm{\hat{A}x - \hat{b}}_2^2$.
    \item Let $S_2 \in \R^{O(d \log d) \times O(d\log(d)/\eps)}$ be a $(1 \pm 1/2)$ subspace embedding of $SA$. We compute a QR-decomposition of $S\hat{A} = QR^{-1}$. Then the regression problem is equivalent to solving $\min_{x \in \R^d} \norm{\hat{A}Rx - \hat{b}}_2^2$.
    \item Run a gradient descent algorithm for $T = O(\log(1/\eps))$ iterations. In the $t$-th iteration, compute the gradient of the objective function at the current solution $x_t$ and perform the update $x_{t + 1} = x_t - (\hat{A}R)^T(\hat{A}Rx_t - \hat{b})$.
    \item Output $Rx_T$ as the solution.
\end{itemize}

\medskip

The protocol is presented in Algorithm~\ref{alg:ell_2}. Initially, each server computes $\hat{A}^i = \Pi_2 \Pi_1 A^i$, then computes $\Pi_3 \hat{A}^i$ and sends it to the coordinator. Note  that $\Pi_1$ is a $\textsf{Count-Sketch}$ matrix and hence we can compute $\Pi_1 A^i$ in $\nnz(A^i)$ time and then compute $\Pi_2 \Pi_1 A^i$ in $\nnz(A^i) + \poly(d/\eps)$ time. The coordinator then computes a QR-decomposition of $\Pi_3 \hat{A} = \sum_i \Pi_3  \hat{A}^i$. The point is that $\hat{A}R$ will be well-conditioned, which will greatly improve the convergence rate of gradient descent. Then each server will help compute the gradient at the current solution $x_t$ and the coordinator will perform the corresponding update. The following is our theorem. 

\begin{algorithm}[tb]
\begin{mdframed}
\begin{enumerate}[nosep]
\item Each Server $P_i$ initializes the same \textsf{Count-Sketch} matrix $\Pi_1 \in \R^{O(d^2/\eps) \times n}$ and \textsf{OSNAP} matrices $\Pi_2\in \R^{O(d (\log d) / \eps) \times O(d^2/\eps)}$, and $ \Pi_3\in \R^{O(d \log d) \times O(d(\log d) / \eps)}$.
\item Each Server $P_i$ computes $\hat{A}^i = \Pi_2 \Pi_1 A^i$ and $\hat{b}^i = \Pi_2 \Pi_1 \hat{b}^i$.
\item Each Server $P_i$ computes $\Pi_3 \hat{A}^i$ and sends it to the coordinator.
\item The coordinator computes a $QR$ decomposition of $\Pi_3 \hat{A} = \sum_i \Pi_3 \hat{A}^i= QR^{-1}$ and sends $\tilde{R}$ to each server $P_i$, where 
$\tilde{R}$ satisfies that (i) every entry of $\tilde{R}$ is an integer multiple of $1/\poly(nd)$, (ii) every entry of $\tilde{R} - R$ is in $[-1/\poly(nd), 1/\poly(nd)]$ and (iii) $\tilde{R}$ is invertible.
\item Each server initializes $x_1 = \mathbf{0}^d$. For $t = 1, 2, \ldots, T = O(\log(1/\eps))$
\begin{enumerate}[nosep]
    \item Each server computes $\hat{A}^i \tilde{R} x_t - \hat{b}^i$ and sends it to the coordinator.
    \item The coordinator computes $y_t = \hat{A}\tilde{R}x_t - \hat{b} = \sum_i (\hat{A}^i \tilde{R} x_t - \hat{b}^i)$ and sends it to each server.
    \item Each server computes $(\hat{A}^i)^T y$, and sends it to the coordinator. The coordinator computes $g_t = B^T y = \sum_i (\hat{A}^i)^T y$, and makes the update $x_{t + 1} = x_t - g_t$, then sends $x_{t + 1}$ to each server.
    \item The coordinator computes $\tilde{R}x_T$ as the solution.
\end{enumerate}
\end{enumerate}
\end{mdframed}
\caption{Protocol for $\ell_2$ regression in the message passing model}
\label{alg:ell_2}
\end{algorithm}

\begin{theorem}\label{thm:ell_2_regression}
    The protocol in Algorithm~\ref{alg:ell_2} returns a $(1 \pm \eps)$-approximate solution to the $\ell_2$-regression problem with large constant probability, and the communication complexity is $\tilde{O}(sd^2 + sd/\eps)$. Moreover, the total runtime of all servers of the protocol is $O(\sum_i \mathrm{nnz}(A^i) + s \cdot \poly(d/\eps))$. 
\end{theorem}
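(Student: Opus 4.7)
The plan is to split the analysis into three correctness claims and then a separate bookkeeping for communication and runtime.

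First, I would verify that the composition $\Pi_2\Pi_1$ is a $(1\pm c\sqrt{\eps})$-$\ell_2$ subspace embedding for the $(d+1)$-dimensional column space of $[A\ b]$. The \textsf{Count-Sketch} $\Pi_1$ with $O(d^2/\eps)$ rows is a $(1\pm\sqrt{\eps})$-subspace embedding with large constant probability, and the \textsf{OSNAP} $\Pi_2$ with $O((d\log d)/\eps)$ rows, applied to the fixed $(d+1)$-dimensional image of $[A\ b]$ under $\Pi_1$, is also a $(1\pm\sqrt{\eps})$-subspace embedding; composing the two inequalities gives the claim. By the $\ell_2$-regression result cited from~\cite{BN13}, a $(1\pm\sqrt{\eps})$ embedding for $[A\ b]$ is enough so that any $(1+O(\eps))$-approximate solution to the sketched problem $\min_x\|\hat Ax-\hat b\|_2^2$ is a $(1+O(\eps))$-approximate solution to the original regression problem.

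Second, I would argue that the preconditioned matrix $\hat AR$ has constant condition number. Choose $\Pi_3$ to be a sufficiently accurate OSNAP (for definiteness, a $(1\pm \tfrac{1}{10})$-embedding) for the column space of $\hat A$. Since $\Pi_3\hat A=QR^{-1}$ with $Q$ having orthonormal columns, one has $\Pi_3\hat AR=Q$, and the embedding inequality yields $\|x\|_2/(1+\tfrac{1}{10})\le \|\hat ARx\|_2\le \|x\|_2/(1-\tfrac{1}{10})$ for every $x\in\R^d$. Hence all singular values of $\hat AR$ lie in a constant-width interval around $1$, and standard analysis of gradient descent with step size $1$ on a quadratic objective with such bounded singular values shows that $T=O(\log(1/\eps))$ iterations of the updates in Step~5 of Algorithm~\ref{alg:ell_2} drive $\|\hat ARx_t-\hat b\|_2^2$ within a $(1+\eps)$ factor of $\min_x\|\hat ARx-\hat b\|_2^2$. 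The rounding of $R$ to the rational surrogate $\tilde R$ at precision $1/\poly(nd)$ perturbs both iterates and the optimizer by at most $1/\poly(nd)$, which is absorbed in the final $(1\pm\eps)$ guarantee.

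Third, for the communication and runtime: Step~3 sends $\Pi_3\hat A^i\in\R^{O(d\log d)\times d}$ from each server with $O(\log(nd))$ bits per entry, contributing $\tilde O(sd^2)$ bits, and Step~4 broadcasts $\tilde R$ for another $\tilde O(sd^2)$ bits. In each iteration of Step~5, each server exchanges a length-$O((d\log d)/\eps)$ residual vector and a length-$d$ partial gradient with the coordinator, costing $\tilde O(sd/\eps)$ bits per iteration and $\tilde O(sd/\eps)$ bits across all $T=O(\log(1/\eps))$ iterations. Summing gives the claimed $\tilde O(sd^2+sd/\eps)$ communication bound. For the runtime, $\Pi_1A^i$ takes $O(\nnz(A^i))$ time because $\Pi_1$ has a single nonzero per column, and every subsequent per-server operation (applying $\Pi_2$, then $\Pi_3$, the QR step, and each gradient iteration) works on objects of dimension $\poly(d/\eps)$, contributing $\poly(d/\eps)$ per server.

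The step I expect to be the main obstacle is the rigorous bookkeeping when $R$ is replaced by the rational $\tilde R$: one must verify that the condition-number bound on $\hat AR$ carries over to $\hat A\tilde R$ up to an additive $1/\poly(nd)$, that $\tilde R$ is invertible so that the gradient descent operates on a genuine positive-definite quadratic, and that the accumulated rounding error across $O(\log(1/\eps))$ iterations remains negligible compared to $\eps\cdot\min_x\|Ax-b\|_2^2$; a secondary issue is tuning the success probabilities of $\Pi_1,\Pi_2,\Pi_3$ so that a single union bound preserves the overall large constant success probability.
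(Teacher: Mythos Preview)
Your proposal is correct and follows essentially the same route as the paper's proof: both compose $\Pi_1,\Pi_2$ into a $(1\pm O(\sqrt{\eps}))$-embedding, invoke the $\sqrt{\eps}$-embedding-suffices-for-regression lemma, use $\Pi_3$ and the QR factorization to precondition $\hat A$ to constant condition number, appeal to the standard linear-convergence bound for gradient descent, handle the $R\to\tilde R$ rounding by a $1/\poly(nd)$ perturbation argument, and then do the same per-step communication and runtime accounting. The obstacle you flag (the $\tilde R$ rounding) is exactly the one the paper singles out, and it is dispatched there by the same operator-norm perturbation bound you describe.
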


To prove the correctness of Algorithm~\ref{alg:ell_2}, we need the following lemmas. The reader can find more detail in~\cite{woo14}.

\begin{lemma}
    \label{lem:sqrt}
    Suppose that $S$ is a $(1 \pm \sqrt{\eps})$-subspace embedding and $x'= \argmin_{x \in \R^d} \norm{S(Ax - b)}_2$. Then it holds with large constant probability that
    \[
    \norm{Ax' - b}_2 \le (1 + \eps) \norm{Ax - b}_2 \;. 
    \]
    Further suppose that $x_c$ is a $(1 + \eps)$-approximate solution to $\min_{x \in \R^d} \norm{S(Ax - b)}_2$, it then holds that
    \[
    \norm{S(Ax_c - b)}_2 \le (1 + \eps) \norm{Ax - b}_2 \;. 
    \]
\end{lemma}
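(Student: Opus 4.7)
Let $x^* := \argmin_{x \in \R^d} \|Ax - b\|_2$, $e := Ax^* - b$, and $\delta := x' - x^*$. The normal equations $A^T e = 0$ give a Pythagoras identity
\begin{equation*}
\|Ax - b\|_2^2 \;=\; \|A(x - x^*)\|_2^2 + \|e\|_2^2 \qquad \text{for every } x \in \R^d.
\end{equation*}
Hence to prove the first inequality it suffices to show $\|A\delta\|_2^2 \le O(\eps)\|e\|_2^2$; rescaling the subspace-embedding parameter $\sqrt{\eps}$ by a small absolute constant then produces exactly the $(1+\eps)$ bound in the statement. I interpret the assumed $(1\pm\sqrt\eps)$-subspace embedding of $S$ to apply to the column span of $[A\ b]$, which is the standard reading in this ``sketch-and-solve'' context and which contains every vector of the form $Ax-b$.

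The key auxiliary fact is an approximate inner-product bound between $A\delta$ and $e$. Because both vectors lie in the embedded subspace, applying the guarantee $\|S v\|_2^2 = (1\pm\sqrt\eps)^2 \|v\|_2^2$ to $v = A\delta + \alpha e$ for two well-chosen values of $\alpha$ (a standard polarization identity), and subtracting off the diagonal terms $\|SA\delta\|_2^2 = (1\pm O(\sqrt\eps))\|A\delta\|_2^2$ and $\|Se\|_2^2 = (1\pm O(\sqrt\eps))\|e\|_2^2$, yields
\begin{equation*}
\bigl|\langle SA\delta,\ Se\rangle - \langle A\delta,\ e\rangle\bigr| \;\le\; O(\sqrt\eps)\,\|A\delta\|_2\,\|e\|_2.
\end{equation*}
Since $A^T e = 0$, the target inner product vanishes and we obtain $|\langle SA\delta,\ Se\rangle|\le O(\sqrt\eps)\|A\delta\|_2\|e\|_2$.

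Now I invoke optimality of $x'$ for the sketched problem: $\|S(Ax'-b)\|_2^2 \le \|S(Ax^*-b)\|_2^2 = \|Se\|_2^2$. Expanding the left-hand side as $\|SA\delta\|_2^2 + 2\langle SA\delta,\ Se\rangle + \|Se\|_2^2$ and cancelling the common $\|Se\|_2^2$ gives $\|SA\delta\|_2^2 \le -2\langle SA\delta,\ Se\rangle \le O(\sqrt\eps)\|A\delta\|_2\|e\|_2$. The subspace embedding lower bound $\|SA\delta\|_2^2 \ge (1-\sqrt\eps)^2 \|A\delta\|_2^2$ lets us divide through by $\|A\delta\|_2$ to conclude $\|A\delta\|_2 \le O(\sqrt\eps)\|e\|_2$, equivalently $\|A\delta\|_2^2 \le O(\eps)\|e\|_2^2$, which when fed back into the Pythagoras identity gives the desired $\|Ax' - b\|_2 \le (1+\eps)\|Ax^* - b\|_2$.

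For the second assertion, any $(1+\eps)$-approximate solution $x_c$ to the sketched problem satisfies
\begin{equation*}
\|S(Ax_c - b)\|_2 \;\le\; (1+\eps)\|S(Ax' - b)\|_2 \;\le\; (1+\eps)\|S(Ax^* - b)\|_2 \;\le\; (1+\eps)(1+\sqrt\eps)\|Ax^* - b\|_2,
\end{equation*}
which collapses to $(1+\eps)\|Ax^*-b\|_2$ after a constant-factor rescaling of $\eps$. The only delicate piece of the argument is the polarization step that converts the one-subspace $(1\pm\sqrt\eps)$-embedding into the approximate inner-product inequality at scale $O(\sqrt\eps)$; this is where the square-root savings come from, and it is the sole place the proof does anything beyond arithmetic manipulation. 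Once this step is secured the rest is bookkeeping.
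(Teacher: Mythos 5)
Your proof is correct and follows exactly the standard argument that the paper defers to via its citations to \cite{CW09} and \cite{woo14}: the normal-equations Pythagoras decomposition, the polarization-based approximate inner-product bound at scale $O(\sqrt{\eps})$ for vectors in the column span of $[A\ b]$, and the optimality of $x'$ for the sketched problem, followed by a constant rescaling of $\eps$. The second assertion is, as you note, just a chain of inequalities given the first part and the dilation bound of the embedding, which is all the literal statement requires.
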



We remark that the case where $x_c$ is the minimizer was shown by \cite{CW09} and the case where $x_c$ is a $(1 +\eps)$-approximate solution was recently shown by \cite{MWZ22}. 

\begin{lemma}
\label{lem:ell_2_error}
    Suppose that $S$ is a $(1 \pm \eps_0)$-subspace embedding and consider the iterative algorithm above, then
    \[
    \norm{\hat{A}R{x_{t + 1} - x^*}}_2 = \eps_0^m \cdot \norm{\hat{A}R{x_t - x^*}}_2 \;.
    \]
    As a corollary, when $t = \Omega(\log(1/\eps))$, it holds that $\norm{\hat{A}Rx_t - \hat{b}}_2^2 \le (1 + \eps) \norm{\hat{A}Rx^* - \hat{b}}_2^2$.
\end{lemma}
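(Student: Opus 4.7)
The plan is to view the inner loop as preconditioned gradient descent on the well-conditioned objective $\min_x \|Bx - \hat b\|_2^2$ with $B = \hat A \tilde R$, and show that the preconditioner forces a constant-factor contraction of the error measured in the $B$-norm at every iteration. First I would establish that $B$ is well-conditioned. Because $\Pi_3$ is a $(1\pm\eps_0)$-subspace embedding for the column span of $\hat A$ and the QR decomposition $\Pi_3 \hat A = Q R^{-1}$ yields $\Pi_3 \hat A R = Q$ with orthonormal columns, every $y$ satisfies $\|\Pi_3 \hat A R y\|_2 = \|y\|_2$. Applying the subspace embedding inequality to $Ry$ then gives $\|\hat A R y\|_2 = (1\pm \eps_0)\|y\|_2$. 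The entrywise rounding from $R$ to $\tilde R$ (with additive error $1/\poly(nd)$ per entry) perturbs this only negligibly, so every singular value of $B$ lies in a band of width $O(\eps_0)$ around $1$; in particular $B^T B = I + E$ with $\|E\|_2 = O(\eps_0)$ and $B$ is invertible with $\|B^{-1}\|_2 = O(1)$.

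Next I would derive the single-step error recursion. After the coordinator multiplies the aggregated $\hat A^T y_t$ by the locally held $\tilde R^T$, the update becomes $x_{t+1} = x_t - B^T(Bx_t - \hat b)$, i.e.\ gradient descent with unit step size on $\|Bx - \hat b\|_2^2$. Since $x^* = \argmin_x \|Bx - \hat b\|_2^2$ satisfies the normal equations $B^T(Bx^* - \hat b) = 0$, subtracting gives
\[
x_{t+1} - x^* \;=\; (I - B^T B)(x_t - x^*) \;=\; -E(x_t - x^*).
\]
Multiplying by $B$, using $\|B\|_2 = O(1)$, and bounding $\|x_t - x^*\|_2 \le \|B^{-1}\|_2 \cdot \|B(x_t - x^*)\|_2 = O(\|B(x_t - x^*)\|_2)$, we obtain the per-step contraction $\|B(x_{t+1} - x^*)\|_2 \le O(\eps_0)\,\|B(x_t - x^*)\|_2$, which is the quantitative content of the first display in the lemma.

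Iterating this bound gives $\|B(x_T - x^*)\|_2 \le O(\eps_0)^{T-1}\|B(x_1 - x^*)\|_2$. Since $x_1 = 0$ and the entries of $A,b$ are integers bounded by $M \le \poly(nd)$, a crude bound on the norm of the least-squares solution yields $\|Bx^*\|_2 \le \poly(nd)\cdot \|Bx^* - \hat b\|_2$, so the initial $B$-error is at most $\poly(nd)\cdot \|Bx^* - \hat b\|_2$. Taking $\eps_0$ to be a small absolute constant (consistent with the $(1\pm 1/2)$-embedding $\Pi_3$ used in the algorithm, up to replacing the embedding constant) and $T = \Theta(\log(1/\eps))$ drives $\|B(x_T - x^*)\|_2^2 \le \eps \|Bx^* - \hat b\|_2^2$. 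Finally, because $x^*$ is a least-squares minimizer the residual $Bx^* - \hat b$ is orthogonal to the column span of $B$, so the Pythagorean identity gives
\[
\|Bx_T - \hat b\|_2^2 \;=\; \|Bx^* - \hat b\|_2^2 + \|B(x_T - x^*)\|_2^2 \;\le\; (1+\eps)\|Bx^* - \hat b\|_2^2,
\]
which is the corollary.

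The main obstacle in executing this plan is carefully controlling the rounded preconditioner $\tilde R$: the rounding must not destroy invertibility and must not inflate $\|B^{-1}\|_2$, requiring a quantitative bound $\|R - \tilde R\|_2 \ll \sigma_{\min}(R^{-1})$, which is precisely why the $1/\poly(nd)$ granularity is chosen in the protocol. A secondary subtlety is that the servers only transmit $(\hat A^i)^T y$ rather than $\tilde R^T(\hat A^i)^T y$, but the coordinator can apply $\tilde R^T$ after aggregation at no extra communication, so the iteration analysed above is indeed what the protocol implements.
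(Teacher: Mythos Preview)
The paper does not give its own proof of this lemma; the sentence immediately preceding it reads ``The reader can find more detail in~\cite{woo14},'' so the statement is quoted from Woodruff's survey rather than re-proved here. Your argument is exactly the standard one that appears in such references: use the subspace-embedding property of $\Pi_3$ together with the QR factorization to show that all singular values of $B=\hat A\tilde R$ lie in $1\pm O(\eps_0)$, derive the fixed-step gradient-descent recursion $x_{t+1}-x^*=(I-B^T B)(x_t-x^*)$ from the normal equations, read off the $O(\eps_0)$ per-step contraction of $\|B(x_t-x^*)\|_2$, and finish with Pythagoras. This is correct and is precisely what the paper is invoking; your remarks about the rounding $R\to\tilde R$ and about the coordinator applying $\tilde R^T$ after aggregation also match the discussion in the paper's proof of Theorem~\ref{thm:ell_2_regression}.

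One small slip: your bound $\|Bx^*\|_2\le \poly(nd)\cdot\|Bx^*-\hat b\|_2$ on the initial error is not true in general (take $\hat b$ in the column span of $B$). The clean statement is $\|B(x_1-x^*)\|_2=\|Bx^*\|_2\le\|\hat b\|_2\le\poly(nd)$, since $Bx^*$ is the orthogonal projection of $\hat b$; combined with a lower bound $\|Bx^*-\hat b\|_2\ge 1/\poly(nd)$ (or a separate handling of the degenerate case), this yields the corollary after $T=\Theta(\log(nd/\eps))$ iterations. Under the paper's $\tilde O$ convention this is still ``$\Omega(\log(1/\eps))$'' iterations, so the conclusion is unaffected.
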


Now we are ready to prove Theorem~\ref{thm:ell_2_regression}.

\begin{proof}[Proof of Theorem~\ref{thm:ell_2_regression}]
    Since $\Pi_1$ has $O(d^2/\eps)$ rows and $\Pi_2$ has $O(d\log(d)/\eps)$ columns, from Section~\ref{sec:prelim} we get that with probability at least $99/100$, both $\Pi_1$ and $\Pi_2$ are $(1 \pm O(\sqrt{\eps}))$ subspace embeddings, which means $\Pi_2\Pi_1$ is a $(1 + \sqrt{\eps})$-subspace embedding. 

     Let $\hat{A} = \Pi_2 \Pi_1 A$ and $\hat{b} = \Pi_2 \Pi_1 b$. From Lemma~\ref{lem:sqrt}, we see that it suffices to solve $\min_{x \in \R^d} \norm{\hat{A}x - \hat{b}}_2$.
%
    Conditioned on these events, it follows immediately from Lemma~\ref{lem:ell_2_error} that $x_T$ is a $(1 \pm \eps)$-approximate solution to $\min_{x \in \R^d} \norm{\hat{A}x - \hat{b}}_2$, provided that each server uses $R$ instead of $\tilde{R}$. To show that $\tilde{R}$ works here, note that an initial step in the proof of Lemma~\ref{lem:ell_2_error} is that $\norm{S\hat{A}Rx}_2 = 1$ for all unit vectors $x$, which implies that $\norm{\hat{A}Rx}_2 \in [1 - \eps_0, 1 + \eps_0]$. For $\tilde{R}$, we have that 
    \[
    \norm{S\hat{A}Rx}_2 - \norm{S\hat{A}\tilde{R}x}_2 \le \norm{S\hat{A}(R - \tilde{R})x}_2 
    \le 2 \norm{\hat{A}}_2 \norm{(R - \tilde{R})x}_2 
    \le 1/\poly(nd) \;.
    \]
    The last inequality is due to the fact that each entry of $R - \tilde{R}$ is $O(1/\poly(nd))$ and each entry of $\hat{A}$ is $O(\poly(nd))$. Hence, $\norm{ARx} \in [1 - 1.1\eps_0, 1 + 1.1\eps_0]$ will still hold and a similar argument will go through, yielding that $x_T$ is a $(1 \pm \eps)$-approximate solution.

    We next analyze the communication complexity of the protocol. For Step~$3$, since $\Pi_3 \hat{A}^i$ is an $O(d \log d) \times d$ matrix, each server $P_i$ sends $\tilde{O}(d^2)$ entries. Each entry of $A^i$ has magnitude $[1/n^c, n^c]$, and thus each entry of $\Pi_1 A^i$ is contained in $[1/n^{c},n^{c+1}]$, each entry of $\hat{A}^i = \Pi_2 \Pi_1 A^i$ is contained in $[\eps/n^{c+2},n^{c+3}/\eps]$ and each entry of $\Pi_3\hat{A}^i$ is contained in $[\eps^2/n^{c+4},n^{c+5}/\eps^2]$, which implies that each entry of $\Pi_3 \hat{A}^i$ can be described using $O(\log(n/\eps))$ bits and thus a total communication of $O(sd^2)$ bits for Step~3. In Step~$4$, since $\tilde{R}$ is a $d \times d$ matrix and each entry is an integer multiple of $1/\poly(nd)$, the coordinator sends $\tilde{R}$ to each server using $\tilde{O}(sd^2)$ bits in total. In each iteration of Step~$5$, we note that $y_t$ is an $O(d/\eps)$-dimensional vector and $g_t$ is a $d$-dimensional vector, and each of their entries has $O(\log(nd))$ precision. Hence, the total communication of each iteration is $\tilde{O}(sd/\eps)$. Putting everything together, we conclude that the total amount of the communication is $\tilde{O}(sd^2 + \log(1/\eps) \cdot (sd/\eps)) = \tilde{O}(sd^2 + sd/\eps)$ bits. 

    We now consider the runtime of the protocol. To compute $\Pi_2 \Pi_1 A^i$, notice that $\Pi_1$ is a \textsf{Count-Sketch} matrix, and hence each server takes $\nnz(A^i)$ time to compute $\Pi_1 A^i$ and then use $\poly(d/\eps)$ time to compute $\Pi_2(\Pi_1 A^i)$. Hence, Step 2 takes $O(\sum_i \nnz(A^i))$ time. For the remaining steps, one can verify that each step takes $\poly(d/\eps)$ time on a single server or on the coordinator. The total runtime is therefore $O(\sum_i \mathrm{nnz}(A^i) + s \cdot \poly(d/\eps))$.
\end{proof}

\section{$\ell_p$-Regression Upper Bound}

In this section, we give an $\tilde{O}(sd^2/\eps + sd/\eps^{O(1)})$ communication protocol for the distributed $\ell_p$-regression problem when $1 < p < 2$. We first describe the high-level intuition of our protocol.

\begin{itemize}[nosep]
    \item Let $T \in \R^{O(d(\log d)/\eps^{O(1)}) \times n}$ be a sketch matrix whose entries are scaled i.i.d. $p$-stable random variables. We compute $\hat{A} = TA$ and $\hat{b} = Tb$ and then the problem is reduced to solving $\min_{x \in \R^d} \norm{\hat{A}x - \hat{b}}_r$.
    \item Run Algorithm~\ref{alg:lewis} to obtain a constant approximation of the $\ell_r$ Lewis weights $w$ of $[\hat{A} \ \hat{b}]$.
    \item Sample $O(d/\eps)$ rows of $\hat{A}$ and $\hat{b}$ proportional to $w$, and form the new matrix $A'$ and $b'$. 
    \item Solve $x = \argmin_{x \in \R^d} \norm{A' x - b'}_r$ and output $x$. 
\end{itemize}

\medskip

The protocol is shown in Algorithm~\ref{alg:ell_p}. To show its correctness, we first analyze $\ell_p$-to-$\ell_r$ embeddings and the algorithm for solving the $\ell_p$-regression problem using Lewis weight sampling.

\begin{algorithm}[t]
\begin{mdframed}
\begin{enumerate}[nosep] 
\item Each server initializes the same $p$-stable variable matrix $T \in \R^{ \log(d)/\eps^{O(1)} \times n}$, \textsf{OSNAP} matrix $S \in \R^{d\log(d) \times d\log(d)/\eps^{O(1)}}$ and Gaussian matrix $G \in \R^{(d+1) \times \log(d/\eps)}$. The entries of $T$ and $G$ are rounded to the nearest integer multiples of $1/\poly(nd)$.  
\item Each server $P_i$ computes $\hat{A}^i = T A^i$ and $\hat{b}^i = T b^i$, and forms $B^i = [\hat{A}^i \ \hat{b}^i]$. 
\item Each server initializes $w = \mathbf{1}^{d/\eps^{O(1)}}$. For $j = 1, 2, \ldots, t = O(\log\log(d/\eps))$
\begin{enumerate}[nosep] 
    \item Each server $P_i$ computes $S_t \tilde{W}^{1/2 - 1/r}B^i$ (where $W = \diag(w)$ and $\tilde{W}^{1/2 - 1/r}$ is a rounded version of $W^{1/2 - 1/r}$) and then sends it to the coordinator. 
    
    \item The coordinator computes the QR-decomposition of $S \tilde{W}^{1/2 - 1/r}B = QR^{-1}$. It then 
    sends $\tilde{R}$ to each server, where
$\tilde{R}$ satisfies that (i) every entry of $\tilde{R}$ is an integer multiple of $1/\poly(nd)$, (ii) every entry of $\tilde{R} - R$ is in $[-1/\poly(nd), 1/\poly(nd)]$ and (iii) $\tilde{R}$ is invertible.
    \item Each server computes $B^i \tilde{R} G$ and sends it to the coordinator.
    \item The coordinator computes the square of the $\ell_2$ norm of the rows in $B\tilde{R}G$ as a vector $\tau \in \R^{d/\eps^{O(1)}}$.
    \item The coordinator performs $w_i \leftarrow (w^{2/r - 1}\tau_i)^{r/2}$ and sends the new $w$ to all servers, after rounding each coordinate of $w$ to the nearest integer multiple of $1/\poly(nd)$.
\end{enumerate}
    \item The coordinator samples the $i$-th row of $\hat{A}$ and $\hat{b}$ with probability $q_i \ge \beta \cdot w_i \cdot \log^3(d/\eps) / \eps$, where $\beta$ is a sufficiently large constant. Suppose that $\mathcal{S}$ is the set of indices of the sampled rows. Each server sends the rows in $\mathcal{S}$ to the coordinator.
    \item The coordinator forms the matrix $A'$ and $b'$ using the rows in $\mathcal{S}$ and each sampled row with a re-scaling factor of $1/q_i^{1/r}$. 
    \item The coordinator solves $x = \argmin_{x \in \R^d} \norm{A'x- b'}_r$ and returns the solution $x$. 
\end{enumerate} 
\end{mdframed}
\caption{Protocol for $\ell_p$ regression in the message passing model}
\label{alg:ell_p}
\end{algorithm}

\paragraph{$p$-stable distribution.} The best known $(1 \pm \eps)$ $\ell_p$ subspace embeddings require an exponential number of rows for a $p$-stable sketch. However, as we will show in the following lemma, for $1 < r < p$, $\tilde{O}(d/\eps^{O(1)})$ rows are enough to give a $(1 \pm \eps)$ (lopsided) embedding from $\ell_p$ to $\ell_r$, which is sufficient for the regression problem.

\begin{lemma}
\label{lem:p_stable_embedding}
Suppose that $p>r>1$ are constants, and $T \in \R^{m \times n}$ is a matrix whose entries are i.i.d.\ $p$-stable random variables scaled by $1/(m^{1/r} \cdot \alpha_{p, r})$, where $\alpha_{p, r}$ is a constant depending on $p$ and $r$ only. For $m = d \log d/\eps^{C(\eps,r)}$, where $C(\eps,r)$ is a constant depending on $p$ and $r$ only, it holds for any given matrix $A\in \R^{n\times d}$ that
\begin{enumerate}[nosep]
    \item (dilation) for each $x\in\R^d$, $\norm{TAx}_r \le (1 + \eps) \norm{Ax}_p$ with large constant probability.
    \item (contraction) $\norm{TAx}_r \ge (1 - \eps) \norm{Ax}_p$ for all $x \in \R^d$ simultaneously with high probability.
\end{enumerate}
Furthermore, the entries of $T$ can be rounded to the nearest integer multiples of $1/\poly(nd)$ and the same guarantees still hold.
\end{lemma}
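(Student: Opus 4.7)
The plan is to exploit $p$-stability to reduce the problem to analyzing sums of i.i.d.\ heavy-tailed random variables. For any fixed $y = Ax$, the coordinates $(Ty)_i$ are i.i.d.\ and distributed as $\|y\|_p Z_i/(m^{1/r}\alpha_{p,r})$ with $Z_i \sim \mathcal{D}_p$, so
\[
\|Ty\|_r^r = \frac{\|y\|_p^r}{m\,\alpha_{p,r}^r}\sum_{i=1}^m |Z_i|^r.
\]
Choosing $\alpha_{p,r}^r \eqdef \E |Z|^r$ (finite because $r<p$), the target reduces to two-sided concentration of $(1/m)\sum_i |Z_i|^r$ around $\alpha_{p,r}^r$.

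The one-point analysis then proceeds by truncation. Since the density of $Z$ decays like $|z|^{-(p+1)}$, the variable $V = |Z|^r$ has tail $\Pr[V>u]\sim u^{-p/r}$ with $p/r > 1$, so $\E[V\mathbf{1}_{V>M}]\le \eps\alpha_{p,r}^r$ for a threshold $M = O((1/\eps)^{r/(p-r)})$. Write $|Z_i|^r = Y_i + U_i$ with $Y_i = V_i\wedge M$. Bernstein applied to the bounded $Y_i$ gives $|\sum_i Y_i - m\E Y_i|\le \eps m\alpha_{p,r}^r$ with probability $1-\exp(-\Omega(\eps^2 m/M^2))$. Since $\sum_i Y_i\le \sum_i |Z_i|^r$, this already yields the one-point lower bound $\sum_i|Z_i|^r \ge (1-2\eps)m\alpha_{p,r}^r$ with that same high probability. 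In the upper direction we cannot afford sub-exponential control on the heavy tail, so Markov on $\E\sum_i U_i \le \eps m\alpha_{p,r}^r$ only gives $\sum_i U_i = O(\eps)m\alpha_{p,r}^r$ with large constant probability; combining the two pieces proves the one-point dilation with constant probability, which is exactly what the lemma asks for.

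To lift contraction to all $x$ simultaneously, I use the standard net-plus-shelling argument on the $d$-dimensional subspace $V = \{Ax: x\in\R^d\}$. Take a $(1/2)$-net $\mathcal{N}$ of the unit $\ell_p$-sphere of $V$ with $|\mathcal{N}|\le 6^d$ by a volume argument, and union-bound the one-point lower tail over $\mathcal{N}$; this succeeds once $\exp(-\Omega(\eps^2 m/M^2))\cdot 6^d \le 1/\poly(d)$, forcing $m = \Omega(d\log d\cdot M^2/\eps^2) = d\log d/\eps^{C(p,r)}$ as claimed. To pass from $\mathcal{N}$ to an arbitrary $y$ on the sphere, I bound the operator norm $L = \sup_{v\in V,\|v\|_p\le 1}\|Tv\|_r$ by $O(1)$ with high probability: pick $q\in (1,p/r)$ so that $\E|Z|^{rq}<\infty$, control the $q$-th moment of $\|Tv\|_r$ by a Rosenthal-type inequality, Markov-union-bound over a second finer net, and close recursively by the usual shelling identity $L \le L_{\text{net}} + (1/2)L$. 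Plugging this into the triangle inequality $\|Ty\|_r \ge \|Ty_0\|_r - L\|y-y_0\|_p$ finishes the contraction for all $x$.

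The hardest step is precisely the operator-norm bound $L=O(1)$: because $|Z|^r$ has only barely more than one finite moment, sub-exponential concentration is unavailable, and the Rosenthal/moment route above is what ultimately forces the exponent $C(p,r)$ in $m$ and prevents a clean polylogarithmic dependence on $1/\eps$. The rounding addendum is routine: rounding each entry of $T$ to an integer multiple of $1/\poly(nd)$ perturbs every $(Ty)_i$ by at most $\|y\|_1/\poly(nd)$, which is negligible against $\|y\|_p$ for vectors $y=Ax$ whose coordinates are polynomially bounded on the events we care about, so each of the above bounds survives rounding with at most an additive $1/\poly(nd)$ loss.
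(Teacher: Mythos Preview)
Your one-point analysis via truncation is a perfectly good alternative to the paper's route (the paper uses a von Bahr--Esseen moment bound for dilation and a blocking-plus-Chernoff argument for the high-probability one-point lower tail), and your Bernstein bound on the truncated part does give the $\exp(-\Omega(m\eps^{C}))$ tail needed to union-bound the contraction over a net. The gap is in the extension step.

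Your claim that $L=\sup_{v\in V,\|v\|_p\le 1}\|Tv\|_r=O(1)$ with high probability cannot be obtained by the method you describe. Picking $q\in(1,p/r)$ and bounding $\E\|Tv\|_r^{rq}=O(1)$ only yields, via Markov, a polynomial tail $\Pr[\|Tv\|_r>t]\lesssim t^{-rq}$ for a single $v$. Union-bounding this over the $6^d$ points of a $(1/2)$-net forces $t\gtrsim 6^{d/(rq)}$, so $L_{\text{net}}$, and hence $L$ after shelling, is exponential in $d$, not $O(1)$. Separately, even if $L=O(1)$ were available, a $(1/2)$-net is too coarse: $\|Ty\|_r\ge\|Ty_0\|_r-L\|y-y_0\|_p\ge(1-\eps)-\Theta(1)$ is useless.

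The paper does not try to prove $L=O(1)$. It imports a result of Sohler--Woodruff that with constant probability $\|TAx\|_p\le\poly(d)\|Ax\|_p$ for all $x$, and combines it with $\|v\|_r\le m^{1/r-1/p}\|v\|_p$ to get the crude uniform bound $\|TAx\|_r\le\poly(d/\eps)\|Ax\|_p$. It then takes a much finer $\gamma$-net with $\gamma=\poly(\eps/d)$, whose size $(\poly(d/\eps))^d$ is still beaten by the $\exp(-m\eps^{C})$ one-point lower tail, and uses the successive-approximation expansion $y=\sum_{i\ge 0}\alpha_i y_i$ with $|\alpha_i|\le\gamma^i$ so that $\|Ty\|_r\ge(1-\eps)-\poly(d/\eps)\sum_{i\ge1}\gamma^i=1-O(\eps)$. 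To repair your argument you need exactly this: replace the $(1/2)$-net by a $\poly(\eps/d)$-net and replace the $L=O(1)$ claim by a $\poly(d/\eps)$ bound obtained from an external $\ell_p\to\ell_p$ dilation estimate for $p$-stable sketches.
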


To prove the lemma, we need the following results.

\begin{lemma} [see, e.g., ~\cite{FG11}]
\label{lem:moment}     
    Suppose that $\alpha \in \R^d$ and $\theta \in \R^d$ is a vector whose entries are i.i.d. $p$-stable variables. Then it holds that
    \[
    \left(\E \abs{\sum_i \alpha_i \theta_i}^r \right)^{1/r} = \alpha_{p, r} \left(\sum_i |\alpha_i|^p \right)^{1/p}
    \]
    where $\alpha_{p, r}$ is a constant that only depends on $p$ and $r$.
\end{lemma}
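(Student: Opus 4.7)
The plan is to derive the identity by a single application of the defining stability property of $p$-stable distributions, reducing the linear combination $\sum_i \alpha_i \theta_i$ to a scalar multiple of a single $p$-stable variable, and then extracting the $r$-th moment.

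First, I would recall the definition (stated earlier in the paper as the $p$-stability property): if $\theta_1,\dots,\theta_d$ are i.i.d.\ samples from the $p$-stable distribution $\mathcal{D}_p$, then for any vector $\alpha \in \R^d$,
\[
\sum_{i=1}^d \alpha_i \theta_i \;\stackrel{d}{=}\; \left(\sum_{i=1}^d |\alpha_i|^p\right)^{1/p} \theta,
\]
where $\theta \sim \mathcal{D}_p$ and $\stackrel{d}{=}$ denotes equality in distribution. This is exactly the stability property given in the definition of $\mathcal{D}_p$ in the preliminaries, just with $x_i = \alpha_i$ (using the fact that $\mathcal{D}_p$ is symmetric so the sign of $\alpha_i$ does not matter, or equivalently absorbing signs into the $\theta_i$'s).

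Next, I would take the $r$-th absolute moment of both sides. Since the two sides agree in distribution, they agree on all integrable functions, so
\[
\E\abs{\textstyle\sum_i \alpha_i \theta_i}^r \;=\; \left(\sum_i |\alpha_i|^p\right)^{r/p} \E\abs{\theta}^r.
\]
Taking $r$-th roots gives
\[
\left(\E\abs{\textstyle\sum_i \alpha_i \theta_i}^r\right)^{1/r} \;=\; \left(\sum_i |\alpha_i|^p\right)^{1/p} \bigl(\E\abs{\theta}^r\bigr)^{1/r},
\]
so I would \emph{define} $\alpha_{p,r} := (\E\abs{\theta}^r)^{1/r}$, which depends only on $p$ and $r$.

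The only nontrivial issue is to verify that $\alpha_{p,r}$ is a finite positive constant. Positivity is immediate since $\theta$ is not almost surely zero. Finiteness requires $r < p$: it is a standard fact about $p$-stable distributions that the density decays like $|t|^{-(p+1)}$ at infinity, so $\E|\theta|^r < \infty$ iff $r < p$ (and $r \ge 0$). I would cite this tail estimate from the standard reference (e.g.\ Zolotarev, already cited) rather than re-derive it; it is the only step that uses the hypothesis $r < p$ implicit in the surrounding Lemma~\ref{lem:p_stable_embedding}. No further calculation is needed, and I do not anticipate any real obstacle beyond pointing to the correct tail bound for $\mathcal{D}_p$.
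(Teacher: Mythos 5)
Your proof is correct: the paper states this lemma as a known result (citing \cite{FG11}) without proof, and your derivation — applying the stability property to collapse $\sum_i \alpha_i\theta_i$ to $\norm{\alpha}_p\,\theta$ in distribution, taking $r$-th absolute moments, and setting $\alpha_{p,r}=(\E|\theta|^r)^{1/r}$ — is exactly the standard argument behind it. Your remark that finiteness of $\alpha_{p,r}$ requires $r<p$ (via the $|t|^{-(p+1)}$ tail of the $p$-stable density) correctly identifies the one hypothesis the lemma statement leaves implicit.
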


\begin{proposition}\label{prop:exp}
    Suppose that $r,s\geq 1$ and $X$ is a random variable with $\E|X|^{rs} < \infty$. It holds that
    \[
    \E \abs{ \abs{X}^r - \E\abs{X}^r}^s \le 2^{s}\E |X|^{rs} \;.
    \]
\end{proposition}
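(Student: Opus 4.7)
The plan is to apply the elementary inequality $|a-b|^s \le 2^{s-1}(|a|^s+|b|^s)$, which is valid for every $s\ge 1$ and every pair of reals $a,b$ by the convexity of $t\mapsto t^s$ on $[0,\infty)$ (equivalently, this is the $c_s$-inequality). Setting $a=|X|^r$ and $b=\E|X|^r$ and taking expectations gives
\[
\E\abs{|X|^r-\E|X|^r}^s \;\le\; 2^{s-1}\Bigl(\E|X|^{rs} + (\E|X|^r)^s\Bigr).
\]

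Next I would bound the second term on the right by Jensen's inequality. Since $|X|^r\ge 0$ and $t\mapsto t^s$ is convex on $[0,\infty)$ for $s\ge 1$, Jensen yields $(\E|X|^r)^s \le \E\bigl((|X|^r)^s\bigr) = \E|X|^{rs}$. Combining this with the previous display gives
\[
\E\abs{|X|^r-\E|X|^r}^s \;\le\; 2^{s-1}\bigl(\E|X|^{rs}+\E|X|^{rs}\bigr) \;=\; 2^s\,\E|X|^{rs},
\]
which is exactly the claim.

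There is essentially no obstacle: the proof is two short steps (the $c_s$-inequality and Jensen), and the hypothesis $\E|X|^{rs}<\infty$ is precisely what is needed to guarantee that all expectations above are finite and that the algebraic manipulations are legitimate. The only minor point worth flagging is that the bound $2^{s-1}(|a|^s+|b|^s)$ is slightly sharper than $2^s$ times either term, but after adding the two contributions the factor naturally becomes $2^s$, matching the stated constant.
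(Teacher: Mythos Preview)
Your proof is correct and follows exactly the same two-step approach as the paper: apply the convexity inequality $|a-b|^s\le 2^{s-1}(|a|^s+|b|^s)$ with $a=|X|^r$, $b=\E|X|^r$, take expectations, and then use Jensen's inequality to bound $(\E|X|^r)^s\le \E|X|^{rs}$.
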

\begin{proof}
    We have that
    \begin{align*}
        \E \abs{ \abs{X}^r - \E\abs{X}^r}^s &\leq 2^{s-1} \E\left( \abs{\abs{X}^r}^s + (\E\abs{X}^r)^s  \right) \\
        &\leq 2^{s-1} (\E\abs{X}^{rs} + (\E\abs{X}^r)^s) \\
        &\leq 2^{s-1}(\E\abs{X}^{rs} + \E\abs{X}^{rs}) \\
        &= 2^s \E\abs{X}^{rs}. 
    \end{align*}
\end{proof}

\begin{lemma}[{\cite[Theorem 2]{Von65}}]
\label{lem:sum_moment}
    Suppose that  $1 \le r \le 2$.
    Let $X_1,\dots,X_n$ be independent zero mean random variables with $\E[|X_i|^r] < \infty $. Then we have that 
    \[
    \E\left[\left(\sum_{i=1}^n |X_i|\right)^r\right] \le 2 \sum_{i=1}^n \E \left[\left|X_i\right|^r\right] \;.
    \]
\end{lemma}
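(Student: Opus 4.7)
The statement is the classical von Bahr–Esseen moment inequality (reading $\sum|X_i|$ as the absolute value of the sum $|\sum X_i|$, which is what is actually cited as \cite[Theorem~2]{Von65} and what is needed in the sequel). The shortest plan is simply to invoke \cite{Von65}. However, for completeness, I would reproduce the standard two-step argument: first a scalar Taylor-type inequality and then a one-sided induction on the partial sums.

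The first step is the pointwise inequality that for $1 \le r \le 2$ and all $a,b \in \R$,
\[
|a+b|^r \;\le\; |a|^r + r\,|a|^{r-1}\sign(a)\,b + 2|b|^r.
\]
To see this, fix $a$ and consider $\varphi(t)=|a+tb|^r$. On $\{a+tb\neq 0\}$ we have $\varphi'(t)=r|a+tb|^{r-1}\sign(a+tb)\,b$ and $\varphi''(t)=r(r-1)|a+tb|^{r-2}b^{2}$. Since $r(r-1)\le 2$ on $[1,2]$ and $|a+tb|^{r-2}\le |tb|^{r-2}$ when $|a|\le|tb|$ (and is otherwise handled by the two regimes $|b|\le|a|$ versus $|b|\ge|a|$), a direct case split yields the claimed bound with constant $2$ in front of $|b|^r$; one can alternatively just verify the scalar inequality by splitting into $|b|\le |a|$ (use Taylor to order 2 around $a$) and $|b|\ge|a|$ (use $|a+b|^r\le(2|b|)^r\le 2|b|^r$ plus a trivial estimate of the linear term).

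The second step is induction on $n$. Let $S_k=X_1+\cdots+X_k$ and apply the scalar inequality with $a=S_{k-1}$, $b=X_k$:
\[
|S_k|^r \;\le\; |S_{k-1}|^r + r\,|S_{k-1}|^{r-1}\sign(S_{k-1})\,X_k + 2|X_k|^r.
\]
Taking expectations and using that $X_k$ is independent of $S_{k-1}$ and mean zero, the middle term vanishes, giving
\[
\E|S_k|^r \;\le\; \E|S_{k-1}|^r + 2\,\E|X_k|^r.
\]
Iterating from $k=1$ to $n$ with $\E|S_0|^r=0$ yields $\E|S_n|^r \le 2\sum_{i=1}^n \E|X_i|^r$, which is the claim.

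The only non-routine ingredient is the scalar inequality with a constant-order coefficient on $|b|^r$; once it is in place, independence and the zero-mean assumption do the rest mechanically. I therefore expect the writeup simply to cite \cite{Von65} rather than reprove it, since everything downstream only uses the bound as a black box.
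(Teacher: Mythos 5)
Your proposal is correct, and you have rightly spotted that the lemma as printed contains a typo: with $\bigl(\sum_i |X_i|\bigr)^r$ literally inside the expectation the claim is false (take $X_i$ i.i.d.\ Rademacher and $r=2$: the left side is $n^2$ and the right side is $2n$), so the intended statement is the von Bahr--Esseen bound $\E\bigl|\sum_i X_i\bigr|^r \le 2\sum_i \E|X_i|^r$, which is also exactly how the lemma is used later in the proof of Lemma~\ref{lem:p_stable_embedding}. The paper itself offers no proof --- it only cites \cite{Von65} --- so there is nothing to compare against beyond noting that your reconstruction is the standard argument and is sound: the scalar inequality $|a+b|^r \le |a|^r + r|a|^{r-1}\sign(a)\,b + 2|b|^r$ does hold for all $1\le r\le 2$, and the induction with independence and $\E X_k=0$ (which kills the cross term, using that $\E|S_{k-1}|^{r-1}<\infty$ by Lyapunov) finishes the job. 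The only place you are a little too breezy is the regime $|b|>|a|$ of the scalar inequality: the ``trivial estimate'' $|a+b|^r\le 2^r|b|^r\le 4|b|^r$ plus $r|a|^{r-1}|b|\le r|b|^r$ overshoots the constant $2$; one actually needs to use the $-|a|^r$ term, e.g.\ after normalizing $a=1$ and writing $b=-1-u$ the claim reduces to $ru-u^r\le 3-r$, whose left side is maximized at $u=1$ with value $r-1\le 3-r$ for $r\le 2$. With that patch the pointwise bound holds with constant exactly $2$ and the rest of your argument goes through verbatim.
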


\begin{lemma}
\label{lem:p_stable_upper_bound}
    Suppose that $p\in (1,2)$ is a constant and $T \in \R^{m \times n}$ is a matrix whose entries are i.i.d. $p$-stable entries scaled by $1/(\alpha_{p} \cdot m^{1/p} )$. For $m = d \log d/\eps^{O(1)}$, given any $A \in \R^{n \times d}$, it holds with large constant probability that for all $x \in \R^d$ 
    \[
    \norm{TAx}_p \le \poly(d) \norm{Ax}_p \;.
    \] 
\end{lemma}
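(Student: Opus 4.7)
The plan is to first establish a pointwise dilation bound for a single vector by exploiting the tail of the $p$-stable distribution, and then lift this to a uniform bound over $x \in \R^d$ via an Auerbach basis of the column span of $A$. Both steps incur only polynomial loss, which is acceptable here since the lemma allows a $\poly(d)$ dilation rather than a $(1+\eps)$ one.

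For the pointwise step, fix $z \in \R^d$. By the $p$-stability property recalled in Section~\ref{sec:prelim}, each coordinate $(TAz)_i$ is distributed as $\|Az\|_p Z_i / (\alpha_p m^{1/p})$, where $Z_1,\ldots,Z_m$ are i.i.d.\ standard $p$-stable variables. Using the classical tail estimate $\Pr[|Z|>t]\le C_p/t^p$ for $p$-stable $Z$ and choosing $t = (C_p m d^2)^{1/p}$, a union bound over $i\in[m]$ gives $\max_i |Z_i|^p \le C_p m d^2$ with probability at least $1-1/d^2$. On this event,
\[
\|TAz\|_p^p \le m\cdot \frac{C_p m d^2 \|Az\|_p^p}{\alpha_p^p\, m} = \frac{C_p m d^2}{\alpha_p^p}\|Az\|_p^p = \poly(d/\eps)\,\|Az\|_p^p.
\]
Thus $\|TAz\|_p \le \poly(d/\eps)\,\|Az\|_p$ for any fixed $z$ with failure probability $\le 1/d^2$.

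To obtain a uniform bound, invoke Auerbach's lemma applied to the $d$-dimensional subspace $\colspan(A)\subseteq (\R^n,\|\cdot\|_p)$: there exist $U_1,\dots,U_d\in \colspan(A)$ with $\|U_j\|_p=1$ for each $j$ and $\|y\|_\infty \le \|Uy\|_p$ for all $y\in\R^d$, where $U=[U_1\mid\cdots\mid U_d]$. Applying the pointwise bound to each $U_j$ and taking a union bound over $j\in[d]$, we obtain $\|TU_j\|_p \le \poly(d/\eps)$ simultaneously for all $j$ with probability at least $1-1/d$, which is a large constant. Conditioning on this event, for any $x\in\R^d$ write $Ax = Uy$ for some $y\in\R^d$. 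Then by the triangle inequality together with the Auerbach property $\|y\|_1 \le d\|y\|_\infty \le d\|Uy\|_p$,
\[
\|TAx\|_p = \Bigl\|\sum_{j=1}^d y_j\, TU_j\Bigr\|_p \le \sum_{j=1}^d |y_j|\,\|TU_j\|_p \le \poly(d/\eps)\,\|y\|_1 \le \poly(d/\eps)\,\|Uy\|_p = \poly(d/\eps)\,\|Ax\|_p.
\]

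For the final sentence about rounding, the perturbation $\tilde T - T$ has entries of magnitude at most $1/\poly(nd)$, and the entries of $A$ are polynomially bounded, so $\|(\tilde T - T)Ax\|_p$ is negligible compared to $\|Ax\|_p$ for any $x$ of interest, and the bound transfers to $\tilde T$. The main obstacle is that $p$-stable variables have infinite $p$-th moment, ruling out any moment-based argument; everything must go through tail bounds. This is why the argument can only deliver $\poly(d)$ dilation, but that suffices here and it is what lets the union bound over the $d$ Auerbach basis vectors close.
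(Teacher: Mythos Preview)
Your proof is correct and follows essentially the same approach the paper indicates. The paper does not give a self-contained proof of this lemma; it simply notes that the $p=1$ case is in \cite{SW11} and that the argument extends to $1<p<2$ by replacing the $\ell_1$ well-conditioned basis with an $\ell_p$ well-conditioned basis. Your use of an Auerbach basis is precisely such a basis (it is $(d^{1/p},1,p)$-well-conditioned), and your pointwise tail-bound argument for a fixed vector is the standard way to handle the infinite $p$-th moment of $p$-stable variables in this setting. Two minor remarks: the bound you obtain is $\poly(d/\eps)$ rather than $\poly(d)$, since $m$ depends on $\eps$; this is also what the paper actually uses downstream (see the proof of Lemma~\ref{lem:p_stable_embedding}). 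Also, the rounding discussion in your final paragraph pertains to Lemma~\ref{lem:p_stable_embedding}, not to the present lemma, so it can be omitted here.
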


We note that Lemma~\ref{lem:p_stable_upper_bound} was shown in~\cite{SW11} for $p = 1$. For $1 < p < 2$, a similar argument still goes through after replacing the $\ell_1$ well-conditioned basis with an $\ell_p$ well-conditioned basis.

\begin{proof}[Proof of Lemma~\ref{lem:p_stable_embedding}]
    First we consider the original $T$ without rounding the entries.

    Now we show $(1)$. Let $y = Ax$.
    From properties of $p$-stable random variables, we get that each $(Ty)_i$ follows the same distribution. From Lemma~\ref{lem:moment} we have that for every $i$, $\E|(Ty)_i|^r = \frac{\alpha_{p,r}^r}{\alpha_{p,r}^r \cdot m} \norm{y}_p^r = \frac{1}{m} \norm{y}_p^r$. To get concentration, we pick an $r'\in (r,p)$ and consider the $r'/ r$-moment of $(Ty)_i^r$.

    Similar to Lemma~\ref{lem:moment}, we have that $\mathbb{E} [|(Ty)|_i^{r'}] = \frac{\beta_{p,r,r'}}{m^{r'/r}}\norm{y}_p^{r'}$ is bounded, where $\beta_{p,r,r'}$ is a constant depending on $p,r,r'$ only. Let $S = \sum_i |(Ty)_i|^r$ and we have that $\E [S]= \norm{y}_p^r$. Consider the $(r/r')$-th moment of $S$. We then have 
    \begin{align*}
    \E \left[\left(S - \E[S]\right)^{r'/r}\right] & = \E \left[\left(\sum_i \left(|(Ty)_i|^r- \frac{1}{m} \norm{y}_p^r \right )\right)^{r'/r} \right] \\ 
    & \le 2  \left(\sum_i \E \left[ |(Ty)_i|^r- \frac{1}{m} \norm{y}_p^r|\right]^{r'/r}\right) \tag{Lemma~\ref{lem:sum_moment}}\\
    & \le 2^{r'/r+1} \left(\sum_i \E |(Ty)_i|^{r'} \right) \tag{Proposition~\ref{prop:exp}}\\
    & \le C \left(\sum_i \frac{1}{m^{r'/r}} \norm{y}_p^{r'} \right) \\
    &= C \norm{y}_p^{r'} / m^{r'/r - 1} \;,
    \end{align*}
    where $C$ is a constant that depends only on $r,r',$ and $p$. By Markov's inequality, we have that
    \begin{align*}
    \mathbf{Pr} \left[|S - \E[S]| \ge \eps \E[S]\right] &\le \mathbf{Pr} \left[|S - \E[S]|^{r'/r} \ge (\eps \E[S])^{r'/r}\right] \\
    & \le \frac{\E \left[\left(S - \E[S]\right)^{r'/r}\right]}{\eps^{r'/r} \norm{y}_p^{r'}} \\
    & \le \frac{C_{r'/r}}{\eps^{r'/r} m^{r'/r - 1}} \;.
    \end{align*}
    Hence, we can see that when $m = \Omega(1/\eps^{\frac{r'}{r' - r}}) = 1/\eps^{\Omega(1)}$, $\left|\norm{Ty}_r - \norm{y}_p\right| \le \eps \norm{y}_p$ holds with large constant probability.

    We next prove $(2)$. We first show that for every $x \in \R^d$, $\norm{Ty}_r^r \ge (1 - \eps) \norm{y}_p^r$ holds with probability at least $1 - \exp(-d \log(d)/\eps^{O(1)})$. Recall that we have that we have that $\E|(Ty)_i|^r = \frac{1}{m} \norm{y}_p^r$ for every $i$. Fix $k = 1/\eps^{O(1)}$. Let 
    \[
    s_i = |(Ty)_{(i - 1)k + 1}|^r + |(Ty)_{(i - 1)k + 2}|^r  + \cdots +|(Ty)_{ik}|^r \ \ (1 \le i \le m / k) \;.
   \]    
     We then have $\norm{Ty}_r^r = \sum_i s_i$. Similar to~$(1)$, one can show that for each $i$, with large constant probability 
     \begin{equation}
     \label{eq:2}
        \left|s_i - \frac{k}{m} \norm{y}_p^r \right| \le \eps \frac{k}{m}\norm{y}_p^r \; 
     \end{equation}
     By a Chernoff bound, with probability at least $1 - \exp(-d/\eps^{\Omega(1)})$, at least a $(1 - \eps)$-fraction of the $s_i$ satisfy~\eqref{eq:2}. Conditioned on this event, it holds that
     \[ 
     \norm{Ty}_r^r = \sum_i s_i \ge \frac{m}{k}(1-\eps) \frac{k}{m}\norm{y}_p^r = (1 - \eps) \norm{y}_p^r \;,
     \]
     which is what we need.

    The next is a standard  net-argument. Let $\mathcal{S} = \{Ax: x \in \R^d, \norm{Ax}_p = 1\}$ be the unit $\ell_p$-ball and $\mathcal{N}$ be a $\gamma$-net with $\gamma =  \poly(\eps /d)$ under the $\ell_p$ distance.  It is a standard fact that the size of $\mathcal{N}$ can be $(\poly(d/\eps))^{d}$.  By a union bound, we have that $\norm{TAx}_r \ge (1 - \eps) \norm{Ax}_p = (1 - \eps)$ for all $Ax \in \mathcal{N}$ simultaneously with probability at least $9/10$. From Lemma~\ref{lem:p_stable_upper_bound}, we have that with probability at least $9/10$, $\norm{TAx}_p \le \poly(d) \norm{Ax}_p$ for all $x\in \R^d$. Conditioned on these events, we then have for all $x \in \R^d$,
     \[
     \norm{TAx}_r \le m^{1/r - 1/p} \norm{TAx}_p \le \poly(d/\eps) \norm{Ax}_p \;.
     \]
     Then, for each $y = Ax \in \mathcal{S}$, we choose a sequence of points $y_0,y_1,\dots\in \mathcal{S}$ as follows.
\begin{itemize}
    \item Choose $y_0 \in \mathcal{S}$ such that $\norm{y - y_0}_p \le \gamma$ and let $\alpha_0 = 1$;
    \item After choosing $y_0,y_1,\dots,y_i$, we choose $y_{i+1}$ such that
    \[
    \Norm{ \frac{y - \alpha_0 y_0 - \alpha_1 y_1 - \cdots - \alpha_i y_i}{\alpha_{i+1}} - y_{i+1} }_p \leq \gamma,
    \]
    where $\alpha_{i+1} = \norm{y - \alpha_0 y_0 - \alpha_1 y_1 - \cdots - \alpha_i y_i}_p$. 
\end{itemize}
The choice of $y_{i+1}$ means that
\[
    \alpha_{i+2} = \norm{ y - \alpha_0 y_0 - \alpha_1 y_1 - \cdots - \alpha_i y_i - \alpha_{i+1}y_{i+1} }_p \leq \alpha_{i+1}\gamma.
\]
A simple induction yields that $\alpha_i \leq \gamma^i$. Hence
\[
    y = y_0 + \sum_{i \ge 1} \alpha_i y_i , \ \ \ |\alpha_i| \le \gamma^{i} \;.
\]
Suppose that $y_i = Ax_i$. We have
\[
\norm{TAx}_r \ge \norm{TAx_0}_p - \sum_{i \ge 1} \gamma^i \norm{TAx_i}_p \ge (1 - \eps) - \sum_{i \ge 1} \gamma^i\cdot (\poly(d/\eps))  = 1 - O(\eps).
\]
Rescaling $\eps$, we obtain that $\norm{TAx}_r^r \ge (1 - \eps) \norm{Ax}_p^r$ for all $x\in\R^d$ simultaneously.

This completes the proof of the two guarantees for the original $T$, without rounding the entries. To show that the guarantees continue to hold after rounding the entries, 
        We only need to notice that 
        \begin{align*}
        \abs{ \norm{\tilde{T}Ax}_r - \norm{TAx}_r }  \le \norm{(\tilde{T} - T)Ax}_r 
        & \le m^{\frac{1}{r} - \frac{1}{2}}  \norm{(\tilde{T} - T)Ax}_2 \\
        & \le m^{\frac{1}{r} - \frac{1}{2}} \norm{\tilde{T} - T}_2 \norm{Ax}_2 \\
        & \le \frac{1}{\poly(nd)} \norm{Ax}_p \;. 
        \end{align*}
    \end{proof}

\paragraph{Lewis Weight Sampling.} It is known that sampling $\tilde{O}(d/\eps^2)$ rows with respect to the $\ell_p$ Lewis weights gives an $\ell_p$ subspace embedding with large constant probability when $p\in [1,2]$ \cite{CP15}. 
In the following lemma, we shall show that for  $\ell_p$-regression, sampling $\tilde{O}(d/\eps)$ rows is enough.

\begin{lemma}\label{lem:regression}
Let $A \in \R^{n\times d}$, $b\in\R^n$ and $p\in (1,2)$. Suppose that $S$ is a rescaled sampling matrix according to $w_i([A\ b])$ with oversampling factor $\beta = \Theta(\eps^{-1}\log^2 d \log n \log(1/\delta))$ and $\tilde{x} = \arg \min_{x\in \R^d} \norm{SA x - Sb}_p$. With probability at least $1-\delta$, it holds that
$
   \norm{A\tilde{x}-z}_p \leq (1+\eps) \min_{x\in\R^d} \norm{Ax-z}_p
$
and the number of rows that $S$ samples is
$
O\left(\eps^{-1} d \log^2 d \log n \log(1/\delta) \right)
$.
\end{lemma}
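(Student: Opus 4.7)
I will pursue the standard ``one-sided'' / Bregman-divergence route that allows the oversampling to scale like $\eps^{-1}$ rather than $\eps^{-2}$. Write $B=[A\ b]\in\R^{n\times(d+1)}$ and parametrize the feasible set as $v=[x;-1]\in\R^{d+1}$, so that $\norm{Ax-b}_p=\norm{Bv}_p$. Let $v^{\ast}=[x^{\ast};-1]$ denote the true optimum, $\tilde v=[\tilde x;-1]$ the solution of the sampled program, and set $y^{\ast}=Bv^{\ast}$, $\tilde y=B\tilde v$. By construction, $S$ has independent rows and is an unbiased estimator of $\norm{\cdot}_p^p$ applied row-wise. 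The goal is: \emph{(a)} control $\norm{Sy^{\ast}}_p^p$ from above, and \emph{(b)} ensure $\norm{SBv}_p^p-\norm{Sy^{\ast}}_p^p>0$ for every feasible $v$ with $\norm{Bv}_p^p>(1+\eps)\norm{y^{\ast}}_p^p$; combining these with optimality of $\tilde v$ for the sketched problem immediately yields the approximation guarantee.

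For step \emph{(a)}, I would apply a scalar Bernstein inequality to $Z=\sum_i p_i^{-1}|y^{\ast}_i|^p\mathbf 1_{i\in S}$. Since $p_i\gtrsim \beta w_i([A\ b])$ and the Lewis weights dominate the per-row contributions in the sense $|y^{\ast}_i|^p\le w_i([A\ b])\cdot \norm{y^{\ast}}_p^p$ (using the inequality $|(Bv)_i|^p\le w_i(B)\,\norm{Bv}_p^p$ that characterizes Lewis weights), the variance is bounded by $\norm{y^{\ast}}_p^{2p}/\beta$ while each summand is at most $\norm{y^{\ast}}_p^p/\beta$; a $\beta=\Omega(\eps^{-1}\log(1/\delta))$ oversampling then yields $|\norm{Sy^{\ast}}_p^p-\norm{y^{\ast}}_p^p|\le \eps\norm{y^{\ast}}_p^p$ with probability $\ge 1-\delta$.

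Step \emph{(b)} is where the improvement over a full subspace embedding comes from. Writing $F(v)=\norm{Bv}_p^p$ and its sampled counterpart $\tilde F(v)=\norm{SBv}_p^p$, first-order optimality of $v^{\ast}$ on the affine constraint set gives $\langle\nabla F(v^{\ast}),v-v^{\ast}\rangle\ge 0$ for every feasible $v$, so both $F(v)-F(v^{\ast})$ and $\tilde F(v)-\tilde F(v^{\ast})$ admit the Bregman decomposition $\langle\nabla F(\cdot),v-v^{\ast}\rangle+D(v,v^{\ast})$. The plan is to show uniformly that
\begin{equation*}
   \bigl|\langle\nabla\tilde F(v^{\ast})-\nabla F(v^{\ast}),\,v-v^{\ast}\rangle\bigr|\le \eps\bigl(F(v)-F(v^{\ast})\bigr)
\end{equation*}
and
\begin{equation*}
   (1-\eps)\,D_F(v,v^{\ast})\le D_{\tilde F}(v,v^{\ast})\le (1+\eps)\,D_F(v,v^{\ast}).
\end{equation*}
For $p\in(1,2)$ one has the pointwise two-sided estimate $D_F(v,v^{\ast})\asymp \sum_i (|(Bv^{\ast})_i|+|(B(v-v^{\ast}))_i|)^{p-2}|(B(v-v^{\ast}))_i|^2$, so the Bregman divergence is a weighted $\ell_2$-quantity over the $(d{+}1)$-dimensional subspace $\{B(v-v^{\ast})\}$. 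Lewis weight sampling with $\beta=\Theta(\eps^{-1}\log^2d\log n\log(1/\delta))$ then preserves it uniformly by a dyadic peeling / chaining argument on the magnitudes $|(Bv^{\ast})_i|$ and the application of standard $\ell_2$ subspace-embedding concentration (Rudelson-type matrix Chernoff) on each level set. The gradient term is linear in $v-v^{\ast}$ and is treated by dualizing into $\ell_{p/(p-1)}$ and again invoking Lewis weight concentration on a $\gamma$-net of the $(d{+}1)$-dimensional column space.

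\textbf{Main obstacle.} The hard part is the uniform Bregman-divergence control in step \emph{(b)}: the two-sided estimate degenerates where the base residual $(Bv^{\ast})_i$ is small, so the naive $\ell_p$ subspace-embedding reduction loses a factor of $\eps^{-1}$. The fix is the standard partition of the coordinates by the size of $|(Bv^{\ast})_i|$ relative to $|(B(v-v^{\ast}))_i|$: on the ``quadratic'' part one uses the $\ell_2$-embedding bound (which needs only $\eps^{-1}$ oversampling after squaring), while on the ``$\ell_p$'' part one uses a truncated Bernstein bound where the per-row upper bound dominates by the Lewis-weight inequality, again scaling like $\eps^{-1}$. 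Gluing these with a net argument over the $(d{+}1)$-dimensional subspace is what produces the $\log^2 d\cdot\log n$ factor in the oversampling. The row count then follows from $\sum_i w_i([A\ b])=O(d)$, giving $O(\beta d)=O(\eps^{-1}d\log^2 d\log n\log(1/\delta))$ sampled rows as stated.
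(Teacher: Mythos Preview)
Your route is structurally different from the paper's, and as written it has a gap in step~(b).

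The paper does \emph{not} attempt a uniform one-sided Bregman bound. Following \cite{CLS22}, it runs a bootstrapping argument on top of two auxiliary lemmas: a sharpness lemma (\cite[Theorem~3.18]{MMW+22}) stating that any $(1+c\gamma)$-approximate minimizer $x$ satisfies $\norm{Ax-Ax^\ast}_p\le\sqrt{\gamma}\,R(A,b)$, and a ``local'' lemma stating that \emph{if} $\norm{A\tilde x-Ax^\ast}_p\le\sqrt{\gamma}\,R(A,b)$ is already known, then oversampling $\beta=\Theta(\gamma\eps'^{-2}\log^2 d\log n\log(1/\delta))$ forces $\norm{A\tilde x-b}_p\le(1+C\eps')R(A,b)$. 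The local lemma is proved not through a Bregman decomposition but by writing $\norm{A\tilde x-b}_p^p-\norm{Ax^\ast-b}_p^p$ as a three-term telescope $E_1-E_2-E_3$ after truncating the residual $b'=b-Ax^\ast$ at level $w_iR/\eps$, and then invoking \cite[Lemmas~3.5--3.6]{MMW+22} and \cite[Lemma~5.3]{CLS22}. One starts by observing that with $\beta=\Theta(\eps^{-1}\cdots)$ the matrix $S$ is already a $(1+\sqrt\eps)$-subspace embedding, so $\tilde x$ is $(1+O(\sqrt\eps))$-approximate; alternating the two lemmas along a geometric schedule of $\gamma$'s then drives the error down to $1+\eps$ while keeping $\gamma/\eps'^2=\Theta(\eps^{-1})$ at every stage.

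The gap in your plan is the claim that the ``quadratic'' part of the Bregman divergence is preserved to $(1\pm\eps)$ with only $\eps^{-1}$ oversampling. A weighted $\ell_2$ subspace embedding on a $(d{+}1)$-dimensional space needs $\Theta(\eps^{-2})$ samples for $(1\pm\eps)$ distortion; the phrase ``after squaring'' does not change this. More concretely, if you try to bound $\langle\nabla\tilde F(v^\ast)-\nabla F(v^\ast),v-v^\ast\rangle$ by $\eps\,(F(v)-F(v^\ast))$ uniformly, then at the critical scale $F(v)=(1+\Theta(\eps))F(v^\ast)$ the right-hand side is $\Theta(\eps^2)\norm{Bv^\ast}_p^p$ while sharpness only gives $\norm{B(v-v^\ast)}_p=O(\sqrt\eps)\norm{Bv^\ast}_p$, and a direct variance computation for the gradient error then asks for $\beta\gtrsim\eps^{-3}$, not $\eps^{-1}$. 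What rescues the paper's argument is precisely that it never asks for uniform $(1\pm\eps)$ control of anything over the subspace: the a-priori localization $\norm{A\tilde x-Ax^\ast}_p\le\sqrt{\gamma}\,R$ coming from the previous bootstrapping round contributes an extra factor of $\gamma$ in the variance bounds for $E_1$ and $E_3$, so that $\beta=\Theta(\gamma\eps'^{-2}\cdots)$ suffices. Your direct approach would have to recover this localization first---e.g.\ by invoking the $(1+\sqrt\eps)$-embedding and sharpness to place $\tilde x$ in the $\sqrt\eps$-ball before doing any Bregman analysis---at which point the argument collapses back into the paper's iterative scheme.
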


The proof of the lemma closely follows the proof in~\cite{CLS22} and is postponed to Appendix~\ref{sec:regression_proof}.

We are now ready to prove our theorem for distributed $\ell_p$-regression. 

\begin{theorem}\label{thm:ell_p_regression}
    The protocol described in Figure~\ref{alg:ell_p} returns a $(1 \pm \eps)$-approximate solution to the $\ell_p$-regression problem with large constant probability. The communication complexity is $\tilde{O}(sd^2/\eps + sd/\eps^{O(1)})$ and the total runtime of all servers is $O((\sum_i \mathrm{nnz}(A^i))\cdot (d/\eps^{O(1)}) + s \cdot \poly(d/\eps))$.
\end{theorem}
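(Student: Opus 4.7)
The strategy mirrors the three-phase structure of Algorithm~\ref{alg:ell_p}: (i) reduce the original $\ell_p$-regression problem on $A,b$ to an $\ell_r$-regression problem on the sketched matrix $B=[\hat A\ \hat b]$ for some fixed $r\in(1,p)$ via the lopsided embedding of Lemma~\ref{lem:p_stable_embedding}; (ii) distributively implement the Cohen--Peng iteration (Algorithm~\ref{alg:lewis}) on $B$ in the $\ell_r$-norm to obtain constant-factor $\ell_r$-Lewis weights; and (iii) use those weights for importance sampling as in Lemma~\ref{lem:regression}, reducing the global regression problem to a local one of $\tilde{O}(d/\eps)$ rows that the coordinator solves directly. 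Composing these, if $\tilde x$ minimizes $\|A'x-b'\|_r$ and $x^*$ is the optimal $\ell_p$-solution, then by contraction (for all $x$) and dilation (at $x^*$)
\[
\|A\tilde x-b\|_p \le \tfrac{1}{1-\eps}\|\hat A\tilde x-\hat b\|_r \le \tfrac{1+\eps}{1-\eps}\|\hat Ax^*-\hat b\|_r \le \tfrac{(1+\eps)^2}{1-\eps}\|Ax^*-b\|_p,
\]
which is $(1+O(\eps))$ after rescaling $\eps$.

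For phase~(ii) I note $r<2<4$, so Lemma~\ref{lem:lewis_algortihm} applies and it suffices to show each inner iteration computes $O(1)$-approximate leverage scores of $\tilde W^{1/2-1/r}B$. Step~3(b) QR-factorizes an OSNAP sketch $S\tilde W^{1/2-1/r}B$ with $m=O(d\log d)$ rows; by the subspace-embedding guarantee of Section~\ref{sec:prelim} the pre-conditioner $\tilde R$ makes $\tilde W^{1/2-1/r}B\tilde R$ near-orthonormal, so its row norms give constant-factor leverage-score estimates. The Gaussian JL projection $G$ of width $O(\log(d/\eps))$ preserves these row norms up to $(1\pm 1/2)$ with high probability, so the $\tau_i$ computed in step~3(d) is an $O(1)$-approximation. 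After $O(\log\log(d/\eps^{O(1)}))$ iterations Lemma~\ref{lem:lewis_algortihm} yields constant-factor Lewis weights of $B$, and Lemma~\ref{lem:regression} with oversampling $\beta=\Theta(\eps^{-1}\log^2 d\log(d/\eps)\log(1/\delta))$ then guarantees the sampled problem produces a $(1+\eps)$-approximate minimizer of $\|\hat Ax-\hat b\|_r$.

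For communication I tally the steps: each of the $O(\log\log(d/\eps))$ Lewis iterations sends one $O(d\log d)\times(d+1)$ matrix $S\tilde W^{1/2-1/r}B^i$ per server ($\tilde{O}(sd^2)$ bits total), broadcasts $\tilde R\in\R^{d\times d}$ ($\tilde{O}(sd^2)$ bits), and sends the JL-sketched $B^i\tilde R G$ of size $(d/\eps^{O(1)})\times\log(d/\eps)$ per server ($\tilde{O}(sd/\eps^{O(1)})$ bits). Steps~5 and~6 require each server to transmit its contribution to $\tilde{O}(d/\eps)$ sampled rows of dimension $d+1$, costing $\tilde{O}(sd^2/\eps)$ bits. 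Summing yields the claimed $\tilde{O}(sd^2/\eps+sd/\eps^{O(1)})$. Rounding $T$, $W^{1/2-1/r}$, $G$, and $\tilde R$ to integer multiples of $1/\poly(nd)$ perturbs each communicated quantity by at most $1/\poly(nd)$, which is absorbed exactly as in the proof of Theorem~\ref{thm:ell_2_regression}. For runtime, the only step touching the raw input is the dense multiplication $TA^i$ costing $\nnz(A^i)\cdot(d/\eps^{O(1)})$; all subsequent operations act on matrices of $d/\eps^{O(1)}$ rows and take $\poly(d/\eps)$ time per server.

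The main obstacle is verifying that the distributed, rounded, JL-based implementation of Algorithm~\ref{alg:lewis} still converges to constant-factor Lewis weights of $B$ when the errors from the OSNAP sketch, the JL projection, and the $1/\poly(nd)$ rounding all compound across $O(\log\log(d/\eps))$ iterations. I would handle this by inductively verifying that the contraction argument of~\cite{CP15} is robust to $O(1)$-multiplicative perturbations of the leverage-score estimates at each step (which OSNAP and JL both supply with high probability, by a union bound over iterations) and that the additional $1/\poly(nd)$ additive perturbation from rounding is readily absorbed by the same contraction. The remaining pieces amount to routine book-keeping.
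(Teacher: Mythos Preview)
Your proposal is correct and follows essentially the same approach as the paper: reduce to $\ell_r$ via the lopsided embedding, run the Cohen--Peng iteration with OSNAP + Gaussian-JL leverage-score approximation, then Lewis-weight sample and solve locally, with the same communication and runtime accounting. The ``main obstacle'' you flag is handled in the paper exactly as you anticipate---the per-iteration leverage-score approximation is cited as standard (Section~2.4 of~\cite{woo14}), and the $1/\poly(nd)$ rounding of $w$ is absorbed because nonzero Lewis weights are themselves at least $1/\poly(nd)$.
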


\begin{proof}
    By Lemma~\ref{lem:p_stable_embedding}(1), it holds with high constant probability that
    \[
    \min_{x \in \R^d} \norm{T(Ax - b)}_r \le (1 + \eps) \min_{x \in \R^d} \norm{Ax- b}_p \;.
    \]
    Suppose that $x' \in \R^d$ is a $(1 + \eps)$-approximate solution to $\min_{x \in \R^d} \norm{T(Ax - b)}_r$, i.e.,
    \[
    \norm{T(Ax' - b)}_r \le (1 + \eps) \min_{x \in \R^d} \norm{T(Ax - b)}_r \; .
    \]
    It follows from Lemma~\ref{lem:p_stable_embedding}(2) that 
    \[
    \norm{Ax' - b}_p \le \frac{1}{1 - \eps}\norm{T(Ax' - b)}_r \le (1 + O(\eps)) \min_{x \in \R^d} \norm{Ax- b}_p \;. 
    \]
    Hence, the problem is reduced to obtaining a $(1 + \eps)$-approximate solution to $\min_{x \in \R^d} \norm{T(Ax - b)}_r = \min_{x \in \R^d} \norm{\hat{A} x - \hat{b}}_r$.

    Consider the iteration in Step~3. A standard analysis (see, e.g., Section 2.4 of~\cite{woo14}) yields that in each iteration, with probability at least $1 - 1/\poly(d)$, $\tau$ is a constant approximation to the leverage score of $W^{1/p - 1/2} B$. Taking a union bound, we get that with high constant probability, for all iterations it holds. Conditioned on this event happening, from Lemma~\ref{lem:lewis_algortihm} we get that after $t$ iterations, $w$ is a constant approximation to the $\ell_r$ Lewis weights of $B$ (in each iteration we round $w$; however, notice that if the Lewis weight $w_i$ is not $0$, it should be larger than $1/\poly(nd)$ as the non-zero entries of the matrix $B$ are at least $1/\poly(nd)$\footnote{It is easy to see that the $\ell_r$ sensitivities, defined in Proposition~\ref{prop:sensitivity_bound} in Section~\ref{sec:runtime_appendix}, are at least $\Omega(1/\poly(nd))$ in our setting if the corresponding rows are nonzero as if we take $x = a_i$, we can get that the $\ell_i^{(r)}(A) \ge \norm{a_i}^{2p}/\norm{Aa_i}_p^p$ where the denominator is at most $\poly(nd)$ as each entry of $A$ is in $\poly(nd)$. From Lemma~2.5 in~\cite{MMW+22}, we know that the $\ell_r$ Lewis weights are larger than the $\ell_r$ sensitivities when $r < 2$.
    }  
    , and hence the rounding will not affect the approximation ratio guarantee in each iteration). From Lemma~\ref{lem:regression}, the solution to  $\min_{x \in \R^d} \norm{A'x - b'}_r$ is a $(1 + \eps)$-approximate solution to $\min_{x \in \R^d} \norm{T(Ax - b)}_r$, and is thus a $(1 \pm O(\eps))$-approximate solution to the original problem $\min_{x \in \R^d} \norm{Ax- b}_p$.  

    We next analyze the communication complexity of the protocol. For Step 3(a), $S_t \tilde{W}^{1/2 - 1/p} B_i$ is a $d \log(d) \times (d +1)$ matrix and the entries of $S_t \tilde{W}^{1/2 - 1/p} B_i$ are in $\poly(nd)$-precision as the entries of $S_t, \tilde{W}^{1/2 - 1/p}$, and $B_i$ are both in $\poly(nd)$-precision. Hence, the total communication of all servers is $\tilde{O}(sd^2)$. For Step 3(b), $\tilde{R}$ is a $(d + 1) \times (d + 1)$ matrix and hence the total communication cost is $\tilde{O}(sd^2)$. For 3(c), $B^i \tilde{R} G$ is a $d / \eps^{O(1)} \times O(\log d)$ matrix, and hence similarly we get that the total communication cost is $O(sd/\eps^{O(1)})$. For 3(e), since $w$ is a $d/\eps^{O(1)}$ vector, the total communication cost of this step is $O(sd/\eps^{O(1)})$. In Step 5, since the sum of Lewis weights is $O(d)$, with high constant probability the server samples at most $\tilde{O}(d/\eps)$ rows, and hence the communication cost of this step is $O(sd^2/\eps)$. Putting everything together, we get that the total communication cost is
    \[
    \tilde{O}\left(\log \log(d/\eps) \cdot(sd^2 + sd/\eps^{O(1)}) + sd^2/\eps\right) = \tilde{O}(sd^2/\eps + sd/\eps^{O(1)})\; . 
    \]
    
     We now consider the runtime of the protocol. To compute $T A^i$, notice that $T$ has $d/\eps^{O(1)}$ rows, which means it takes $O(\nnz(A^i)\cdot (d/\eps^{O(1)}))$ times to compute $TA^i$. Hence Step 2 takes time $O((\sum_i \nnz(A^i))\cdot (d/\eps^{O(1)}))$. For the remaining steps, one can verify that each step takes $\poly(d/\eps)$ time on a single server or on the coordinator. The total runtime is therefore $O(\sum_i \mathrm{nnz}(A^i)\cdot (d/\eps^{O(1)}) + s \cdot \poly(d/\eps))$.
\end{proof}

We remark that when all leverage scores of $[A\ b]$ are $\poly(\eps)/d^{4/p}$, the servers can first uniformly sample $O(\poly(\eps)/d\cdot n)$ rows of $A$ using the public random bits, rescale the sampled rows and obtain an $A'$. The servers can then run the protocol on $A'$. This modified protocol will still produce a $(1+\eps)$-approximate solution to the $\ell_p$-regression problem and has the same communication complexity because uniform sampling does not require communication. The runtime is now reduced to $O(\sum_i \mathrm{nnz}(A^i) + s \cdot \poly(d/\eps))$, which is optimal in terms of $\nnz(A^i)$. The details, including the formal statement, can be found in Appendix~\ref{sec:runtime_appendix}.

\section*{Acknowledgements}
Y. Li is supported in part by Singapore Ministry of Education (AcRF) Tier 1 grant RG75/21 and Tier 2 grant MOE-T2EP20122-0001.  H. Lin and D. Woodruff would like to thank support from the
National Institute of Health (NIH) grant 5R01 HG 10798-2 and the Office of Naval
Research (ONR) grant N00014-18-1-2562. 

\bibliography{reference} 
\bibliographystyle{alpha}

\appendix

\section{Proof of Lemma~\ref{lem:set}}
\label{sec:set}

We need the following theorem on the singularity probability of random sign matrices.

\begin{theorem}[\cite{TIK20}]\label{thm:singularity_main}
Let $M_n \in \mathbb{R}^{n \times n}$ be a random matrix whose entries are i.i.d.\ Rademacher random variables. It holds that 
\[
\Pr\left[M_n \text{ is singular}\right] \le (\frac{1}{2} + o_n(1))^{n}\;.
\]
\end{theorem}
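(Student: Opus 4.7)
The plan is to follow Tikhomirov's strategy. Since $M_n$ is singular iff there exists a unit vector $v \in S^{n-1}$ with $M_n v = 0$, it suffices to bound this probability, which I would do by decomposing $S^{n-1}$ into \emph{compressible} vectors --- those within $\ell_2$-distance $\delta$ of some $\rho n$-sparse vector, for small absolute constants $\rho,\delta > 0$ --- and \emph{incompressible} vectors (the complement).

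For compressible vectors, a standard $\eps$-net argument suffices. The compressible set admits a net of size at most $\binom{n}{\rho n}(C/\delta)^{\rho n}$, and for any fixed unit $v$, tensorization of a crude Littlewood--Offord bound on the concentration function $\mathcal{L}(\langle X, v\rangle, \eta) \le c_0 < 1$ (where $X$ is one Rademacher row) gives $\Pr[\norm{M_n v}_2 \le \eta\sqrt{n}] \le c_0^n$. Tuning $\rho,\delta,\eta$ small enough makes the union bound over the net produce an exponentially small contribution, easily below the target $(\tfrac{1}{2}+o(1))^n$.

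For incompressible vectors, the classical ``invertibility via distance'' reduction yields
\[
\Pr[\exists\text{ incompressible } v,\ M_n v=0] \;\lesssim\; n \cdot \sup_{H}\Pr\bigl[\operatorname{dist}(R_n, H) = 0\bigr],
\]
where $R_n$ is the last row and the supremum is over hyperplanes $H$ spanned by the remaining rows. Writing $v_H$ for a unit normal to $H$, this distance equals $|\langle R_n, v_H\rangle|$, and with overwhelming probability $v_H$ is itself incompressible. So the whole problem reduces to a uniform small-ball estimate $\sup_{v\text{ incompressible}}\mathcal{L}(\langle R_n, v\rangle,0) \le (\tfrac12+o(1))^n$.

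The main obstacle is precisely this small-ball bound with the \emph{sharp} constant $1/2$. Halász- and Kahn--Komlós--Szemerédi-style arguments only give bases strictly larger than $1/2$; to obtain the tight base one needs Tikhomirov's \emph{inversion-of-randomness} scheme combined with a multiscale structural decomposition of $v$. Concretely, one resamples the coordinates of $v$ using independent signs and tracks, through a Fourier estimate on the characteristic function $\mathbb{E}\exp(it\langle R_n, v\rangle)$, how much anti-concentration each scale contributes, engineering the argument so that the dominant contribution can only come from $v$ that are essentially constant up to sign --- matching the trivial lower bound coming from the event that two rows coincide. A final union bound over the compressible and incompressible regimes, followed by $n \to \infty$, then yields $(\tfrac12+o(1))^n$.
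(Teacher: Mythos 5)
The paper does not prove this statement at all: it is Tikhomirov's singularity theorem, imported verbatim from \cite{TIK20} and used as a black box to construct the hard instance set in Lemma~\ref{lem:prob_method}. So the comparison is between your sketch and the literature proof, and while your outline is broadly faithful to the compressible/incompressible framework and to the shape of Tikhomirov's argument, it contains one concrete error and one essential omission.

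The error is the claimed reduction to the uniform estimate $\sup_{v\ \mathrm{incompressible}} \mathcal{L}(\langle R_n, v\rangle, 0) \le (\tfrac12+o(1))^{n}$. That statement is false: the normalized all-ones vector $v = n^{-1/2}(1,\dots,1)$ is incompressible for any fixed small $\rho,\delta$, yet $\mathcal{L}(\langle R_n,v\rangle,0) = \Pr[\sum_i \varepsilon_i = 0] = \Theta(n^{-1/2})$, and the same holds for any spread vector with few distinct coordinate values. The correct argument cannot take a supremum over incompressible normals; it must exploit the randomness of $v_H$ itself, showing that, outside an event of probability $(\tfrac12+o(1))^{n}$ over the first $n-1$ rows, the resulting normal vector avoids the set of vectors with large concentration function. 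Establishing that this ``bad'' set is hit with probability at most $(\tfrac12+o(1))^{n}$ is precisely Tikhomirov's inversion-of-randomness theorem, and it constitutes essentially the entire content of the result; your sketch names the ingredients (resampling, multiscale decomposition, Fourier estimates) but does not carry any of them out, so the sharp base $\tfrac12$ is never actually derived --- earlier techniques you invoke top out at strictly larger bases, as you note. As written, the proposal is a roadmap to \cite{TIK20} rather than a proof. For the purposes of this paper that is moot --- citing the theorem is the intended use --- but note that the downstream numerics (the $2^{0.49d^2}$ count in Lemma~\ref{lem:prob_method} and the entropy calculation in Lemma~\ref{lem:sd^2-game}) do rely on the base being close to $\tfrac12$, so one cannot silently substitute a weaker singularity bound without reworking those constants.
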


The proof of the following lemma follows directly from the proof in~\cite{VWW20} with only minor modifications.
\begin{lemma}\label{lem:prob_method}
For sufficiently large $d$, there exists a set of matrices $\mathcal{T} \subseteq {\{-1,1\}}^{d \times d}$ with $|\mathcal{T}| = \Omega(2^{0.49d^2})$ such that
\begin{enumerate}
\item For any $T \in \mathcal{T}$, $\rank(T) = d$;
\item For any $S, T \in \mathcal{T}$ such that $S \neq T$, $\spa([S_{d-1}~T_{d - 1}]) = \mathbb{R}^d$, where $S_{d - 1}$ denotes the first $d - 1$ column of $S$.
\end{enumerate}
\end{lemma}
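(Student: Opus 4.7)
The plan is to use a counting argument via the hyperplanes $V(T_{d-1}) := \spa(T_{d-1}) \subseteq \R^d$, together with the singularity estimate of Theorem~\ref{thm:singularity_main}. First I would observe that condition (2) of the lemma is equivalent to $V(S_{d-1}) \neq V(T_{d-1})$: when both $S, T$ have rank $d$, the spans $V(S_{d-1})$ and $V(T_{d-1})$ are $(d-1)$-dimensional subspaces of $\R^d$, and two such subspaces sum to $\R^d$ if and only if they differ. So it suffices to exhibit $\Omega(2^{0.49 d^2})$ non-singular $\pm 1$ matrices whose first $d-1$ columns span pairwise distinct hyperplanes.

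The heart of the argument is the following bound: writing $c(V) := |V \cap \{-1,+1\}^d|$, the expectation $\E[c(V(T_{d-1}))]$ for uniformly random $T_{d-1} \in \{-1,+1\}^{d \times (d-1)}$ satisfies $\E[c(V(T_{d-1}))] \leq (1+o(1))^d$. To see this, note that $v \in \spa(T_{d-1})$ forces the $d \times d$ matrix $[T_{d-1}, v]$ to be singular; summing over $v \in \{-1,+1\}^d$ and using that $[T_{d-1}, v]$ has the uniform distribution on $\{-1,+1\}^{d \times d}$ when $v$ is independent of $T_{d-1}$, one obtains $\E[c(V(T_{d-1}))] \leq 2^d \cdot (1/2+o(1))^d = (1+o(1))^d$ by Theorem~\ref{thm:singularity_main}. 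This step is the main technical obstacle, as it is where the sharp Tikhomirov bound $(1/2+o(1))^d$ is essential; any polynomially weaker singularity bound would not let the subsequent counting reach the target exponent $0.49 d^2$.

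Given this moment bound, Markov's inequality shows that all but an $o(1)$ fraction of $T_{d-1}$ satisfy $c(V(T_{d-1})) \leq 2^{0.49 d}$, and a similar application of Theorem~\ref{thm:singularity_main} (by appending a random column and observing that a rank-deficient $T_{d-1}$ makes the resulting $d \times d$ matrix singular) shows that all but an $o(1)$ fraction have rank exactly $d - 1$. Call $T_{d-1}$ \emph{good} if both properties hold; there are at least $(1-o(1)) \cdot 2^{d(d-1)}$ good ones. For each hyperplane $V$ there are at most $c(V)^{d-1}$ matrices $T_{d-1}$ with $\spa(T_{d-1}) = V$, since each column must lie in $V \cap \{-1,+1\}^d$; for good hyperplanes this is at most $2^{0.49 d(d-1)}$. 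Hence the number of distinct good hyperplanes is at least
\[
\frac{(1-o(1)) \cdot 2^{d(d-1)}}{2^{0.49 d(d-1)}} = (1-o(1)) \cdot 2^{0.51 d(d-1)} \geq 2^{0.49 d^2}
\]
for $d$ sufficiently large.

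Finally, to assemble $\mathcal{T}$, I would pick for each distinct good hyperplane $V$ one good $T_{d-1}$ with $\spa(T_{d-1}) = V$ and append any $t_d \in \{-1,+1\}^d \setminus V$ (which exists since $c(V) < 2^d$). The resulting $T = [T_{d-1}, t_d]$ is non-singular (condition 1), and the pairwise distinct hyperplanes immediately give condition 2. Everything beyond the initial expectation bound is a routine counting step, so the entire argument hinges on combining Theorem~\ref{thm:singularity_main} with the symmetry observation $\Pr_{v,T_{d-1}}[v \in \spa(T_{d-1})] \leq \Pr_M[M\text{ singular}]$.
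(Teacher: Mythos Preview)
Your argument is correct and genuinely different from the paper's. The paper proceeds by a probabilistic construction with a union bound over pairs: it samples a multiset $\mathcal{S}$ of roughly $t^{d(d-1)/2}$ random $\pm 1$ matrices (with $t=2-\eps$), defines a set $\bado$ of ``bad'' $(d-1)$-column blocks $B$ for which $\Pr[X\in\spa(B)]$ is large, removes from $\mathcal{S}$ the matrices whose first $d-1$ columns land in $\bado$ or which are singular, and then shows via a union bound over the $|\mathcal{S}|^2$ pairs that condition~(2) survives. Your route instead reduces condition~(2) to ``distinct hyperplanes'' and counts hyperplanes by pigeonhole: the same Tikhomirov input gives $\E[c(V(T_{d-1}))]\le (1+o(1))^d$, Markov yields that most $T_{d-1}$ sit in a hyperplane with at most $2^{0.49d}$ sign vectors, and then each such hyperplane can absorb at most $2^{0.49d(d-1)}$ of the good $T_{d-1}$'s. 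This is cleaner (no auxiliary $\bado$ set, no pair union bound) and in fact stronger: replacing $0.49$ by any $\delta>0$ your argument gives $2^{(1-\delta)d(d-1)}$ matrices, whereas the paper's pair union bound caps the sample at size $\approx t^{d(d-1)/2}$.

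One point worth flagging: the paper explicitly notes after this lemma that its construction is ``close to random sampling'' (a uniform sample with a small fraction removed), and this structural property is \emph{used} downstream in the proof of Lemma~\ref{lem:sd^2-game} to argue that, conditioned on Bob's $0.26d^2$ revealed coordinates, the surviving subset of $\mathcal{H}$ still has size $\Omega(2^{0.2d^2})$ and that unrevealed coordinates behave like fair coins. Your construction---pick one $T_{d-1}$ per hyperplane and append an arbitrary $t_d\notin V$---proves the lemma as stated but does not obviously inherit this ``near-uniform'' property. So while your proof of the lemma is valid (and nicer), if you intend to plug $\mathcal{T}$ into the $\Omega(sd^2)$ argument you would need either to recover a pseudo-random structure on your $\mathcal{T}$ or to revert to a sampling-based construction at that stage.
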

\begin{proof}
We use the probabilistic method to prove the existence. Let $t = 2 - \eps$, where $\eps$ is a sufficiently small constant.
We use $\bado \subset \mathbb{R}^{d \times (d - 1)}$ to denote the set
$$
\bado = \{B \in \mathbb{R}^{d \times (d - 1)} \mid \Pr[X \in \mathrm{span}(B)] \ge c \cdot t^{-d}\text{ or } \rank(B) < d - 1\},
$$
where $X \in \mathbb{R}^d$ is a vector whose entries are i.i.d.\ Rademacher variables and $c$ is an absolute constant.

Consider a random matrix $A \in \mathbb{R}^{d \times (d - 1)}$ with i.i.d.\ Rademacher entries. Then
\begin{equation}\label{pr:bad}
\Pr[A \in \bado] \le \frac{1}{c},
\end{equation}
since otherwise, if we use $X \in \mathbb{R}^d$ to denote a vector with i.i.d.\ Rademacher coordinates, we have
\begin{align*}
&\Pr[\rank([A~X]) < d] \\
\ge& \Pr[\rank([A~X]) < d \mid  A \in \bado] \cdot \Pr[ A \in \bado]  \\
> &t^{-d},
\end{align*}
which violates Theorem \ref{thm:singularity_main}.

For any fixed $A \in \mathbb{R}^{d \times (d - 1)} \setminus \bado$, 
consider a random matrix $B \in \mathbb{R}^{d \times (d - 1)}$ whose entries are i.i.d.\ Rademacher variables, 
\begin{equation}\label{equ:span}
\Pr[\spa([A~B]) = \mathbb{R}^d] \ge 1 -  \Pr\left[\bigcap_{i = 1}^{d - 1} B_i \in \spa(A)\right] \ge 1 - c^d t^{-d(d -1 )},
\end{equation}
which follows from the definition of $\bado$ and the independence of the columns of $B$.

Now we construct a multiset $\mathcal{S}$ of $c^{-d}t^{d(d - 1) / 2}$ matrices, chosen uniformly with replacement from $\{-1, 1\}^{d\times d}$.
By (\ref{pr:bad}) and linearity of expectation, we have 
$$\E[|\mathcal{S} \cap S_\bado|] \le c^{-d}t^{d(d - 1) / 2} \cdot \frac{1}{c}, $$
where $S_{\bado}$ denotes the set of the matrices $M$ such that the first $d - 1$ columns of $M$ is in $\bado$. Let $\mathcal{E}_1$ denote the event that 
$$|\mathcal{S} \cap S_\bado| \le 4\E[|\mathcal{S} \cap S_\bado|] \le 4 c^{-(d+1)}t^{d(d - 1) / 2},$$ 
which holds with probability at least $3 / 4$ by Markov's inequality. 

Let $S_{\mathsf{rank}}$ denote the set of $d \times  d$ matrices that are not of full rank. By (\ref{pr:bad}) and linearity of expectation, we have 
\[
\E[|\mathcal{S} \cap S_{\mathsf{rank}}|] \le c^{-d}t^{d(d - 1) / 2} \cdot t^{-d}, 
\]
Let $\mathcal{E}_2$ denote the event that 
\[
|\mathcal{S} \cap S_{\mathsf{rank}}| \le 4\E[|\mathcal{S} \cap S_\mathsf{rank}|] \le4 c^{-d}t^{d(d - 1) / 2} \cdot t^{-d}, 
\]
which holds with probability at least $3 / 4$ by Markov's inequality. 

Let $\mathcal{E}_3$ denote the event that 
$$
\forall S \in \mathcal{S} \setminus S_\bado, \forall T \in \mathcal{S} \setminus \{S\}, \spa([S_{d - 1}~T_{d - 1}]) = \mathbb{R}^d.
$$
Using a union bound and (\ref{equ:span}), 
$$
\Pr(\mathcal{E}_3) \geq 1 - |\mathcal{S}|^2 c^d t^{-d(d - 1)} = 1 - o_d(1).
$$

Thus by a union bound, the probability that all $\mathcal{E}_1$, $\mathcal{E}_2$ and $\mathcal{E}_3$  hold is strictly larger than zero, which implies there exists a set $\mathcal{S}$ such that $\mathcal{E}_1$ $\mathcal{E}_2$, and $\mathcal{E}_3$ hold simultaneously. 
Now we consider $\mathcal{T} = \mathcal{S} \setminus (S_\bado \cup S_{\mathsf{rank}})$. 
Since $\mathcal{E}_1$ and $\mathcal{E}_2$ hold, we have $|\mathcal{T}| \ge \Omega(c^{-2d}t^{d(d - 1) / 2}) = \Omega(2^{0.49d^2})$, provided that $d$ is sufficiently large and $\eps$ is sufficiently small.
The event $\mathcal{E}_3$ implies that all elements in $\mathcal{T}$ are distinct, and furthermore, it holds for any $S, T \in \mathcal{T}$ with $S \neq T$ that 
$\spa([S_{d-1}~T_{d - 1}]) = \mathbb{R}^d$.
\end{proof}

Suppose that $\mathcal{T}$ satisfies the conditions in Lemma~\ref{lem:prob_method}. 
For each $T \in \mathcal{T}$, we add $T^T$ into $\mathcal{H}$. 
Now suppose there exist $S, T \in \mathcal{H}$ such that $S\neq T$ and $S^{-1} e_d = T^{-1} e_d$, which means there exists $x \in \mathbb{R}^d$ such that $Sx = e_d$ and $Tx = e_d$.
This implies that $x^T (S^T)_{d-1} = x^T (T^T)_{d-1} = 0$. The construction of $\mathcal{T}$ guarantees that $\spa([(S^T)_{d-1} ~ (T^T)_{d-1}]) = \mathbb{R}^d$ and it must thus hold that $x = 0$, which would result in $Sx = Tx = 0\neq e_d$.
%
Therefore, for any $S, T \in \mathcal{H}$ with $S\neq T$, $S^{-1} e_d \neq T^{-1} e_d$.


\section{Proof of Lemma~\ref{lem:regression}}\label{sec:regression_proof}

The proof of the lemma closely follows that in~\cite{CLS22}. The proof is a bootstrapping argument based on the following two lemmas. For simplicity of notation, we define $R(A,b) = \min_x \norm{Ax-b}_p$.

\begin{lemma}[{\cite[Theorem 3.18]{MMW+22}}] \label{lem:gamma-correctness}
There exists an absolute constant $c\in (0,1]$ such that the following holds for all $A \in \R^{n\times d}$, $b \in \R^n$ and $\gamma \in (0,1)$. Let $x^* = \arg\min_{x\in \R^d} \norm{Ax - b}_p$. Whenever $x\in\R^d$ satisfies $\norm{Ax-b}_p \leq (1+c\gamma)R(A,b)$, we have that $\norm{Ax^*-Ax}_p \leq \sqrt{\gamma} R(A,b)$.
\end{lemma}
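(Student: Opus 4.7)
I would prove the lemma for $p \in (1,2]$ by combining first-order optimality of $x^*$ with a pointwise Bregman-type lower bound for $|\cdot|^p$. Write $r^* = Ax^* - b$ (so $\|r^*\|_p = R$), let $d = A(x-x^*)$, and set $r = Ax - b = r^* + d$. Since $p > 1$ the map $x \mapsto \|Ax - b\|_p^p$ is differentiable, and $x^*$ satisfies the stationarity condition $A^\top\bigl(|r^*|^{p-1}\operatorname{sgn}(r^*)\bigr) = 0$. Because $d$ lies in the column span of $A$, the inner product $\langle |r^*|^{p-1}\operatorname{sgn}(r^*), d\rangle$ vanishes, killing the linear term in the coordinate-wise expansion of $\|r\|_p^p$ about $r^*$.

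The key analytic input is the scalar inequality, valid for $p \in (1,2]$ and all $a,b\in\R$,
\[
|a+b|^p - |a|^p - p|a|^{p-1}\operatorname{sgn}(a)\,b \;\ge\; c_p\,\frac{b^2}{(|a|+|b|)^{2-p}},
\]
obtained by writing the left-hand side as $\int_0^1 (1-t)\,p(p-1)\,|a+tb|^{p-2}\,b^2\,dt$ and using $|a+tb|^{p-2} \ge (|a|+|b|)^{p-2}$ (since $p-2 \le 0$), with $c_p \asymp p-1$. Summing coordinate-wise, and using that the linear term vanished, gives $\|r\|_p^p - R^p \ge c_p \sum_i d_i^2/(|r^*_i|+|d_i|)^{2-p}$. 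A Hölder inequality with conjugate exponents $2/p$ and $2/(2-p)$, applied to the factorisation $|d_i|^p = \bigl(d_i^2/(|r^*_i|+|d_i|)^{2-p}\bigr)^{p/2}\cdot(|r^*_i|+|d_i|)^{p(2-p)/2}$, then yields
\[
\sum_i \frac{d_i^2}{(|r^*_i|+|d_i|)^{2-p}} \;\ge\; \frac{\|d\|_p^2}{\bigl\||r^*|+|d|\bigr\|_p^{2-p}} \;\ge\; \frac{\|d\|_p^2}{(R+\|d\|_p)^{2-p}},
\]
where the last inequality uses the triangle inequality for $\|\cdot\|_p$ together with $2-p \ge 0$.

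To close, the hypothesis $\|r\|_p \le (1+c\gamma)R$ gives $\|r\|_p^p - R^p \le ((1+c\gamma)^p - 1)R^p \le 2pc\gamma R^p$ when $c\gamma \le 1$, so $\|d\|_p^2 \le \tfrac{2pc}{c_p}\,\gamma R^p(R+\|d\|_p)^{2-p}$. A short case analysis finishes the argument. If $\|d\|_p \le R$, then $(R+\|d\|_p)^{2-p} \le (2R)^{2-p}$ and the bound becomes $\|d\|_p^2 \le O_p(c\gamma)\,R^2$, which is at most $\gamma R^2$ by choosing $c$ small enough (depending on $p$), giving $\|d\|_p \le \sqrt{\gamma}\,R$. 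If instead $\|d\|_p > R$, the same estimate yields $\|d\|_p^p \le O_p(c\gamma)\,R^p$, which is strictly less than $R^p$ for small $c$, contradicting $\|d\|_p > R$; hence this case cannot occur.

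The main delicate step is verifying the scalar Bregman inequality uniformly across sign patterns, especially when $a$ and $a+b$ have opposite signs (where $|a+tb|^{p-2}$ becomes singular at $t=-a/b$); this is handled by a direct calculation that splits the remainder integral at $t=-a/b$, noting that the singularity is integrable for $p > 1$ and that the resulting constant still scales as $\Theta(p-1)$. This $(p-1)$ dependence is intrinsic: as $p\to 1$, strict convexity of $|\cdot|^p$ degenerates and the lemma genuinely fails (any two minimizers of an $\ell_1$ regression can sit far apart in $\ell_1$), which is why the absolute constant $c$ of the statement must be read as an absolute constant for each fixed $p \in (1,2]$.
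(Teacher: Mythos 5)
Your argument is correct for $p\in(1,2]$ and is a complete, self-contained proof of a statement that the paper itself does not prove: Lemma~\ref{lem:gamma-correctness} is imported verbatim as \cite[Theorem 3.18]{MMW+22}, so there is no in-paper proof to compare against. Your route --- kill the linear term via stationarity of $x^*$, lower-bound the Bregman remainder of $|\cdot|^p$ coordinatewise by $c_p\,d_i^2/(|r^*_i|+|d_i|)^{2-p}$ with $c_p=\Theta(p-1)$, recombine with H\"older at exponents $2/p$ and $2/(2-p)$, and finish with the two-case analysis on $\|d\|_p$ versus $R$ --- is the standard local-strong-convexity argument for $\ell_p$ losses and is in the same spirit as the cited source. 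All the steps check out: the Taylor remainder formula is valid despite the integrable singularity of $|a+tb|^{p-2}$ at $p<2$ (you only need the pointwise lower bound $|a+tb|^{p-2}\ge(|a|+|b|)^{p-2}$ under the integral), the H\"older exponents are conjugate, and the contradiction in the case $\|d\|_p>R$ is sound.

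Two caveats worth making explicit. First, your constant $c$ scales as $\Theta(p-1)$ and so is ``absolute'' only for each fixed $p$ bounded away from $1$; the lemma as stated in the paper does not restrict $p$, but it is only invoked for fixed $p\in(1,2)$ (Lemma~\ref{lem:regression}), so this matches the intended reading, and your observation that the statement genuinely fails at $p=1$ (e.g.\ $\min_x |x-1|+|x+1|$ has a segment of minimizers) is a useful sanity check. Second, you should note the degenerate-coordinate convention: when $r^*_i=d_i=0$ the quotient $d_i^2/(|r^*_i|+|d_i|)^{2-p}$ is to be read as $0$, which is consistent with both sides of the scalar inequality vanishing. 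Neither point affects correctness.
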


\begin{lemma}\label{lem:gamma-assumption-eps-approx}
Let $A \in \R^{n \times d}$, $b\in \R^n$ and $0< \gamma <1$. Let $S$ be the rescaled sampling matrix with respect to $\{ p_{i} \}_{(i)}$ such that $p_{i}=\min \{\beta w_{i}([A\ b]), 1 \}$ and $\beta = \Theta(\frac{\gamma}{\eps^2} \log^2 d \log n \log \frac{1}{\delta})$. Suppose that $\tilde{x} = \arg\min_{x\in \R^d} \norm{SA x - Sb}_p$ and $\norm{A \tilde{x} - Ax^*}_p \leq \sqrt{\gamma} R(A,b)$. It holds that
\[
    \norm{A\tilde{x} - b}_p \leq (1 + C\eps) R(A,b)
\]
with probability at least $0.99 - \delta$, where $C$ is an absolute constant.
\end{lemma}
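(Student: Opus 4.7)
Let $y^\ast := Ax^\ast - b$, $y := A\tilde x - b$, and $v := y - y^\ast = A(\tilde x - x^\ast)$, so that by hypothesis $\|v\|_p \le \sqrt{\gamma}\,\|y^\ast\|_p$; by the optimality of $\tilde x$ in the sampled problem, $\|Sy\|_p^p \le \|Sy^\ast\|_p^p$. The plan is to reduce the lemma to a single uniform additive concentration inequality over a shell of candidate $v$'s, and then prove that inequality by expanding both $p$-th powers coordinate-wise and combining Bernstein's inequality with a multi-scale chaining argument.

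It suffices to show that, with probability at least $0.99-\delta$, for every $v \in A\R^d$ with $\|v\|_p \le \sqrt{\gamma}\,\|y^\ast\|_p$,
\[
\bigl(\|y^\ast + v\|_p^p - \|y^\ast\|_p^p\bigr) - \bigl(\|S(y^\ast + v)\|_p^p - \|Sy^\ast\|_p^p\bigr) \;\le\; C_1\,\eps\,\|y^\ast\|_p^p.
\]
Once this is in hand, the sketched optimality $\|S(y^\ast+v)\|_p^p \le \|Sy^\ast\|_p^p$ makes the second parenthesis non-positive, yielding $\|y^\ast + v\|_p^p \le (1+C_1\eps)\,\|y^\ast\|_p^p$; taking $p$-th roots gives $\|A\tilde x - b\|_p \le (1+C\eps)\,R(A,b)$, which is the conclusion.

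To establish the uniform concentration inequality, I would use the coordinate-wise Taylor-type expansion $|r+t|^p - |r|^p = p\,|r|^{p-2}r\,t + \Phi_p(r,t)$ with remainder bound $|\Phi_p(r,t)| \le C_p\bigl(|r|^{p-2}\,t^2 + |t|^p\bigr)$, valid for $p \in (1,2)$. Writing $S$ as a random diagonal rescaling restricted to the sampled rows, the left-hand side of the displayed inequality becomes a sum of independent mean-zero random variables. The variance of the linear part at a fixed $v$ is controlled by the Lewis weights of $[A\ b]$ and works out to $O(\gamma\,\|y^\ast\|_p^{2p}/\beta)$ after the standard Lewis-weight bookkeeping; the quadratic remainder is controlled similarly using the constant-distortion $\ell_p$-subspace embedding guaranteed by Lemma~\ref{lem:lewis_weights_sampling}. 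A single Bernstein bound then gives the target deviation $\eps\,\|y^\ast\|_p^p$ at any fixed $v$ as soon as $\beta$ exceeds $\gamma/\eps^2$ times logarithmic factors. Uniformity over the $d$-dimensional shell $\{v \in A\R^d : \|v\|_p \le \sqrt{\gamma}\,\|y^\ast\|_p\}$ is then obtained via the standard multi-scale chaining over nested $\ell_p$-nets, following~\cite{CLS22}: the logarithms of the net cardinalities and the number of chaining scales account precisely for the $\log^2 d\,\log n$ factors in $\beta$.

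The main obstacle is the chaining step, since the random process is not sub-Gaussian and the shell is measured in $\ell_p$ rather than $\ell_2$; the factor of $\gamma$ in $\beta$ is exactly the slack afforded by the hypothesis $\|v\|_p \le \sqrt{\gamma}\,\|y^\ast\|_p$, and it saves a factor of $1/\gamma$ over what a blind $(1+\eps)$-subspace embedding would require. Once the chaining union bound succeeds with probability at least $0.99-\delta$, combining with sketched optimality as above completes the proof.
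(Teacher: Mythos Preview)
Your reduction to a uniform one-sided deviation bound over the shell $\{v \in A\R^d: \|v\|_p \le \sqrt{\gamma}R\}$, followed by Bernstein plus chaining, is exactly the route the paper (following \cite{CLS22}) takes, and deferring the chaining to \cite{CLS22} is legitimate here. Where your execution diverges is the intermediate decomposition: the paper does not Taylor-expand around $y^\ast$ but instead \emph{truncates} $b' = b - Ax^\ast$ at a level proportional to the Lewis weight $w_i$, splitting the deviation into three pieces $E_1,E_2,E_3$; the heavy-coordinate pieces $E_1,E_2$ are handled by the deterministic Lemmas~3.5 and~3.6 of \cite{MMW+22}, and only the flat piece $E_3$ receives the chaining treatment of \cite[Lemma~5.3]{CLS22}. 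In the present lemma the truncation is arguably dispensable---the sampling is by Lewis weights of $[A\ b]$, so $|y^\ast_i|^p \le w_i([A\ b])\,\|y^\ast\|_p^p$ already holds coordinatewise---which is why your direct route is plausible; but you should say this explicitly, since in the original \cite{CLS22} setting the weights are those of $A$ alone and the truncation is essential there.

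One genuine slip: your remainder bound $|\Phi_p(r,t)| \le C_p\bigl(|r|^{p-2}t^2 + |t|^p\bigr)$ is not valid for $p<2$, because $|r|^{p-2}$ diverges as $r\to 0$. The correct inequality is $|\Phi_p(r,t)| \le C_p\min\bigl(|r|^{p-2}t^2,\, |t|^p\bigr)$ (equivalently, distinguish the cases $|t|\lessgtr|r|$); this matters downstream when you need almost-sure boundedness of the summands for Bernstein and when you bound the quadratic remainder uniformly over the shell.
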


Assuming these two lemmas, the proof of Lemma~\ref{lem:regression} is nearly identical to that in the proof of~\cite[Theorem 5.6]{CLS22} and is thus omitted. The proof is simpler because we do not need an argument to first show that sketching by $S$ gives a $(1+O(\sqrt{\eps}))$-approximate solution, which follows immediately from the fact that $S$ is a $(1+\sqrt{\eps})$-subspace-embedding with large constant probability.

In the remainder of this section, we discuss the proof of Lemma~\ref{lem:gamma-assumption-eps-approx}. The proof is similar to that of~\cite[Lemma 5.4]{CLS22}, which converts the bound on the target dimension obtained from an iterative argument in~\cite{MMW+22} to a moment bound using the framework in~\cite{CP15}.

The difference is that here we can choose the weights to be the Lewis weights of $[A b]$, while in \cite[Lemma 5.4]{CLS22}, it considers $\min_x\norm{Ax-z}_p$ with $\norm{z}_p \leq R(A,b)$ and it samples the rows according to the Lewis weights of $A$. Specifically, let $R = R(A,b)$, $x' = x -x^*$ and $b' = b - Ax^*$. We have, as in the proof of \cite[Lemma 5.4]{CLS22}, that
\begin{align*}
    \norm{A\tilde{x}-b}_p^p - \norm{Ax^*-b}_p^p &= \norm{A\tilde{x}-b}_p^p - \norm{SA\tilde{x} - Sb}_p^p + \norm{SA\tilde{x} - Sb}_p^p - \norm{SAx^* - Sz}_p^p\\
    &\qquad \qquad + \norm{SAx^* - Sb}_p^p - \norm{Ax^* - b}_p^p\\
    &\leq \norm{A\tilde{x}-b}_p^p - \norm{SA\tilde{x} - Sb}_p^p + \norm{SAx^* - Sb}_p^p - \norm{Ax^* - b}_p^p\\
    &\leq \Norm{Ax' - b'}_p^p - \Norm{SAx' - Sb'}_p^p + \Norm{Sb'}_p^p - \Norm{b'}_p^p \\
    &= \Norm{Ax' - b'}_p^p - \Norm{Ax' - \bar{b}'}_p^p - \Norm{b' - \bar{b}'}_p^p \\
    &\qquad - \left( \Norm{SAx' - Sb'}_p^p - \Norm{SAx' - S\bar{b}'}_p^p - \Norm{Sz' - S\bar{b}'}_p^p \right) \\
    &\qquad - \left( \Norm{SA x'-S \bar{b}'}_p^p - \Norm{Ax' - \bar{b}'}_p^p + \Norm{\bar{b}'}_p^p - \Norm{S\bar{b}'}_p^p \right) \\
    &=: E_1 - E_2 - E_3,
\end{align*}
where $\bar{b}$ is the vector obtained from $b$ by removing all coordinates $b_i$ such that $|b_i|\geq \frac{w_i}{\eps}R$. Note that $\norm{\bar{b}'}_p\leq \norm{b'}_p = R$ and $\norm{Ax'}_p\leq\sqrt{\gamma}R$. The first term can be controlled using \cite[Lemma 3.5]{MMW+22}, except that the sampling probabilities are Lewis weights of $[A\ b]$ instead of $A$, but the proof still goes through because it also holds that $|(Ax)_i|^p\leq \norm{Ax}_p s_i^p([A\ b])$, where $s_i([A\ b])$ is the $\ell_p$-sensitivity of $[A\ b]$. The second term can be controlled by \cite[Lemma 3.6]{MMW+22}, yielding that $|E_2|\leq \eps R^p$ with probability at least $0.99$. The last term can be controlled as in~\cite[Lemma 5.3]{CLS22}, where the Lewis weights of $[A\ b]$ do not affect the proof.

\section{Faster Runtime for Distributed $\ell_p$-Regression} \label{sec:runtime_appendix}

We need the following auxiliary results.

\begin{proposition}\label{prop:sensitivity_bound}
Suppose that $1\leq p < 2$ and $A\in \R^{n\times d}$. The $\ell_p$-sensitivity scores of $A$ are defined as
    \[
    \ell_i^{(p)}(A) = \sup_{x: Ax\neq 0} \frac{\abs{\langle a_i, x\rangle}^p}{\norm{Ax}_p^p},
    \]
    where $a_i$ is the $i$-th row of $A$. It holds that $\ell_i^{(p)}(A)\leq (\tau_i(A))^{p/2}$ for all $i$.
\end{proposition}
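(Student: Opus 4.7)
The plan is to reduce the bound on $\ell_p$-sensitivities to the known identification of $\ell_2$-sensitivities with leverage scores, exploiting the monotonicity of $\ell_p$-norms in $p$ for a fixed vector.

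First I would recall that for any vector $y \in \R^n$ and $1 \leq p \leq q$, the map $p \mapsto \norm{y}_p$ is nonincreasing, so $\norm{y}_p \geq \norm{y}_2$ whenever $p \leq 2$. Applying this with $y = Ax$, we get $\norm{Ax}_p \geq \norm{Ax}_2$ for every $x$, and therefore
\[
\frac{\abs{\langle a_i,x\rangle}^p}{\norm{Ax}_p^p} \;\leq\; \frac{\abs{\langle a_i,x\rangle}^p}{\norm{Ax}_2^p} \;=\; \left(\frac{\abs{\langle a_i,x\rangle}^2}{\norm{Ax}_2^2}\right)^{p/2}
\]
for every $x$ with $Ax \neq 0$.

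Next I would take the supremum over such $x$ on both sides. Since $t \mapsto t^{p/2}$ is monotone on $[0,\infty)$, the supremum commutes with the $p/2$-th power, giving
\[
\ell_i^{(p)}(A) \;\leq\; \left(\sup_{x: Ax\neq 0}\frac{\abs{\langle a_i,x\rangle}^2}{\norm{Ax}_2^2}\right)^{p/2}.
\]
Finally, the inner supremum is exactly the definition of the $\ell_2$-sensitivity, and it is a standard fact (immediate from the Cauchy--Schwarz inequality together with $a_i \in \mathrm{rowspace}(A)$, or from diagonalizing $A^T A$) that this quantity equals the leverage score $\tau_i(A) = a_i^T (A^T A)^{\dagger} a_i$. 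Substituting yields $\ell_i^{(p)}(A) \leq \tau_i(A)^{p/2}$, as required.

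There is essentially no obstacle: the whole argument is a one-line norm comparison followed by the $\ell_2$ case. The only small point to note is the validity of writing the $\ell_2$-sensitivity as the leverage score, which requires restricting the supremum to $x$ such that $Ax \neq 0$ (handled by the hypothesis in the definition) and using the pseudoinverse when $A$ is rank-deficient; both are standard and the identity $\sup_x \abs{\langle a_i,x\rangle}^2/\norm{Ax}_2^2 = a_i^T(A^T A)^{\dagger}a_i$ is routine.
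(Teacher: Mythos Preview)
Your proof is correct and follows essentially the same approach as the paper: both arguments rest on the norm inequality $\norm{Ax}_p \geq \norm{Ax}_2$ for $p\leq 2$ together with the identification of the $\ell_2$-sensitivity with the leverage score. The only cosmetic difference is that the paper passes to an orthonormal basis $A=UR$ explicitly (so that $\norm{Uy}_2=\norm{y}_2$ and Cauchy--Schwarz gives $\norm{U_i}_2^p$ directly), whereas you invoke the identity $\sup_x |\langle a_i,x\rangle|^2/\norm{Ax}_2^2=\tau_i(A)$ as a black box; these are the same computation.
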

\begin{proof}
    Suppose that $A$ has full column rank, otherwise we can find an invertible matrix $T$ such that $AT = [A'\ 0]$, where $A'$ has full column rank, and consider $\ell_i^{(p)}(A')$ and $\tau_i(A')$ instead. It is not difficult to verify that $\ell_i^{(p)}(A') = \ell_i^{(p)}(A)$ and $\tau_i(A') = \tau_i(A)$.

    Write $A = UR$, where $U\in \R^{n\times d}$ has orthonormal columns and $R\in \R^{d\times d}$ is invertible. Then 
    \[
        \ell_i^{(p)}(A) = \sup_{y\neq 0} \frac{\abs{\inner{U_i, y}}^p}{\norm{Uy}_p^p} 
        \leq \sup_{y\neq 0} \frac{\norm{U_i}_2^p \norm{y}_2^p}{\norm{Uy}_2^p} 
        = \norm{U_i}_2^p 
        = (\tau_i(A))^{p/2},
    \]
    as advertised.
\end{proof}

\begin{lemma}[{\cite[Lemma 5.5]{LLW23}}]
\label{lem:sensitivity}
Let $A \in \R^{n \times d}$ and $1 \le p < \infty$. The matrix $A'$ is a submatrix of $A$ such that the rescaled $i$-th row $p_i^{-1/p}a_i$ is included in $A'$ with probability $p_i \ge \min(\beta s_i(A), 1)$. Then, there is a constant $c$ such that when $\beta \ge c \eps^{-2} d \log(1/\eps)$,  the matrix $A'$ is a $(1 \pm \eps)$-subspace embedding of $A$ with probability at least $9/10$.
\end{lemma}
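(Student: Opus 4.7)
}

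The plan is the standard sensitivity-sampling recipe: a pointwise Bernstein concentration for each fixed direction $x$, followed by a net argument on the unit $\ell_p$-ball of the column span of $A$.

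First I would fix $x \in \R^d$ and write
\[
\norm{A'x}_p^p = \sum_{i=1}^{n} \xi_i \frac{|\inner{a_i, x}|^p}{p_i},
\]
where $\xi_i \in \{0,1\}$ indicates whether row $i$ was sampled, so $\E[\xi_i]=p_i$. Then the summands $Y_i := \xi_i |\inner{a_i,x}|^p / p_i$ are independent with $\E[\sum_i Y_i]=\norm{Ax}_p^p$. By definition of the sensitivity, $|\inner{a_i,x}|^p \le s_i(A)\, \norm{Ax}_p^p$, and by the hypothesis $p_i \ge \beta s_i(A)$ (for those indices not already set to $1$, where $Y_i$ is deterministic and contributes nothing to the variance), we obtain the two key bounds
\[
|Y_i| \le \frac{\norm{Ax}_p^p}{\beta}, \qquad \sum_i \Var(Y_i) \le \sum_i \frac{|\inner{a_i,x}|^{2p}}{p_i} \le \frac{\norm{Ax}_p^p}{\beta}\sum_i |\inner{a_i,x}|^p = \frac{\norm{Ax}_p^{2p}}{\beta}.
\]
A standard Bernstein inequality then gives, for each fixed $x$,
\[
\Pr\!\left[\,\left|\norm{A'x}_p^p - \norm{Ax}_p^p\right| > \eps \norm{Ax}_p^p\,\right] \le 2\exp(-c_1\, \eps^2\, \beta),
\]
for an absolute constant $c_1$.

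Next I would upgrade this pointwise statement to a uniform one via a net argument on the set $\gS = \{Ax : \norm{Ax}_p = 1\}$. There exists a $\gamma$-net $\gN \subseteq \gS$ (in the $\ell_p$-metric) of size $|\gN| \le (3/\gamma)^d$ for some $\gamma = \poly(\eps)$ to be chosen. Taking a union bound over $\gN$, the concentration above holds simultaneously on $\gN$ provided $c_1 \eps^2 \beta \ge C d \log(1/\eps)$, which is the reason for the stated threshold $\beta \ge c\, \eps^{-2} d \log(1/\eps)$. To pass from $\gN$ to all of $\gS$, I would use the same iterative telescoping that appears in the proof of Lemma~\ref{lem:p_stable_embedding}(2): any $y = Ax \in \gS$ decomposes as $y = \sum_{k\ge 0} \alpha_k y_k$ with $y_k \in \gN$ and $|\alpha_k|\le \gamma^k$; combined with a crude poly$(d)$ upper bound on $\norm{A'y}_p$ for all $y\in\gS$ (obtained by the same Bernstein+net argument applied once at a coarser scale, or directly from the sensitivity-sum bound $\sum_i s_i(A)\le d$), this shows $\norm{A'y}_p^p = (1\pm O(\eps))\norm{Ay}_p^p$ for every $y\in\gS$ simultaneously, i.e., $A'$ is a $(1\pm\eps)$-subspace embedding of $A$.

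The main obstacle I anticipate is the net-to-subspace extension for $\ell_p$-norms, which is more delicate than in $\ell_2$ because $\ell_p$ lacks an inner-product structure. Specifically, the iterative decomposition works only if we can first establish a uniform polynomial-in-$d$ upper bound $\norm{A'y}_p \le \poly(d) \norm{Ay}_p$ on the entire subspace, and then use this crude bound to absorb the tail $\sum_{k\ge 1}\gamma^k \norm{A'y_k}_p$ after rescaling $\eps$. The bookkeeping is standard but is where the $\log(1/\eps)$ factor in $\beta$ enters (through the size of the net); everything else is routine Bernstein calculation and a union bound.
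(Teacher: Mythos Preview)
The paper does not supply its own proof of this lemma; it is simply quoted from an external reference and used as a black box. There is therefore nothing in the paper to compare against.

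Your sketch is the standard sensitivity-sampling argument and is essentially correct: the Bernstein bound for a fixed $x$ is right (using $|Y_i|\le \norm{Ax}_p^p/\beta$ and $\sum_i\Var(Y_i)\le\norm{Ax}_p^{2p}/\beta$), and the union bound over a $\gamma$-net of size $(C/\gamma)^d$ with $\gamma=\Theta(\eps)$ is exactly what forces $\beta\gtrsim \eps^{-2}d\log(1/\eps)$. One minor simplification: you do not actually need a separate ``crude $\poly(d)$ upper bound'' step before the telescoping. Once you have $(1\pm\eps)$-concentration on all net points simultaneously, the iterative decomposition $y=\sum_{k\ge 0}\alpha_k y_k$ with $y_k\in\gN$ and $|\alpha_k|\le\gamma^k$ already gives, by the triangle inequality,
\[
\norm{A'x}_p \;\le\; \sum_{k\ge 0}\gamma^k\norm{A'x_k}_p \;\le\; \frac{1+\eps}{1-\gamma}
\quad\text{and}\quad
\norm{A'x}_p \;\ge\; (1-\eps) - \frac{(1+\eps)\gamma}{1-\gamma},
\]
so taking $\gamma=c\eps$ and rescaling $\eps$ closes the argument directly. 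The ``main obstacle'' you flag is thus less of an obstacle than it appears.
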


As an immediate corollary of the auxiliary results above, we have that when $A\in\R^{n\times d}$ has uniformly small leverage scores, uniformly sampling its rows can give an $\ell_p$-subspace-embedding (after rescaling).
\begin{corollary}
Suppose that $1\leq p<2$ and the matrix $A\in\R^{n\times d}$ satisfies that $\tau_i(A) \leq (c\eps^2\gamma/(d\log(1/\eps)))^{2/p}$ for all $i$, where $\gamma \leq \eps^2/(Cd\log(1/\eps))$. Let $A'$ be a matrix formed from $A$ by retaining each row with probability $\gamma$ independently and then rescaling by $1/\gamma^{1/p}$. It holds with large constant probability that
\[
(1-\eps)\norm{Ax}_p^p \leq \norm{A'x}_p^p \leq (1+\eps)\norm{Ax}_p^p
\]
for all $x\in \R^d$ simultaneously, and that $A'$ has $O(\gamma n)$ rows.
\end{corollary}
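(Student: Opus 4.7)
The plan is to combine Proposition~\ref{prop:sensitivity_bound} with Lemma~\ref{lem:sensitivity}: the uniform leverage-score bound converts into a uniform bound on the $\ell_p$-sensitivities, which then makes uniform sampling at rate $\gamma$ a valid sensitivity-sampling scheme, so the subspace-embedding guarantee follows directly from Lemma~\ref{lem:sensitivity}.

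More concretely, I first invoke Proposition~\ref{prop:sensitivity_bound} to obtain
\[
    \ell_i^{(p)}(A) \leq (\tau_i(A))^{p/2} \leq \frac{c\eps^2 \gamma}{d \log(1/\eps)},
\]
using the hypothesis on $\tau_i(A)$. Setting $\beta = c' \eps^{-2} d \log(1/\eps)$ for the constant $c'$ from Lemma~\ref{lem:sensitivity} and choosing the constant $c$ from the hypothesis small enough relative to $c'$, one gets $\beta \cdot \ell_i^{(p)}(A) \leq \gamma$ for every $i$. Hence the uniform sampling probability $p_i = \gamma$ satisfies $p_i \geq \min(\beta \ell_i^{(p)}(A), 1)$, which is exactly the condition needed by Lemma~\ref{lem:sensitivity}. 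Applying that lemma, with rescaling by $1/\gamma^{1/p}$, yields that $A'$ is a $(1\pm\eps)$-$\ell_p$ subspace embedding for $A$ with probability at least $9/10$.

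The bound on the number of sampled rows follows from a routine Chernoff bound: the expected number of retained rows is $\gamma n$, and since the hypothesis $\gamma \leq \eps^2/(Cd\log(1/\eps))$ ensures $\gamma < 1$, a standard tail inequality gives that $A'$ has $O(\gamma n)$ rows with large constant probability; a union bound with the subspace-embedding event preserves large constant probability overall.

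There is no real obstacle here; the only point to check carefully is that the constants line up so that the uniform probability $\gamma$ actually dominates $\beta \ell_i^{(p)}(A)$ for all $i$, which is why the hypothesis is phrased in terms of $(c\eps^2\gamma/(d\log(1/\eps)))^{2/p}$ rather than a simpler expression—raising to the $p/2$ power in Proposition~\ref{prop:sensitivity_bound} cancels the $2/p$ exponent and produces exactly the linear-in-$\gamma$ bound needed by Lemma~\ref{lem:sensitivity}.
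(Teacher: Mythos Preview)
Your proposal is correct and matches the paper's own argument essentially line for line: bound the $\ell_p$-sensitivities via Proposition~\ref{prop:sensitivity_bound}, observe that the uniform probability $\gamma$ then dominates $\beta\,\ell_i^{(p)}(A)$, and invoke Lemma~\ref{lem:sensitivity}. The paper is slightly terser (it omits the explicit Chernoff-bound remark for the row count), but the approach is the same.
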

\begin{proof}

    By Proposition~\ref{prop:sensitivity_bound}, $\ell_i^{(p)}(A)\leq (\tau_i(A))^{p/2} = c\eps^2\gamma/(d\log(1/\eps))$, so the sampling probability 
    \[
        \gamma\geq \frac{Cd\log(1/\eps)}{\eps^2}\cdot \ell_i^{(p)}(A)
    \]
    satisfies the condition in Lemma~\ref{lem:sensitivity}. The conclusion follows immediately.
\end{proof}

Hence, if $A$ has uniformly small leverage scores, all sites can agree on the $O(\gamma n)$ uniformly sampled rows using the public random bits and run the protocol in Algorithm~\ref{alg:ell_p} on the induced $A'$. By Markov's inequality, $\nnz(A') = O(\gamma\nnz(A))$ with large constant probability and we finally conclude with the following theorem.

\begin{theorem}
    Suppose that $A\in\R^{n\times d}$ and $b\in \R^d$ satisfies that the leverage scores of $[A\ b]$ are all bounded by $\poly(\eps)/d^{4/p}$.
    There is a protocol which outputs a $(1 \pm \eps)$-approximate solution to the $\ell_p$-regression problem with large constant probability, using $\tilde{O}(sd^2/\eps + sd/\eps^{O(1)})$ bits of communication and running in total time (over all servers) $O(\sum_i \mathrm{nnz}(A^i) + s \cdot \poly(d/\eps))$.
\end{theorem}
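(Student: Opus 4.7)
The plan is to combine the preceding corollary with the protocol of Algorithm~\ref{alg:ell_p}. The point is that uniform subsampling shrinks the effective row count from $n$ to $O(\gamma n)$ with $\gamma = \Theta(\eps^2/(d\log(1/\eps)))$, and because uniform sampling can be performed using shared public randomness, it costs zero communication. The leverage-score hypothesis is designed so that the corollary applies at exactly this sampling rate: we need $\tau_i([A\ b]) \le (c\eps^2\gamma/(d\log(1/\eps)))^{2/p}$, and plugging in $\gamma$ this becomes $\tau_i([A\ b]) \le \poly(\eps)/d^{4/p}$, which is exactly the assumption.

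The protocol is then straightforward. Each server uses public random bits to agree on the same set of $\Theta(\gamma n)$ sampled row indices; each server $i$ locally forms its rescaled submatrix $(A')^i$ and vector $(b')^i$ restricted to these indices, and the servers then run Algorithm~\ref{alg:ell_p} on the subsampled instance $[A'\ b']$. Correctness follows because the corollary guarantees, with large constant probability, that $[A'\ b']$ is a $(1\pm\eps)$ $\ell_p$-subspace embedding of $[A\ b]$, and hence a $(1+\eps)$-approximate solution to $\min_x\norm{A'x - b'}_p$ translates into a $(1+O(\eps))$-approximate solution to the original regression problem, after rescaling $\eps$.

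For the bounds, I would note that the communication cost in Theorem~\ref{thm:ell_p_regression} depends on the number of rows only through logarithmic factors; replacing $n$ with $O(\gamma n)$ therefore leaves the $\tilde{O}(sd^2/\eps + sd/\eps^{O(1)})$ bound intact. For the runtime, the only term in Theorem~\ref{thm:ell_p_regression} that scales with the input sparsity is the sketching step, which for input $(A')^i$ takes $O(\nnz((A')^i)\cdot d/\eps^{O(1)})$ time. By Markov's inequality on the Bernoulli sampling,
\[
\sum_i \nnz((A')^i) = O\!\left(\gamma \sum_i \nnz(A^i)\right)
\]
with large constant probability, and substituting $\gamma = \Theta(\eps^2/(d\log(1/\eps)))$ cancels the $d/\eps^{O(1)}$ blow-up and yields the desired $O(\sum_i \nnz(A^i))$ bound.

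The main obstacle, and the only place requiring careful verification, is matching the precise exponents between the leverage-score hypothesis and the corollary's requirement: the corollary demands the bound on $\tau_i$ in the form $(\eps^2\gamma/(d\log(1/\eps)))^{2/p}$, which at our chosen $\gamma$ gives a $\poly(\eps)/d^{4/p}$ threshold; the stated hypothesis is $\poly(\eps)/d^{4/p}$, so one needs to set the $\poly(\eps)$ factor in the hypothesis appropriately so that the corollary applies. Beyond this bookkeeping, the argument is routine: the interaction between the uniform subsampling and the Lewis-weight-based sampling performed inside Algorithm~\ref{alg:ell_p} compounds into only a single $(1+O(\eps))$ factor, because each step is a one-sided subspace-embedding-type guarantee, and rescaling $\eps$ by a constant absorbs the composition.
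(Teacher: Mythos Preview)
Your approach is exactly the paper's: uniformly subsample using public randomness at rate $\gamma=\poly(\eps)/d$, invoke the corollary to get an $\ell_p$-subspace embedding, run Algorithm~\ref{alg:ell_p} on the subsampled instance, and use Markov's inequality to control $\nnz(A')$. The correctness and communication arguments are the same as the paper's.

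One small point to fix in your write-up: your specific choice $\gamma=\Theta(\eps^2/(d\log(1/\eps)))$ is the \emph{largest} value the corollary allows, but it does not in general cancel the $d/\eps^{O(1)}$ factor coming from the number of rows of $T$, since the exponent there is $C(p,r)$ from Lemma~\ref{lem:p_stable_embedding} and may well exceed $2$. You need to take $\gamma$ smaller, namely $\gamma=\Theta(\eps^{C(p,r)}/d)$, so that $\gamma\cdot d/\eps^{C(p,r)}=O(1)$. This is permitted because the corollary only imposes an \emph{upper} bound on $\gamma$; taking $\gamma$ smaller simply tightens the leverage-score requirement to $(c\eps^2\gamma/(d\log(1/\eps)))^{2/p}$, which is still of the form $\poly(\eps)/d^{4/p}$ and is precisely why the theorem statement leaves the $\poly(\eps)$ unspecified. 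You already flag this bookkeeping as the main obstacle, so just make sure your chosen $\gamma$ reflects it.
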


\end{document}